\algrenewcommand\algorithmicrequire{\textbf{Input:}}
\algrenewcommand\algorithmicensure{\textbf{Output:}}
\algnewcommand{\Initialize}[1]{\State\textbf{Initialize:} #1}
\algnewcommand\True{\textbf{true}\xspace}
\algnewcommand\False{\textbf{false}\xspace}
\newtheorem{theorem}{Theorem}
\newtheorem{corollary}[theorem]{Corollary}%
\newtheorem{definition}[theorem]{Definition}%
\newtheorem{example}[theorem]{Example}%
\newtheorem{lemma}[theorem]{Lemma}%
\newtheorem{proposition}[theorem]{Proposition}%
\newenvironment{keywords}
{\par\noindent\textbf{Keywords:}}{\par}
\newcommand{\algand}{\textbf{and}\xspace}
\newcommand{\ab}{\ensuremath{A}\xspace}
\newcommand{\ag}{\text{\normalfont\textsf{Ag}}\xspace}
\newcommand{\CA}{\textnormal{\sf A}\xspace}
\newcommand{\CC}{\textnormal{\sf C}\xspace}
\newcommand{\CE}{\textnormal{\sf E}\xspace}
\newcommand{\CM}{\textnormal{\sf M}\xspace}
\newcommand{\CV}{\textnormal{\sf v}\xspace}
\newcommand{\CW}{\textnormal{\sf W}\xspace}
\newcommand{\lra}{\leftrightarrow}
\newcommand{\NEG}{\ensuremath{\mathord\sim}}
\renewcommand{\phi}{\varphi}
\newcommand{\prop}{\text{\normalfont\textsf{Prop}}\xspace}
\newcommand{\Ra}{\Rightarrow}
\newcommand{\ra}{\rightarrow}
\newcommand{\mbN}{\mathbb{N}}
\newcommand{\mcS}{\mathcal{S}}
\newcommand{\lang}{\ensuremath{\mathcal{EL}}\xspace}
\newcommand{\langc}{\ensuremath{\mathcal{ELC}}\xspace}
\newcommand{\langd}{\ensuremath{\mathcal{ELD}}\xspace}
\newcommand{\langm}{\ensuremath{\mathcal{ELM}}\xspace}
\newcommand{\langcd}{\ensuremath{\mathcal{ELCD}}\xspace}
\newcommand{\langcm}{\ensuremath{\mathcal{ELCM}}\xspace}
\newcommand{\langdm}{\ensuremath{\mathcal{ELDM}}\xspace}
\newcommand{\langcdm}{\ensuremath{\mathcal{ELCDM}}\xspace}
\newcommand{\langl}{\ensuremath{\mathcal{L}}\xspace}
\renewcommand{\l}{\text{\normalfont EL}\xspace}
\newcommand{\lc}{\text{\normalfont ELC}\xspace}
\newcommand{\ld}{\text{\normalfont ELD}\xspace}
\newcommand{\lm}{\text{\normalfont ELM}\xspace}
\newcommand{\lcd}{\text{\normalfont ELCD}\xspace}
\newcommand{\lcm}{\text{\normalfont ELCM}\xspace}
\newcommand{\ldm}{\text{\normalfont ELDM}\xspace}
\newcommand{\lcdm}{\text{\normalfont ELCDM}\xspace}
\newcommand{\ls}{\text{\normalfont EL$^s$}\xspace}
\newcommand{\lsc}{\text{\normalfont ELC$^s$}\xspace}
\newcommand{\lsd}{\text{\normalfont ELD$^s$}\xspace}
\newcommand{\lsm}{\text{\normalfont ELM$^s$}\xspace}
\newcommand{\lscd}{\text{\normalfont ELCD$^s$}\xspace}
\newcommand{\lscm}{\text{\normalfont ELCM$^s$}\xspace}
\newcommand{\lsdm}{\text{\normalfont ELDM$^s$}\xspace}
\newcommand{\lscdm}{\text{\normalfont ELCDM$^s$}\xspace}
\newcommand{\K}{\ensuremath{\mathbf{K}}\xspace}
\newcommand{\KC}{\ensuremath{\mathbf{K(C)}}\xspace}
\newcommand{\KD}{\ensuremath{\mathbf{K(D)}}\xspace}
\newcommand{\KCD}{\ensuremath{\mathbf{K(CD)}}\xspace}
\newcommand{\KCDM}{\ensuremath{\mathbf{K(CDM)}}\xspace}
\newcommand{\KCM}{\ensuremath{\mathbf{K(CM)}}\xspace}
\newcommand{\KDM}{\ensuremath{\mathbf{K(DM)}}\xspace}
\newcommand{\KM}{\ensuremath{\mathbf{K(M)}}\xspace}
\newcommand{\KB}{\ensuremath{\mathbf{KB}}\xspace}
\newcommand{\KBC}{\ensuremath{\mathbf{KB(C)}}\xspace}
\newcommand{\KBD}{\ensuremath{\mathbf{KB(D)}}\xspace}
\newcommand{\KBCD}{\ensuremath{\mathbf{KB(CD)}}\xspace}
\newcommand{\KBCDM}{\ensuremath{\mathbf{KB(CDM)}}\xspace}
\newcommand{\KBCM}{\ensuremath{\mathbf{KB(CM)}}\xspace}
\newcommand{\KBDM}{\ensuremath{\mathbf{KB(DM)}}\xspace}
\newcommand{\KBM}{\ensuremath{\mathbf{KB(M)}}\xspace}
\newcommand{\sysC}{\ensuremath{\mathbf{C}}\xspace}
\newcommand{\sysD}{\ensuremath{\mathbf{D}}\xspace}
\newcommand{\sysBD}{\ensuremath{\mathbf{B(D)}}\xspace}
\newcommand{\sysM}{\ensuremath{\mathbf{M}}\xspace}
\newcommand{\sysBM}{\ensuremath{\mathbf{B(M)}}\xspace}
\begin{document}
\title{Epistemic Logic over Similarity Graphs \\{\large Common, Distributed and Mutual Knowledge}}

\date{}

\author[1]{Xiaolong Liang}
\author[2]{Y\`{i} N. W\'{a}ng\footnote{Corresponding author, \href{mailto:ynw@xixilogic.org}{ynw@xixilogic.org}, \href{https://xixilogic.org/ynw}{https://xixilogic.org/ynw}}}
\affil[1]{School of Philosophy, Shanxi University, 92 Wucheng Road, Taiyuan, 030006, Shanxi, P.R. China}
\affil[2]{Department of Philosophy (Zhuhai), Sun Yat-sen University, 2 Daxue Road, Zhuhai, 519082, Guangdong, P.R. China}

\maketitle

\begin{abstract}
In this paper, we delve into the study of epistemic logics, interpreted through similarity models based on weighted graphs. We explore eight languages that extend the traditional epistemic language by incorporating modalities of common, distributed, and mutual knowledge.
The concept of individual knowledge is redefined under these similarity models. It is no longer just a matter of personal knowledge, but is now enriched and understood as knowledge under the individual's epistemic ability. Common knowledge is presented as higher-order knowledge that is universally known to any degree, a definition that aligns with existing literature. We reframe distributed knowledge as a form of knowledge acquired by collectively leveraging the abilities of a group of agents. In contrast, mutual knowledge is defined as the knowledge obtained through the shared abilities of a group.
We then focus on the resulting logics, examining their relative expressivity, semantic correspondence to the classical epistemic logic, proof systems and the computational complexity associated with the model checking problem and the satisfiability/validity problem. This paper offers significant insights into the logical analysis and understanding of these enriched forms of knowledge, contributing to the broader discourse on epistemic logic.\vspace{0.5em}
\begin{keywords}
epistemic logic, weighted graph, similarity model, group knowledge, completeness, computational complexity, model checking, satisfiability problem.
\end{keywords}
\end{abstract}

\section{Introduction}

Even though the concept of \emph{similarity} is intrinsically linked to \emph{knowledge}, it has not been traditionally emphasized or explicitly incorporated in the classical representation of knowledge within the field of epistemic logic \cite{Hintikka1962,FHMV1995,MvdH1995}. This could be due to a variety of reasons including the complexity of quantifying similarity or the traditional focus on other aspects of knowledge representation. However, over recent years, researchers have started to probe this relationship more deeply, marking a fresh direction in the field \cite{DLW2021,NT2015}. This is indicative of an evolving understanding and appreciation of the role that similarity can play in shaping and defining knowledge structures.

The technical framework for exploring this relationship has its roots in weighted modal logics \cite{LM2014,HLMP2018}. This approach offers a quantitative way of considering similarity, allowing for a more nuanced understanding of knowledge. For example, it can be used to model the concept that some pieces of knowledge are more similar or relate more closely to one another than others. Our work distinguishes itself from recent advancements in epistemic logic interpreted through the concepts of similarity or distance (e.g., \cite{DLW2021,NT2015}).  One key difference is that we employ the standard language of epistemic logic, whereas other solutions typically incorporate the degree of similarity or dissimilarity directly into their language. For instance, they use a sentence like $K_a^r \phi$ to represent ``$a$ knows $\phi$ with a strength of effort or confidence $r$''. Our approach, in contrast, does not explicitly factor in the degree of similarity into the language, maintaining the traditional structure of epistemic logic while reinterpreting its concepts in the light of similarity.

In this paper, we adapt the concept of similarity from the field of data mining, where it is primarily used to quantify the likeness between two data objects. In data mining, distance and similarity measures are generally specific algorithms tailored to particular scenarios, such as computing the distance and similarity between matrices, texts, graphs, etc. (see, e.g., \cite[Chapter~3]{Aggarwal2015}). There is also a body of literature that outlines general properties of distance and similarity measures. For instance, in \cite{TSK2005}, it is suggested that typically, the properties of \emph{positivity} (i.e., $\forall x \forall y: s(x, y) = 1 \Ra x = y$) and \emph{symmetry} (i.e., $\forall x \forall y: s(x, y) = s(y, x)$) hold for $s(x, y)$ -- a binary numerical function that maps the similarity between points $x$ and $y$ to the range $[0, 1]$.

However, our primary interest does not lie in the measures of similarity themselves, but rather in modeling similarity and deriving from it the concepts of knowledge. We accomplish this by interpreting knowledge through a category of models (and a specific subtype known as similarity models). Intuitively, the phrase ``$a$ knows $\phi$'' ($K_a\phi$) can be interpreted as ``$\phi$ holds true in all states that, in $a$'s perception, resemble the actual state''. A ``state'' in this context can be seen as a data object, which is the focus of data mining. But it could also be treated as an epistemic object, a possible situation, and so forth.

We generalize the similarity function by replacing its range [0, 1] with an arbitrary set of epistemic abilities. In our framework, the degrees of similarity may not have a comparable or ordered relationship. This shift allows for a more nuanced understanding and modeling of knowledge, accommodating the complex and often non-linear nature of how knowledge is represented and interpreted. This approach has the potential to provide a more flexible framework for representing knowledge in various fields where the traditional binary or linear models of knowledge fall short. It may be particularly relevant in areas like artificial intelligence, cognitive science, and social sciences where the understanding and modeling of knowledge need to take into account complex human cognition and social dynamics.

The primary focus of this paper is on \emph{group knowledge}. The flexibility of our models allows us not only to reinterpret classical concepts like \emph{everyone's knowledge} ($E_G\phi$), \emph{common knowledge} ($C_G\phi$), and \emph{distributed knowledge} ($D_G\phi$) while preserving their underlying intuitions, but it also leads to the introduction of a uniquely natural and novel concept of group knowledge that we term as \emph{mutual knowledge} ($M_G\phi$). This new concept is conceptually similar to everyone's knowledge, though it bears subtle technical differences. In simple terms, everyone's knowledge and common knowledge maintain their standard relationship to individual knowledge (though, in our models, individual knowledge is reinterpreted with an emphasis on the role of similarities). Distributed knowledge, on the other hand, is not defined by intersections of relations in our approach. Instead, we employ unions of epistemic abilities, signifying knowledge that can be acquired by combining the abilities of agents within the group. The newly introduced concept of mutual knowledge represents knowledge that can be gained through the mutual abilities of agents within the group. This is akin to everyone's knowledge yet it brings a fresh perspective by stressing on the collective abilities of the group.

In our study, we explore epistemic logics across all combinations of these group knowledge notions. As everyone's knowledge can be expressed by individual knowledge, we have formulated eight languages (with or without common, distributed, and mutual knowledge) and sixteen logics over these languages. Each language is interpreted over either the class of models or the class of similarity models. The grammar and semantics of these languages are introduced in Sections~\ref{sec:syntax}--\ref{sec:semantics}.
In Section~\ref{sec:expressivity} , we compare the expressive power of these languages. We establish correspondence, as discussed in Section~\ref{sec:correspondence}, between these logics (excluding mutual knowledge) and classical ones. This correlation is beneficial for accomplishing some of the results related to axiomatization and computational complexity, which are then the subjects of Sections~\ref{sec:ax} and \ref{sec:complexity} respectively.
For the axiomatizations of the logics, we introduce sound and strongly complete axiomatic systems for the logics excluding common knowledge. For those incorporating common knowledge, we present sound and weakly complete axiomatic systems (owing to the lack of compactness for the common knowledge operators). These systems are then categorized based on whether their completeness results are obtainable via correspondence (Section~\ref{sec:completeness1}), shown via a path-based canonical model (Section~\ref{sec:completeness2}), or require a finitary method leading to a weak completeness result (Section~\ref{sec:completeness3}).
In terms of computational complexity, we initially explore the model checking problems -- all of which are in P (Section~\ref{sec:mc}). Subsequently, we investigate the satisfiability/validity problems -- those without common knowledge are PSPACE complete, while the others are EXPTIME complete (Section~\ref{sec:sat}).

Our findings demonstrate that models built on weighted graphs offer a more subtle and adaptable framework for knowledge modeling. This is especially true when viewed from the perspective of a similarity measure (or its dual, a distance measure, which can be implemented in a very similar manner) and its connection to the notions of knowledge. In comparison to classical epistemic logics, our logics do not exhibit greater complexity. This leads to an improved balance between the characterization power of the logics and their computational complexity. Here, weighted models reveal a distinct advantage! Our exploration of the relationship between similarity and knowledge could potentially narrow the gap between research on epistemic logic and the various directions related to knowledge representation and beyond. Despite the limited attention it has received in the past, the concept of similarity is gradually gaining recognition as a vital factor in the study of knowledge representation in epistemic logic. This underlines the necessity for ongoing research and investigation in this area.

\section{Syntax and Semantics}
\label{sec:logics}

In this section, we present a comprehensive framework composed of eight formal languages, two types of formal models, and a unified semantic interpretation. The combination of these elements results in a diverse collection of sixteen distinctive logics. We supplement our discussion with illustrative examples, offering a visual representation of the models and their accompanying semantics.

\subsection{Syntax}\label{sec:syntax}

Our study utilizes formal languages rooted in the standard language of multi-agent epistemic logic \cite{FHMV1995,MvdH1995}, with the addition of modalities that represent group knowledge constructs. We particularly concentrate on the constructs of \emph{common knowledge}, \emph{distributed knowledge} and \emph{mutual knowledge}. Although these terms may be familiar within the realm of epistemic logic, their interpretations within our framework will be uniquely defined and explicated later in Section~\ref{sec:semantics}.

In terms of our assumptions, we consider \prop as a countably infinite set of propositional variables, and \ag as a finite nonempty set of agents. With these in place, we now proceed to delineate the formal languages.

\begin{definition}[formal languages]
The languages utilized in our study are defined by the following grammatical rules, where the name of each language is indicated in parentheses on the left-hand side:
$$\begin{array}{l@{\qquad}l}
(\lang)&\phi ::= p \mid \neg \phi \mid (\phi \ra \phi) \mid K_a\phi\\
(\langc)&\phi ::= p \mid \neg \phi \mid (\phi \ra \phi) \mid K_a\phi \mid C_G\phi\\
(\langd)&\phi ::= p \mid \neg \phi \mid (\phi \ra \phi) \mid K_a\phi \mid D_G\phi\\
(\langm)&\phi ::= p \mid \neg \phi \mid (\phi \ra \phi) \mid K_a\phi \mid M_G\phi\\
(\langcd)&\phi ::= p \mid \neg \phi \mid (\phi \ra \phi) \mid K_a\phi \mid C_G\phi \mid D_G\phi \\
(\langcm)&\phi ::= p \mid \neg \phi \mid (\phi \ra \phi) \mid K_a\phi \mid C_G\phi \mid M_G\phi \\
(\langdm)&\phi ::= p \mid \neg \phi \mid (\phi \ra \phi) \mid K_a\phi \mid D_G\phi \mid M_G\phi \\
(\langcdm)&\phi ::= p \mid \neg \phi \mid (\phi \ra \phi) \mid K_a\phi \mid C_G\phi \mid D_G\phi \mid M_G\phi.\\
\end{array}$$
In the above, ``$p$'' is a member of the set $\prop$ of propositional variables, ``$a$'' belongs to the set $\ag$ of agents, and ``$G$'' represents a nonempty subset of \ag, signifying a group. We also employ other boolean connectives, including conjunction ($\wedge$), disjunction ($\vee$), and equivalence ($\lra$). $E_G \phi$ is a shorthand for $\bigwedge_{a\in G} K_a\phi$. These are perceived as defined operators and are manipulated in the conventional manner.

In future instances, we may need to modify the parameter \ag, e.g., for the language \langm where the agents are drawn from a different set $\ag'$. In such cases, we will denote the set of agents explicitly as a parameter of the language. For example, we might write $\langm_{\ag}$ and $\langm_{\ag'}$ to distinguish between the different agent sets.

For any language \langl introduced above, an ``\emph{\langl-formula}'' or a ``\emph{formula of \langl}'' refers to a well-formed sentence of the language \langl. By a \emph{formula}, we generally mean a formula of any of the languages introduced above.

For any languages \langl and $\langl'$ introduced above, we say ``\langl is a \emph{sublanguage} of $\langl'$'' or ``$\langl'$ is a \emph{superlanguage} of \langl'', if every formula of \langl is also a formula of $\langl'$.
\qed
\end{definition}

Our study employs formulas such as $K_a \phi$ to depict ``agent $a$ knows $\phi$''. This is occasionally referred to as \emph{individual knowledge}. Similarly, formulas like $C_G\phi$, $D_G\phi$, $E_G\phi$ and $M_G\phi$ are used to convey that ``$\phi$ is \emph{common knowledge}, \emph{distributed knowledge}, \emph{everyone's knowledge} or \emph{mutual knowledge} of group $G$'', respectively.

Before delving into the formal semantics of these formulas, it is important to first establish the semantic models that will be used for the intended logics.

\subsection{Semantic models}\label{sec:models}

We introduce two types of graphs defined in line with the traditional definitions established in graph theory. The first type is referred to as \emph{(weighted) graphs}. However, in our context, the weights in these graphs are not arbitrary; instead, they represent the degrees of similarity or uncertainty among various epistemic objects such as specified data or areas of interest. These objects are represented by the nodes within the graphs. Building on this concept, we further apply specific constraints established in data mining to the graphs to derive a specialized subtype, which we term as \emph{similarity graphs}. The similarity graphs, while maintaining the foundational structure of graphs, are tailored to facilitate the interpretation of epistemic languages, a function that will be further elaborated upon in Section~\ref{sec:semantics}.

\begin{definition}[graphs]\label{def:graphs}
A \emph{(weighted) graph} is represented by the tuple $(W,\ab,E)$, where:
\begin{itemize}
\item $W$ is a nonempty set of states or nodes, referred to as the \emph{domain};
\item \ab is an arbitrary set of abstract epistemic abilities, which could be empty, finite or infinite, depending on the context;
\item $E : W \times W \to \wp(\ab)$, known as an \emph{edge function}, maps each pair of states to a set of epistemic abilities. This implies that the two states are indistinguishable for individuals possessing only these epistemic abilities.
\end{itemize}
A graph $(W,\ab,E)$ is referred to as a \emph{similarity graph} if it satisfies the following conditions for all $s,t \in W$:
\begin{itemize}
\item Positivity: If $E(s,t) = \ab$, then $s = t$;
\item Symmetry: $E(s,t) = E(t,s)$.
\qed
\end{itemize}
\end{definition}

The above definition warrants further elucidation. Firstly, our approach adopts a broad interpretation of epistemic abilities that may not necessarily be arranged in an order, although such an arrangement is plausible. Secondly, we perceive the edge function $E$ as a representation of the relation of similarity between states. In this context, similarities are deemed objective, signifying their constancy across diverse agents. Thirdly, the conditions of positivity and symmetry serve as generalized forms of common conditions employed to characterize similarity between data objects, as demonstrated in \cite{TSK2005}.%
\footnote{An implicit condition often assumed, the converse of positivity, posits $E(s,t)=A$ if $s=t$. This condition entails the reflexivity of graphs, depicted by the characterization axiom T (i.e., $K_a\phi \ra \phi$). In the realm of data mining, this condition implies that if two data objects are identical (i.e., they refer to the same data object), they would receive the maximum value from any similarity measure. However, given the typical lack of necessity for this condition, we have elected to omit it from this paper.}

\begin{example}[scenario for a graph]\label{ex1}
Consider a collaborative scenario involving three authors working on a research paper that is divided into four main sections: the introductory and motivational part (section 1), the introduction and definition of new logics (section 2), the axiomatization of these logics (section 3), and the calculation of the complexity of these logics (section 4). Due to differing perspectives on the paper's presentation and poor synchronization of their efforts, the collaboration yields five versions of the paper, denoted as $s_1, s_2, \dots, s_5$. Despite having varying texts, these variants may not inherently differ from each other -- they could have similar meaning with subtle or significant differences.

Let us consider the following circumstances:
\begin{itemize}
\item In section 1, the variants $s_1$, $s_2$, $s_3$ and $s_5$ are largely similar, while $s_4$ diverges.
\item For sections 2 and 3, $s_1$ and $s_3$ share common content, while $s_2$, $s_4$ and $s_5$ -- also similar to each other -- differ substantially from $s_1$ and $s_3$.
\item In section 4, variants $s_1$ and $s_2$ align closely, differing from $s_3$, $s_4$ and $s_5$, which, while similar to each other, deviate from the first two variants.
\qed
\end{itemize}
\end{example}

The situation in Example~\ref{ex1} can be formalized with the graph $G = (W,\ab,E)$, where $W = \{ s_1, s_2, s_3, s_4, s_5 \}$, $\ab = \{ 1, 2, 3, 4 \}$, and $E$ is such that:
$E(s_1,s_1) = E(s_2,s_2) = E(s_3,s_3) = E(s_4,s_4) = E(s_5,s_5) = \{ 1, 2, 3, 4 \}$,
$E(s_1,s_2) = E(s_3, s_5) = \{ 1, 4\}$,
$E(s_1,s_3) = E(s_2,s_5) = \{ 1, 2, 3\}$,
$E(s_1,s_4) = \emptyset$,
$E(s_1,s_5) = E(s_2, s_3) = \{ 1 \}$,
$E(s_2,s_4) = \{ 2, 3 \}$,
$E(s_3,s_4) = \{ 4 \}$,
$E(s_4,s_5) = \{ 2, 3, 4 \}$,
and is symmetric (i.e., for all $x,y \in W$, $E(x,y) = E(y,x)$). $G$ is in fact a similarity graph, and is illustrated in Figure~\ref{fig:sim-graph}.%

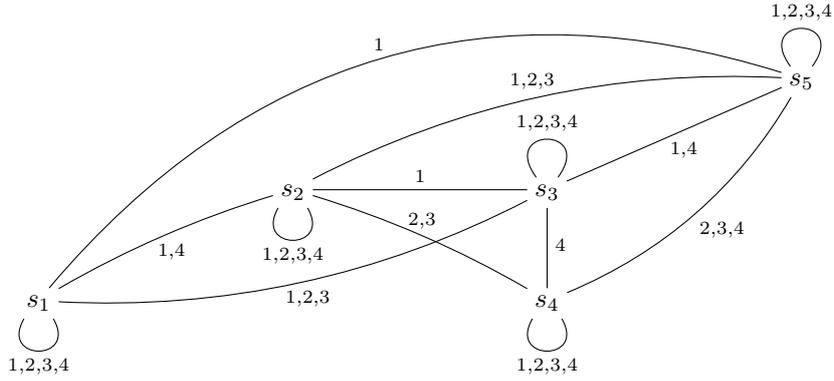
\begin{figure}
\centering
$\xymatrix@R=3em@C=8em{
&&&s_5
\ar@{-}@(ul,ur)^{1, 2, 3, 4}
\\
&s_2
\ar@{-}@(dl,dr)_{1, 2, 3, 4}
\ar@{-}[r]^{1}
\ar@{-}@/^.3pc/[dr]^{\hspace{-6pt}2, 3}
\ar@{-}@/^1.2pc/[urr]^{1, 2, 3}
& s_3
\ar@{-}@(ul,ur)^{1, 2, 3, 4}
\ar@{-}[d]^{4}
\ar@{-}[ru]_{1, 4}
\\
s_1
\ar@{-}@(dl,dr)_{1, 2, 3, 4}
\ar@{-}@/^.3pc/[ur]_{1, 4}
\ar@{-}@/_1.2pc/[rru]_{1, 2, 3}
\ar@{-}@/^4.3pc/[uurrr]^{1}
&
&s_4
\ar@{-}@(dl,dr)_{1, 2, 3, 4}
\ar@{-}@/_1pc/[uur]_{2,3,4}
}$
\caption{Illustration of a similarity graph for Example~\ref{ex1}. We do not draw a line between two nodes when the edge between them is with no label (or in other words, labeled by an empty set; see the case between $s_1$ and $s_4$).}\label{fig:sim-graph}
\end{figure}

Now, we incorporate a capability function and a valuation into each of the graphs, resulting in the \emph{models} that will be used to interpret the formal languages.

\begin{definition}[models]\label{def:models}
A \emph{model} is a quintuple $(W,\ab,E,C,\nu)$ such that:
\begin{itemize}
\item $(W,\ab,E)$ forms a graph,
\item $C: \ag \to \wp(\ab)$ is a capability function that assigns each agent a set of epistemic abilities,
\item $\nu: W \to \wp(\prop)$ is a valuation that assigns a set of propositional variables (representing those that are true) to every state.
\end{itemize}

A model $(W,\ab,E,C,\nu)$ is referred to as:
\begin{itemize}
\item a \emph{symmetric model}, if $E$ satisfies symmetry (see Definition~\ref{def:graphs});
\item a \emph{similarity model}, if $(W,\ab,E)$ forms a similarity graph (i.e., it complies with positivity and symmetry; see Definition~\ref{def:graphs}).
\qed
\end{itemize}
\end{definition}

The next example of a similarity model continues from Example~\ref{ex1}.

\begin{example}[scenario for a model]\label{ex2}
The three authors are identified as $a$, $b$ and $c$. Author $a$ excels at generating new ideas (section 1), introducing and defining new logics (section 2), and axiomatizing logics (section 3). Author $b$ is proficient in the topics of sections 2 and 3, as well as calculating complexity of logics (section 4). Author $c$ is adept only at the topic of section 4.

It is crucial for the authors to ascertain whether the paper accurately represents their intended content, but it is challenging for them to evaluate the sections beyond their expertise. Four propositions, $p_1$, $p_2$, $p_3$ and $p_4$, signify that the sections $1$ to $4$ accurately convey the planned content, respectively.

The introductory section (section 1) and the introduction of logics (section 2) of the variants $s_1$ are well-written, accurately presenting the planned content. Also commendably written are the sections 1 and 3 of $s_2$, the sections 1, 2 and 4 of $s_3$, the sections 3 and 4 of $s_4$, and the sections 1, 3 and 4 of $s_5$. The remaining sections contain misrepresentations.
\qed
\end{example}

Example~\ref{ex2} can be formalized in the (similarity) model $M = (W,\ab,E,C,\nu)$ where:
\begin{itemize}
\item $(W,\ab,E)$ forms the similarity graph for Example~\ref{ex1};
\item $C$ is such that $C(a) = \{1,2,3\}$, $C(b) = \{2,3,4\}$ and $C(c) = \{4\}$;
\item $\nu$ is such that $\nu(s_1) = \{p_1, p_2\}$, $\nu(s_2) = \{p_1,p_3\}$, $\nu(s_3) = \{p_1, p_2, p_4\}$, $\nu(s_4) = \{p_3,p_4\}$ and $\nu(s_5) = \{p_1,p_3,p_4\}$.
\end{itemize}
Figure~\ref{fig:sim-model} illustrates the similarity model $M$ introduced above.

\begin{figure}
\centering
\parbox{.75\textwidth}{
\centering
$\xymatrix@R=3em@C=6em{
&&
&*++o[F]{\frac{s_5}{p_1,p_3,p_4}}
\ar@{-}@(ul,ur)^{1,2,3,4}
\\
& *++o[F]{\frac{s_2}{p_1,p_3}}
\ar@{-}@(ul,ur)^{1,2,3,4}
\ar@{-}[r]^{1}
\ar@{-}@/^1pc/[dr]^{\hspace{-6pt}2,3}
\ar@{-}@/^1pc/[urr]^{1,2,3}
&*++o[F]{\frac{s_3}{p_1,p_2,p_4}} 
\ar@{-}@(ul,ur)^{1,2,3,4}
\ar@{-}[d]^{4}
\ar@{-}[ru]_{1,4}
\\
*++o[F]{\frac{s_1}{p_1,p_2}}
\ar@{-}@(dl,dr)_{1,2,3,4}
\ar@{-}@/^.3pc/[ur]_{1,4}
\ar@{-}@/_1.2pc/[rru]_{1,2,3}
\ar@{-}@/^5pc/[uurrr]^{1}
&
&*++o[F]{\frac{s_4}{p_3,p_4}}
\ar@{-}@(dl,dr)_{1,2,3,4}
\ar@{-}@/_1pc/[uur]_{2,3,4}
}$
}
\parbox{.23\textwidth}{
$\begin{array}{l}
C(a) = \{1,2,3\}\\
C(b) = \{2,3,4\}\\
C(c) = \{4\}\\
\end{array}$
}
\caption{Illustration of a similarity model for Example~\ref{ex2}.}\label{fig:sim-model}
\end{figure}
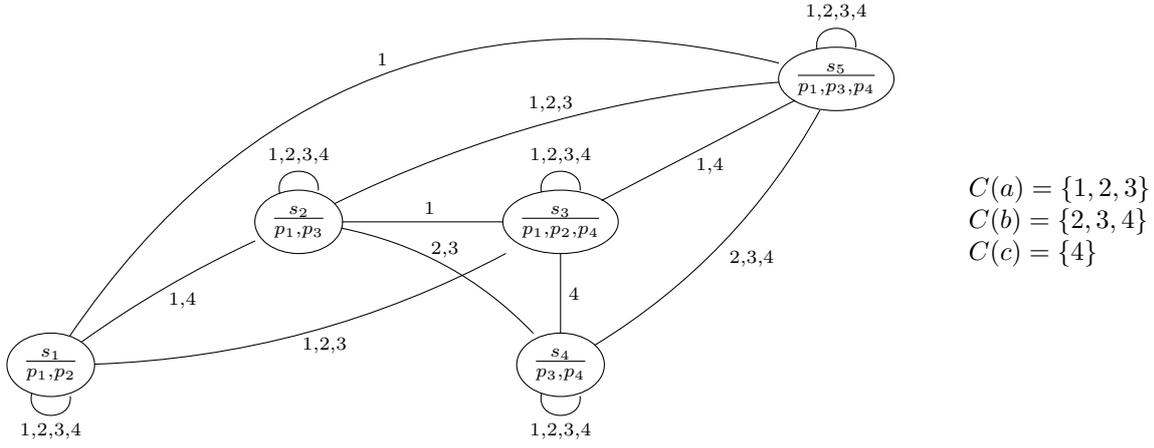

\subsection{Semantics}\label{sec:semantics}

Formulas are interpreted by the models introduced in the previous section, as made more precisely in the following definition.

\begin{definition}[satisfaction]\label{def:semantics}
Given a formula $\phi$, a model $M = (W,\ab,E,C,\nu)$ and a state $s \in W$, we say $\phi$ is \emph{true} or \emph{satisfied} at $s$ in $M$, denoted $M,s \models \phi$, if the following recursive conditions are met:
\[
\begin{array}{lll}
M,s \models p & \iff & p\in \nu(s)\\
M,s \models \neg\psi & \iff & \text{not } M,s \models \psi\\
M,s \models (\psi \ra \chi) & \iff & M,s \models \psi \text{ materially implies } M,s \models\chi\\
M,s\models K_a\psi & \iff & \text{for all $t\in W$, if $C(a)\subseteq E(s,t)$ then $M,t \models \psi$}\\
M,s\models C_G\psi & \iff & \text{for all $n \in \mbN^+$, $M,s \models E_G^n \psi$}\\
M,s\models D_G\psi & \iff & \text{for all $t \in W$, if $\bigcup_{a\in G}C(a) \subseteq E(s,t)$ then $M,t \models \psi$}\\
M,s\models M_G\psi & \iff & \text{for all $t \in W$, if $\bigcap_{a\in G}C(a) \subseteq E(s,t)$ then $M,t \models \psi$,}\\
\end{array}
\]
where $E_G^n\psi$ is defined recursively as $E_G E_G^{n-1}\psi$, and $E_G^1\psi$ is treated as $E_G \psi$, i.e., $\bigwedge_{a\in G} K_a\psi$.
\qed
\end{definition}

In the definition above, the interpretation of $K_a \psi$ includes a condition ``$C(a) \subseteq E(s,t)$'', which intuitively means that ``agent $a$, with his abilities, cannot distinguish between states $s$ and $t$". Thus, the formula $K_a \psi$ expresses that $\psi$ is true in all states $t$ that $a$ cannot differentiate from the current state $s$.

\emph{Common knowledge} ($C_G\psi$) follows the classical fixed-point interpretation as $E_G C_G \psi$, where $E_G \chi$ stands for the conventional notion of \emph{everyone's knowledge}, stating that ``everyone in group $G$ knows $\chi$'' (refer to \cite{HM1992} for more details). In other words, $C_G\psi$ implies that ``everyone in group $G$ knows that $\psi$ is true, and everyone in $G$ knows about this first-order knowledge, and also knows about the second-order knowledge, and so on".

The concept of \emph{Distributed knowledge} ($D_G\psi$) in this paper diverges from the traditional definitions found in literature. In classical terms, knowledge is distributed such that it is accumulated by pooling individual knowledge together (see \cite[Chapter 1]{HM1992}), though its classical formal definition gives rise to controversy \cite{Roelofsen2007,AW2017rdk}. We redefine distributed knowledge as being attainable by pooling together individual abilities. In practice, we swap the intersection of individual uncertainty relations with the union of individual epistemic abilities. Thus, $\psi$ is deemed distributed knowledge among group $G$ if and only if $\psi$ holds true in all states $t$ that, when utilizing all the epistemic abilities of agents in group $G$, cannot be differentiated from the present state.

An additional type of group knowledge, termed \emph{mutual knowledge} ($M_G\psi$), states that $\psi$ is mutual knowledge if and only if $\psi$ is true in all states $t$ that, using the mutual abilities of group $G$, cannot be differentiated from the current state. This concept is akin to everyone's knowledge%
\footnote{In the area of epistemic logic, $E_G \phi$ intuitively says, ``Everyone/everybody in group $G$ knows $\phi$.'' This is why the symbol $E$ is used in the formula, and this symbol has been made popular since the publications of the seminal textbooks on (dynamic) epistemic logic \cite{FHMV1995,MvdH1995,vDvdHK2008} . Yet we lack a name of what kind of knowledge $\phi$ is when $E_G\phi$ holds, a name that goes with ``common knowledge'' and ``distributed knowledge'' in a fancy way. This is the reason why we have named it ``everyone's knowledge''. In the literature, however, there are cases where ``mutual knowledge'' is used to refer to everyone's knowledge (see, e.g., \cite{VS2022}), a name reserved by us for a different concept of knowledge that is similar to everyone's knowledge.}%
, and we will examine its logical properties in greater detail later on.

\begin{example}
Using the similarity model for Example~\ref{ex2}, we have the following truths:
\begin{enumerate}
\item $M,s_2 \models K_a p_3$, meaning that when the variant $s_2$ is at hand, $a$ knows that its third section is well written.
\item $M,s_4 \models \neg K_b p_1 \wedge \neg K_b \neg p_1$. When $b$ sees the variant $s_4$, he does not know whether its first section is well written.
\item $M,s_3 \models K_c (K_a p_3 \vee K_a \neg p_3)$. When $c$ proofreads the paper $s_3$, she knows that $a$ knows whether its third section is well written.
\item $M,s_4 \models E_{\{a,b\}} (p_3 \wedge p_4)$. For the variant $s_4$, both $a$ and $b$ know that its third and forth sections are well written.
\item $M,s_5 \models (\neg C_{\{a,c\}} p_1 \wedge \neg C_{\{a,c\}} \neg p_1) \wedge (\neg C_{\{a,c\}} p_2 \wedge \neg C_{\{a,c\}} \neg p_2)$. It is not common knowledge for $a$ and $c$ whether either of the first two sections of variant $s_5$ is well written.
\item $M,s_4 \models D_{\{a,b\}} (\neg p_1 \wedge p_4)$. The first section of variant $s_4$ is not well written but the fourth section is -- this is distributed knowledge between $a$ and $b$.
\item $M,s_4 \models \neg M_{\{a,b\}} \neg p_1 \wedge \neg M_{\{a,b\}} p_4$. That the first section of variant $s_5$ is not well written is not mutual knowledge between $a$ and $b$, neither is the fourth section of $s_4$.
\end{enumerate}
\end{example}

\begin{definition}[satisfiability and validity]
A formula is referred to as:
\begin{itemize}
\item \emph{satisfiable}, if there exists a state in a model in which it is satisfied;
\item \emph{valid}, if it is satisfied in all states across all models;
\item \emph{s-satisfiable}, if there exists a state in a similarity model where it is satisfied;
\item \emph{s-valid}, if it is satisfied in all states across all similarity models.
\end{itemize}
A set of formulas is called \emph{satisfiable} (resp., \emph{s-satisfiable}), if there exists a state in a model (resp., \emph{similarity model}) in which all the formulas of the set is satisfied. A set of formulas is called \emph{valid} (resp., \emph{s-valid}), if all of its elements are valid (resp., s-valid).
\qed
\end{definition}

By Definition~\ref{def:models} it is clear that all similarity models are models. Therefore  it is intuitively evident that if a formula is s-satisfiable, it is also satisfiable, and if a formula is valid, it is likewise s-valid. When we refer to a ``logic'', we are talking about the set of valid or s-valid formulas derived from a specific language. For instance, the term \lcdm refers for the set of all \emph{valid} \langcdm-formulas, while \lscdm represents the set of all \emph{s-valid} \langcdm-formulas. The following definition elaborates on this convention.

\begin{definition}[notation for logics]
For any combination \text{\normalfont X} composed of elements of the set $\{\text{\normalfont C, D, M}\}$, like \text{\normalfont C}, \text{\normalfont CDM}, or even an empty string, we refer to its calligraphic transformation $\mathcal{X}$ as the combination of the calligraphic symbols of \text{\normalfont X}. This could manifest as $\mathcal{C}$, $\mathcal{CDM}$, or an empty string.

Given that \text{\normalfont X} and $\mathcal{X}$ can be any of the above-mentioned combinations, the notation \text{\normalfont ELX} is used to represent the set of all valid $\mathcal{ELX}$-formulas, and \text{\normalfont ELX$^s$} denotes the set of all s-valid $\mathcal{ELX}$-formulas.
\qed
\end{definition}

\begin{proposition}
The following hold for any formula $\phi$, any agent $a$ and any groups $G$ and $H$:
\begin{enumerate}
\item $K_a(\phi \ra \psi) \ra (K_a\phi \ra K_a\psi)$ is valid (hence s-valid);
\item $\phi \ra K_a \neg K_a \neg \phi$ is not valid, but is s-valid;
\item $C_{G}\phi \ra \bigwedge_{a\in G} K_a(\phi \wedge C_G \phi)$ is valid (hence s-valid);
\item $D_{\{a\}}\phi \lra K_a\phi$ is valid (hence s-valid);
\item $D_G \phi \ra D_H \phi$ (with $G \subseteq H$) is valid (hence s-valid);
\item $\phi \ra D_G \neg D_G \neg \phi$ is not valid, but is s-valid;
\item $M_{\{a\}}\phi \lra K_a\phi$ is valid (hence s-valid);
\item $M_G \phi \ra M_H \phi$ (with $H \subseteq G$) is valid (hence s-valid);
\item $\phi \ra M_G \neg M_G \neg \phi$ is not valid, but is s-valid.
\end{enumerate}
\end{proposition}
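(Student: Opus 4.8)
The plan is to verify each of the nine items by unwinding Definition~\ref{def:semantics} directly, grouping them into three kinds: validities proved over arbitrary models, non-validities established by a single counterexample model, and s-validities proved using the positivity condition of similarity graphs. For the validities (items 1, 3, 4, 5, 7, 8) I would argue semantically. Item 1 is the standard $K$-distribution: fix any $M,s$ satisfying both $K_a(\phi\ra\psi)$ and $K_a\phi$, take any $t$ with $C(a)\subseteq E(s,t)$, and conclude $M,t\models\psi$ by modus ponens at $t$. Item 3 follows from the fixed-point clause $M,s\models C_G\psi \iff$ for all $n\in\mbN^+$, $M,s\models E_G^n\psi$; one checks that satisfying all $E_G^n\psi$ implies each $a\in G$ knows both $\phi$ (from $E_G^1$) and $C_G\phi$ (from the tail of the sequence). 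Items 4 and 7 are immediate from the semantic clauses, since for a singleton $G=\{a\}$ both $\bigcup_{a\in G}C(a)$ and $\bigcap_{a\in G}C(a)$ collapse to $C(a)$, making the truth conditions for $D_{\{a\}}$ and $M_{\{a\}}$ literally identical to that of $K_a$. Items 5 and 8 are the monotonicity facts: if $G\subseteq H$ then $\bigcup_{a\in G}C(a)\subseteq\bigcup_{a\in H}C(a)$, so the antecedent constraint for $D_H$ is harder to meet and reaches fewer states, giving $D_G\phi\ra D_H\phi$; dually, if $H\subseteq G$ then $\bigcap_{a\in G}C(a)\subseteq\bigcap_{a\in H}C(a)$, which yields $M_G\phi\ra M_H\phi$. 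I would state the set-inclusion direction carefully, since the subset relation on groups flips between the union and intersection cases.

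For the negative halves of items 2, 6 and 9, the plan is to construct one small model (on the class of all models, not similarity models) that refutes each scheme; a single two-state model should handle all three at once. The idea is to take $W=\{s,t\}$ with a $p$ true at $s$ and false at $t$, and arrange the edge function and capability functions so that $C(a)\subseteq E(s,t)$ but $C(a)\not\subseteq E(t,s)$, i.e. the similarity is asymmetric. Then at $s$ we have $p$ but fail $K_a\neg K_a\neg p$, because the witnessing state $t$ can still ``see back'' into a state satisfying $\neg p$ only if symmetry holds; dropping symmetry breaks the B-axiom. Concretely one wants $M,s\models p$ yet $M,s\not\models K_a\neg K_a\neg p$, which requires a $t$ with $C(a)\subseteq E(s,t)$ and $M,t\models K_a\neg p$, and the latter forces $s$ to be unreachable from $t$ under $a$'s abilities — achievable precisely because $E$ need not be symmetric. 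The same asymmetric model, read with $\bigcup$ and $\bigcap$ over the relevant group, refutes items 6 and 9.

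For the positive s-validity halves of items 2, 6 and 9, the plan is to show that symmetry (guaranteed on similarity graphs) is exactly what validates these B-type schemes. For item 2, suppose $M$ is a similarity model and $M,s\models\phi$; to prove $M,s\models K_a\neg K_a\neg\phi$, take any $t$ with $C(a)\subseteq E(s,t)$ and show $M,t\models\neg K_a\neg\phi$, i.e. exhibit a state $u$ with $C(a)\subseteq E(t,u)$ and $M,u\models\phi$. The witness is $u=s$: by symmetry $E(t,s)=E(s,t)\supseteq C(a)$, and $M,s\models\phi$ by assumption. Items 6 and 9 run identically after replacing $C(a)$ by $\bigcup_{a\in G}C(a)$ and $\bigcap_{a\in G}C(a)$ respectively, since symmetry of $E$ is inherited regardless of which ability-set indexes the reachability condition. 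I expect the main obstacle to be the counterexample construction for the negative halves: I must choose the single model, its valuation, and especially its asymmetric edge and capability data so that it simultaneously refutes the $K$-, $D$- and $M$-versions, and then carefully verify at the designated state that each scheme's consequent genuinely fails while its antecedent holds. The s-validity and monotonicity arguments are routine once the set-inclusion bookkeeping between $\bigcup$/$\bigcap$ and the $G\subseteq H$ versus $H\subseteq G$ directions is pinned down.
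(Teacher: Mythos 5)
Your proposal is correct and follows essentially the same route as the paper: the paper proves only clause 9 as a representative case, using exactly your two ingredients — an asymmetric two-state model with a dead-end successor to refute validity, and symmetry of $E$ (with the current state as witness) to establish s-validity — and the remaining clauses are handled by the same routine unwinding of the semantic clauses that you describe. No gaps; the only cosmetic difference is that the paper's counterexample uses $\top$ where you use a propositional variable $p$.
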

\begin{proof}
We only show the last clause for example. Consider the model $M=(W,\ab,E,C,\mu)$, where $W=\{s,t\}$, $\ab=\{1\}$, $E(s,t)=C(a)=\{1\}$, $E(t,s)=E(s,s)=E(t,t)=\emptyset$ and $\nu(s)=\nu(t)=\emptyset$. In this context, although $M,s \models \top$, we find that $M,s \not \models M_{\{a\}} \neg M_{\{a\}}\neg \top$. This indicates that $\phi \ra M_G \neg M_G \neg \phi$ is invalid. To establish that $\phi \ra M_G \neg M_G \neg \phi$ is s-valid, let us consider a state $s$ in any similarity model $M$. Assuming $M,s \models \phi$, then for any state $t$ such that $\bigcap_{a\in G}C(a)\subseteq E(s,t)$, we find that $\bigcap_{a \in G} C(a) \subseteq E(t,s)$. Therefore, $M,t \not \models M_G\neg\phi$. As a result, $M,s \models M_G \neg M_G \neg \phi$. This implies that $\phi \ra M_G \neg M_G \neg \phi$ is indeed s-valid.
\end{proof}

\subsection{Expressivity}
\label{sec:expressivity}

We utilize the traditional approach of determining the expressive power of a language, which involves comparing its relative expressive power with other languages.

Consider $\langl_1$ and $\langl_2$ as two languages whose formulas can be validated as true or false within the same type of structures, such as the proposed \lang and \langc. We state that $\langl_1$ is \emph{at most as expressive as} $\langl_2$, denoted as $\langl_1 \preceq \langl_2$, if for any $\langl_1$-formula $\phi$ there exists an $\langl_2$-formula $\psi$ such that $\phi$ and $\psi$ are \emph{equivalent}. Here, equivalence means that the two formulas are true or false in the exact same states of the same models, according to the semantics. We denote $\langl_1 \equiv \langl_2$ if both $\langl_1 \preceq \langl_2$ and $\langl_2 \preceq \langl_1$ are true. We claim that $\langl_1$ is less expressive than $\langl_2$ (strictly speaking), denoted as $\langl_1 \prec \langl_2$, if $\langl_1 \preceq \langl_2$ and $\langl_1 \not\equiv \langl_2$ hold true. We assert that $\langl_1$ and $\langl_2$ are \emph{incomparable} if neither $\langl_1 \preceq \langl_2$ nor $\langl_2 \preceq \langl_1$ holds true.

Clearly, all eight languages proposed can be compared to each other in terms of their expressive power, either with respect to the class of all models or the class of all similarity models. The classification of models, either as all models or all similarity models, does not influence the result, thereby enabling a uniform presentation (see Figure~\ref{fig:expressivty} for a summary of the results).

\begin{figure}[htbp]
\footnotesize
\subcaptionbox{when $|\ag|=1$}[.49\textwidth]{
$\xymatrix{
&*++o[F]{\langcdm}\ar@<1pt>[dl]\ar@<1pt>[d]
\\
*++o[F]{\langcd}\ar@<1pt>[ur]\ar@<1pt>[d]&
*++o[F]{\langcm}\ar@<1pt>[u]\ar@<1pt>[dl]|(.5)\hole&
*++o[F]{\langdm}\ar[ul]\ar@<1pt>[dl]\ar@<1pt>[d]
\\
*++o[F]{\langc}\ar@<1pt>[u]\ar@<1pt>[ur]|(.5)\hole&
*++o[F]{\langd}\ar[ul]\ar@<1pt>[ur]\ar@<1pt>[d]&
*++o[F]{\langm}\ar[ul]|(.5)\hole\ar@<1pt>[u]\ar@<1pt>[dl]
\\
&*++o[F]{\lang}\ar[ul]\ar@<1pt>[u]\ar@<1pt>[ur]
}$
}
\subcaptionbox{when $|\ag| \geq 2$}[.49\textwidth]{
$\xymatrix{
&*++o[F]{\langcdm}
\\
*++o[F]{\langcd}\ar[ur]&*++o[F]{\langcm}\ar[u]&*++o[F]{\langdm}\ar[ul]
\\
*++o[F]{\langc}\ar[u]\ar[ur]|(.5)\hole&*++o[F]{\langd}\ar[ul]\ar[ur]&*++o[F]{\langm}\ar[ul]|(.5)\hole\ar[u]
\\
&*++o[F]{\lang}\ar[ul]\ar[u]\ar[ur]
}$
}
\caption{This figure illustrates the relative expressive power of the languages. An arrow pointing from one language to another implies that the first language is at most as expressive as the second. The ``at most as expressive as'' relationship is presumed to be ``transitive'', meaning that a language is considered at most as expressive as another if a path of arrows exists leading from the first to the second. A lack of a path of arrows from one language to another indicates that the first language is \emph{not} at most as expressive as the second. This implies that either the two languages are incomparable or that the second language is less expressive than the first.\label{fig:expressivty}}
\end{figure}
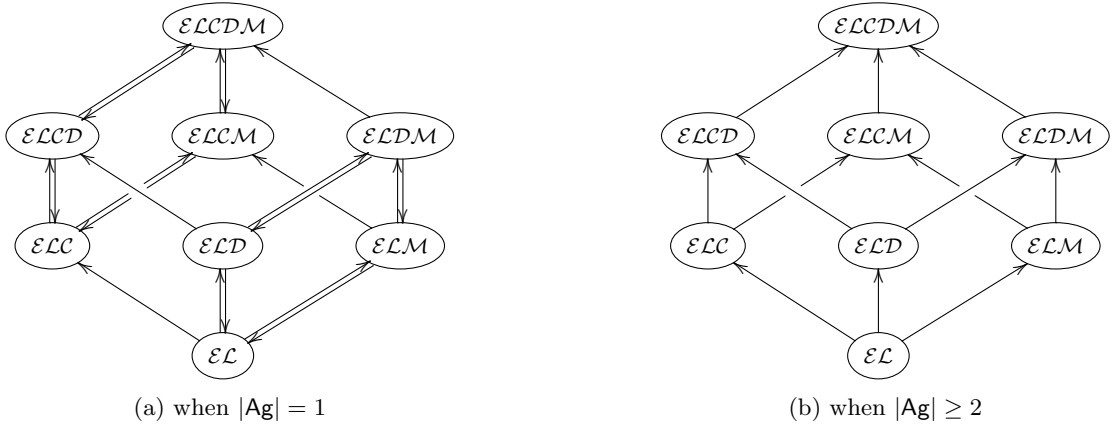

In Figure~\ref{fig:expressivty}, every language, with the exception of \langcdm, has an arrow pointing to its immediate superlanguages. This is clearly true, as by definition, every language is at most as expressive as its superlanguages. In the case when $|\ag|=1$, a reverse arrow also exists between languages that either both contain common knowledge or neither contain common knowledge.

\begin{lemma}\label{lem:exp1}
When $|\ag|= 1$ (i.e., when there is only one agent available in the language), 
\begin{enumerate}
\item $\lang\equiv\langd\equiv\langm\equiv\langdm$
\item $\langc\equiv\langcd\equiv\langcm\equiv\langcdm$
\item\label{it:exp-c} $\langdm \prec \langc$, and hence any language in the first clause are less expressive than any language in the second clause.
\end{enumerate}
\end{lemma}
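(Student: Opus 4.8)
The plan is to prove the three clauses by showing that when $|\ag|=1$ the group modalities collapse into the individual modality $K_a$, where $a$ is the unique agent, and then to separate the common-knowledge languages from the rest. Let $\ag = \{a\}$ throughout, so the only nonempty group is $G = \{a\}$. I would first handle the group operators $D_G$ and $M_G$ by invoking clauses (4) and (7) of the preceding Proposition, which already establish $D_{\{a\}}\phi \lra K_a\phi$ and $M_{\{a\}}\phi \lra K_a\phi$ as valid. Since $\{a\}$ is the only group available, every occurrence of $D_G$ and $M_G$ in any formula can be provably replaced by $K_a$ while preserving equivalence. This replacement is carried out by a routine induction on formula structure, using that equivalence is a congruence with respect to all the connectives and modal operators (i.e. if $\chi \equiv \chi'$ then $K_a\chi \equiv K_a\chi'$, $D_G\chi \equiv D_G\chi'$, etc.). This immediately gives clause (1), $\lang \equiv \langd \equiv \langm \equiv \langdm$, and simultaneously shows that adding $C_G$ to these gives clause (2), $\langc \equiv \langcd \equiv \langcm \equiv \langcdm$, since the same translation eliminates $D_G$ and $M_G$ from any \langcdm-formula, leaving only $K_a$ and $C_G = C_{\{a\}}$.

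For clause (3) I would prove two things: the nonstrict direction $\langdm \preceq \langc$, and strictness. The nonstrict direction follows from clauses (1) and (2): $\langdm \equiv \lang$, and $\lang$ is trivially a sublanguage of $\langc$, so $\lang \preceq \langc$, whence $\langdm \preceq \langc$. For strictness I must exhibit a \langc-formula (necessarily built using $C_{\{a\}}$) that has no equivalent in \langdm, equivalently no equivalent in \lang. The natural candidate is $C_{\{a\}} p$ for a propositional variable $p$. By the fixed-point semantics, when $G = \{a\}$ we have $E_{\{a\}}^n \psi$ equal to $K_a^n \psi$ (the $n$-fold iteration of $K_a$), so $C_{\{a\}} p$ asserts $K_a^n p$ for every $n \in \mbN^+$ — an infinite conjunction of nested knowledge formulas of unbounded modal depth. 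The separating argument is that any fixed \lang-formula has a bounded modal depth, whereas $C_{\{a\}} p$ constrains states arbitrarily deep along $K_a$-chains; I would make this precise by constructing, for each \lang-formula $\psi$ of modal depth $d$, two pointed models that agree on all formulas of depth $\le d$ (for instance via a bisimulation or an Ehrenfeucht–Fraïssé-style argument on $K_a$-paths truncated at depth $d$) but differ on $C_{\{a\}} p$, because one has a long enough $K_a$-chain reaching a $\neg p$ state and the other does not.

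The main obstacle is precisely this separation step: establishing rigorously that no \lang-formula captures $C_{\{a\}} p$. The cleanest route is a standard modal-depth / bisimulation argument. I would take two models based on $K_a$-paths — concretely, a model $M_d$ consisting of a path $w_0, w_1, \dots, w_{d+1}$ under the single-agent indistinguishability relation induced by $C(a) \subseteq E(\cdot,\cdot)$, with $p$ true everywhere, versus a model $M_d'$ identical except that $p$ fails at the endpoint $w_{d+1}$. One shows $M_d, w_0$ and $M_d', w_0$ are modally $d$-equivalent (they satisfy the same \lang-formulas up to modal depth $d$), so no bounded-depth \lang-formula distinguishes them, yet $C_{\{a\}} p$ holds at $w_0$ in $M_d$ but fails in $M_d'$ (the latter having a reachable $\neg p$ state at depth $d+1$). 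Since every \lang-formula has some finite depth $d$, no single \lang-formula can be equivalent to $C_{\{a\}} p$. Care is needed to build the underlying edge function $E$ and capability $C(a)$ so that the intended $K_a$-accessibility is exactly the path relation, and to confirm the argument works uniformly over the relevant model class (all models or all similarity models), but these are checkable details rather than conceptual difficulties. Combining the three clauses yields the stated result, and the final sentence of clause (3) — that every first-clause language is strictly less expressive than every second-clause language — follows by transitivity from clauses (1), (2) and $\langdm \prec \langc$.
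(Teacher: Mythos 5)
Your proposal is correct, and clauses (1) and (2) are handled exactly as in the paper: both you and the authors invoke the validities $D_{\{a\}}\phi\lra K_a\phi$ and $M_{\{a\}}\phi\lra K_a\phi$ and eliminate $D$ and $M$ by a congruence-preserving replacement, and both reduce the nonstrict half of clause (3) to $\lang\preceq\langc$. Where you diverge is the separation $\langc\not\preceq\langdm$. The paper argues via a failure of compactness: it notes that every finite subset of $\Phi=\{E_{\{a\}}^n p\mid n\in\mbN\}\cup\{\neg C_{\{a\}}p\}$ is satisfiable while $\Phi$ itself is not, supposes a $\langdm$-equivalent $\phi$ of $C_{\{a\}}p$ of length $k$, takes a model of $\{E_{\{a\}}^n p\mid n\leq k\}\cup\{\neg\phi\}$, and truncates it to the states reachable from the designated point in at most $k$ steps, thereby satisfying all of $\Phi$ and reaching a contradiction. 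You instead give a direct two-model indistinguishability argument: for each candidate formula of modal depth $d$ you build two path models of length $d+1$, differing only in the valuation at the far endpoint, that are $d$-bisimilar at the root yet disagree on $C_{\{a\}}p$. Both arguments exploit the same underlying fact---a formula of bounded length or depth cannot detect anything beyond a bounded distance, whereas $C_{\{a\}}p$ can---but yours avoids reasoning about truncations of an arbitrary model at the cost of setting up the bounded bisimulation explicitly and checking that your paths can be realized in the relevant model class (symmetry is immediate for an undirected path, and positivity is secured by keeping $E(s,t)\neq\ab$ for $s\neq t$, e.g.\ via an extra unused ability, as in the paper's Lemma on converting symmetric models to similarity models). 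Both routes are sound; yours is arguably the more self-contained of the two.
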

\begin{proof}
1 \& 2. In the case when $|\ag|=1$ there is only one agent, and since $D_{\{a\}} \phi$ and $M_{\{a\}} \phi$ are equivalent to $K_a \phi$, the operators for distributed and mutual knowledge are redundant in this case. Hence the lemma.

3. We show that $\langc \not\preceq \langdm$, and so $\langdm \prec \langc$ since $\langdm\equiv\lang \preceq \langc$ by the first clause. Suppose towards a contradiction that there exists a formula $\phi$ of $\langdm$ equivalent to $C_{\{a\}}p$. Consider the set $\Phi=\{E_{\{a\}}^n p\mid n\in\mbN\}\cup\{\neg C_{\{a\}}p\}$. It is not hard to see that any finite subset of $\Phi$ is (s-)satisfiable, but not $\Phi$ itself. 
Let $k$ be the length of $\phi$ (refer to a modal logic textbook for its definition), and suppose $\{E_{\{a\}}^n p\mid n\in\mbN, n\leq k\}\cup\{\neg\phi\}$ is satisfied at a state $w$ in a (similarity) model $M=(W,\ab,E,C,\nu)$. For any $s,t \in W$, we say that $s$ reaches $t$ in one step if $C(a)\subseteq E(s,t)$. Consider the model $M_k=(W_k,\ab,E,C,\nu)$, where $W_k$ is set of states reachable from $w$ in at most $k$ steps. We can verify that $M_k, w \models \{E_{\{a\}}^n p\mid n\in\mbN\}\cup\{\neg\phi\}$, which implies that $\Phi$ is (s-)satisfiable, leading to a contradiction.
\end{proof}

We now proceed to elucidate the absence of arrows in the figure for the case when $|\ag| \geq 2$.

\begin{lemma}\label{lem:exp2}
When $|\ag| \geq 2$ (i.e., when there are at least two agents available in the language), 
\begin{enumerate}
\item For any superlanguage $\langl$ of \langc, and any sublanguage $\langl'$ of $\langdm$, it is not the case that $\langl \preceq \langl'$;
\item For any superlanguage $\langl$ of \langd, and any sublanguage $\langl'$ of $\langcm$, it is not the case that $\langl \preceq \langl'$;
\item For any superlanguage $\langl$ of \langm, and any sublanguage $\langl'$ of $\langcd$, it is not the case that $\langl \preceq \langl'$.
\end{enumerate}
\end{lemma}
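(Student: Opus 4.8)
The plan is to reduce each clause to a single non-expressivity claim and then to separate the three claims by their character. Since $\preceq$ is monotone --- enlarging the source language or shrinking the target language can only make $\preceq$ harder to achieve --- for each clause it suffices to exhibit one formula of the smallest listed superlanguage that has no equivalent in the largest listed sublanguage; that is, I would prove $\langc\not\preceq\langdm$ (clause~1), $\langd\not\preceq\langcm$ (clause~2) and $\langm\not\preceq\langcd$ (clause~3). In each case I will display two pointed (similarity) models that satisfy exactly the same formulas of the weaker language but disagree on the witness formula, so the witness can have no equivalent there. Writing $R_a=\{(s,t)\mid C(a)\subseteq E(s,t)\}$, and similarly $R^D_G,R^M_G$ for the relations induced by $\bigcup_{a\in G}C(a)$ and $\bigcap_{a\in G}C(a)$, agreement will be certified by a bisimulation that is back-and-forth for each accessibility relation occurring in the weaker language; I will use the standard fact that any bisimulation respecting every $R_a$ automatically preserves $C_G$, since $C_G$ is the box over the reflexive--transitive closure of $\bigcup_{a\in G}R_a$.

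Clause~3 should be easiest and I would do it first. Because $\bigcap_{a\in G}C(a)\subseteq C(a)$, the relation $R^M_G$ is coarser than each $R_a$ and may contain an ``$M$-only'' edge, i.e.\ a pair $(s,t)$ with $\bigcap_{a\in G}C(a)\subseteq E(s,t)$ but $C(a)\not\subseteq E(s,t)$ for every $a\in G$ (and $\bigcup_{a\in G}C(a)\not\subseteq E(s,t)$). Taking two agents with $C(a)=\{1,2\}$, $C(b)=\{1,3\}$, I would let $M_1$ consist of a $p$-world $w$ joined by the edge $\{1\}$ to a $\neg p$-world $u$, and let $M_2$ be the single $p$-world $w'$. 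Then $M_1,w\models\neg M_{\{a,b\}}p$ while $M_2,w'\models M_{\{a,b\}}p$, yet the edge to $u$ is invisible to $K_x$, $D_G$ and $C_G$, so the one-point relation $\{(w,w')\}$ is a \langcd-bisimulation and the two states agree on all of \langcd.

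Clause~2 is dual but genuinely harder, and I expect it to be the main obstacle. Here $R^D_G=\bigcap_{a\in G}R_a$ is finer than each $R_a$, and I want models that are $\{R_a,R^M_G\}$-bisimilar yet disagree on $D_Gp$. The naive idea --- one model where a single $\neg p$-successor is reachable by the whole group (falsifying $D_Gp$) versus one where each agent reaches its own $\neg p$-successor (so $D_Gp$ holds vacuously) --- fails, because symmetry forces return edges and the ``combined'' successor, carrying every agent's edge, cannot be matched individually by a ``split'' successor carrying only one. My remedy is a twisted double cover: with $C(a)=\{1,2\}$, $C(b)=\{2,3\}$ over $\ab=\{1,2,3,4\}$, let $M_1$ be the two-state swap $w\!-\!u$ with edge $\{1,2,3\}$ (so $R_a=R_b=R^D_{\{a,b\}}$ all equal the swap and $M_1,w\models\neg D_{\{a,b\}}p$), and let $M_2$ have four states with $a$-edges $w'\!-\!u'$, $w''\!-\!u''$ (label $\{1,2\}$) and $b$-edges twisted, $w'\!-\!u''$, $w''\!-\!u'$ (label $\{2,3\}$), so that $R_a\cap R_b=\emptyset$ and $M_2,w'\models D_{\{a,b\}}p$. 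The map $w',w''\mapsto w$ and $u',u''\mapsto u$ is then a bisimulation for $R_a$, $R_b$ and $R^M_{\{a,b\}}$; verifying the back-and-forth clauses at each pair (using that $R^M_{\{a,b\}}=R_a\cup R_b$ in $M_2$) is the technical heart of the proof.

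For clause~1 the distinguishing operator $C_{\{a\}}$ is infinitary, so instead of a finite bisimulation I would reuse the bounded-depth method already employed in the proof of Lemma~\ref{lem:exp1}(\ref{it:exp-c}). Assuming a \langdm-formula $\phi$ of modal depth $d$ equivalent to $C_{\{a\}}p$, I would compare two symmetric paths: $M_1$ a path of length $d$ all of whose states satisfy $p$ (so $C_{\{a\}}p$ holds at its root), and $M_2$ a path of length $d+1$ whose only $\neg p$-state sits at distance $d+1$ (so $C_{\{a\}}p$ fails at its root). Labelling every path edge by $C(a)=\{1,2\}$ with $C(b)=\{2,3\}$ ensures that along the path only $R_a$, and $R^M_G$ for groups $G$ with $\bigcap_{a\in G}C(a)\subseteq\{1,2\}$, can move, all of them coinciding with the path relation and creating no shortcuts; the pitfall to avoid is choosing abilities with $\bigcap_{a\in G}C(a)=\emptyset$, which would turn $M_G$ into a universal modality and reveal the distant $\neg p$-state in one step. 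A $d$-bisimulation between the two rooted paths then shows they agree on all \langdm-formulas of depth $\le d$, contradicting $M_1\models\phi$ and $M_2\models\neg\phi$. Finally, when $|\ag|>2$, I would give every extra agent the full capability $\ab$, so that its relation is empty and every group containing it collapses (for $D_G$ to the empty relation, for $M_G$ to the same intersection as the corresponding subgroup of $\{a,b\}$); this leaves all the bisimulations intact. A last check that each constructed model is symmetric with no full edge between distinct states confirms that all of them are similarity models, so every clause holds uniformly over both model classes.
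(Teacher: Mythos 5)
Your proposal is correct and follows the same overall strategy as the paper: reduce each clause by monotonicity of $\preceq$ to a single claim ($\langc\not\preceq\langdm$, $\langd\not\preceq\langcm$, $\langm\not\preceq\langcd$), and separate the languages with a pair of pointed similarity models that are bisimilar for the weaker language's relations but disagree on a witness formula. Your clause~3 is essentially the paper's construction (a hidden $\{1\}$-edge visible only to $R^M_{\{a,b\}}$). The other two clauses differ in the concrete witnesses. For clause~2 the paper compares a four-cycle with alternating $\{1,2\}$/$\{1,3\}$ edges (so no edge carries all of $\bigcup_{x\in\{a,b\}}C(x)$ and $D_{\{a,b\}}\bot$ holds) against a single reflexive point, whereas you compare a two-state swap carrying a full $\{1,2,3\}$ edge against a twisted four-state double cover with no full edge; both work, the paper's pair being slightly smaller, and your choice of $D_{\{a,b\}}p$ rather than $D_{\{a,b\}}\bot$ obliges you to fix the valuation at $u,u',u''$ as $\neg p$ (left implicit in your sketch), which the paper's $\bot$-witness avoids. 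For clause~1 the paper simply reuses the argument of Lemma~\ref{lem:exp1}(\ref{it:exp-c}) — finite satisfiability of $\{E^n_{\{a\}}p\}\cup\{\neg C_{\{a\}}p\}$ plus a restriction to states reachable in at most $|\phi|$ steps — while you give a direct comparison of two finite paths of lengths $d$ and $d+1$ via a $d$-bisimulation; your version is more self-contained (and your care in choosing $C(a)=\{1,2\}$, $C(b)=\{2,3\}$ so that $\bigcap_{x\in G}C(x)\neq\emptyset$, preventing $M_G$ from becoming a universal modality, is exactly the pitfall one must avoid), whereas the paper's version outsources the depth bookkeeping to the earlier lemma. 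Your explicit treatment of extra agents via $C(c)=\ab$ is also more careful than the paper's remark that they are ``irrelevant.''
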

\begin{proof}
1. The Proof of Lemma~\ref{lem:exp1}(\ref{it:exp-c}) can be used here to show that $\langc \not\preceq \langdm$ also when $|\ag| \geq 2$.

2. Consider models $M=(W,\ab,E,C,\nu)$ and $M'=(W',\ab,E',C,\nu')$, where $\ab=\{1,2,3\}$, $C(a)=\{1,2\}$, $C(b)=\{1,3\}$ (if there are more agents in the language, they are irrelevant here), and are illustrated below.
\begin{center}
$M$\quad$\xymatrix@R=3em@C=4em{
			*+o[F]{\frac{u_1}{p}} \ar@{-}[d]_{1,2} \ar@{-}[r]^{1,3} & *+o[F]{\frac{u_2}{p}} \ar@{-}[d]^{1,2} \\
			*+o[F]{\frac{u_4}{p}} \ar@{-}[r]_{1,3} & *+o[F]{\frac{u_3}{p}}
		}$
\qquad\qquad
$M'$\quad$\xymatrix@R=3em@C=4em{
			*+o[F]{\frac{u'}{p}} \ar@{-}@(dl,dr)_{1,2,3} 
		}$
\end{center}
Notice that both $M$ and $M'$ are similarity models.
We can show by induction that for any formula $\phi$ of \langcm, $M,u_1\models\phi$ iff $M',u'\models\phi$. On the other hand, $M,u_1\models D_{\{a,b\}}\bot$ but $M',u'\not\models D_{\{a,b\}}\bot$.
It means that no \langcm-formula can discern between $M,u_1$ and $M',u'$, while languages with distributed knowledge can. Thus the lemma holds.

3. Consider similarity models $M=(W,\ab,E,C,\nu)$ and $M'=(W',\ab,E',C,\nu')$, where $\ab=\{1,2,3\}$, $C(a)=\{1,2\}$, $C(b)=\{1,3\}$, and are illustrated below.
\begin{center}
$M$\quad$\xymatrix@R=3em@C=4em{
			*+o[F]{\frac{u_1}{p}} \ar@{-}@(dl,dr)_{1,2,3} \ar@{-}[r]^{1} & *+o[F]{\frac{u_2}{}} \ar@{-}@(dl,dr)_{1,2,3} 
		}$
\qquad\qquad
$M'$\quad$\xymatrix@R=3em@C=4em{
			*+o[F]{\frac{u'}{p}} \ar@{-}@(dl,dr)_{1,2,3} 
		}$
\end{center}
We can show by induction that for any formula $\phi$ of \langcd, $M,u_1\models\phi$ iff $M',u'\models\phi$. Meanwhile, we have $M,u_1\not\models M_{\{a,b\}}p$ and $M',u'\models M_{\{a,b\}}p$. It follows that no \langcd-formula can discern between $M,u_1$ and $M',u'$, while languages with mutual knowledge can. Thus the lemma holds.
\qed
\end{proof}

As per Figure~\ref{fig:expressivty}, Lemma~\ref{lem:exp2} suggests that there is not an arrow or a path of arrows leading from \langc (or any language having an arrow or a path of arrows originating from \langc) to \langdm (or any language with an arrow or a path of arrows pointing to \langdm). Similar relationships exist between \langd and \langcm, and between \langm and \langcd. Furthermore, in Figure~\ref{fig:expressivty}, if there is an arrow or a path of arrows leading from one language to another, and not the other way round, this signifies that the first language is less expressive than the second. If there is no arrow or path of arrows in either direction between two languages, they are deemed incomparable. These observations lead us directly to the following corollary.

\begin{corollary}
When $|\ag|\geq 2$,
\begin{enumerate}
\item $\lang \prec \langc$, $\langd \prec \langcd$, $\langm \prec \langcm$ and $\langdm \prec \langcdm$;
\item $\lang \prec \langd$, $\langc \prec \langcd$, $\langm \prec \langdm$ and $\langcm \prec \langcdm$;
\item $\lang \prec \langm$, $\langc \prec \langcm$, $\langd \prec \langdm$ and $\langcd \prec \langcdm$;
\item \langc, \langd and \langm are pairwise incomparable;
\item \langcd, \langcm and \langdm are pairwise incomparable;
\item \langc is incomparable with \langdm;
\item \langd is incomparable with \langcm;
\item \langm is incomparable with \langcd.
\end{enumerate}
\end{corollary}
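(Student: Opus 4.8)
The plan is to obtain the entire corollary as a purely combinatorial consequence of Lemma~\ref{lem:exp2} together with the trivial observation that every language is at most as expressive as each of its superlanguages. The three clauses of Lemma~\ref{lem:exp2} say precisely that each of the operators $C_G$, $D_G$, $M_G$ is ineliminable in the languages lacking it: since \langdm is the most expressive of the four languages without common knowledge, clause~1 yields $\langl\not\preceq\langl'$ whenever $\langl$ contains common knowledge (i.e.\ is a superlanguage of \langc) and $\langl'$ does not (i.e.\ is a sublanguage of \langdm); clause~2 gives the analogous statement for distributed knowledge, with \langcm playing the role of the maximal language without it; and clause~3 does the same for mutual knowledge via \langcd. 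Every positive $\preceq$ needed below is immediate from the sublanguage relations of Figure~\ref{fig:expressivty}, so all the genuine content lies in the failures of $\preceq$.

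For the strict-inequality clauses (items 1--3) I would argue uniformly. In each listed pair $\langl\prec\langl'$, the language $\langl'$ is a superlanguage of $\langl$ obtained by adjoining exactly one of $C$, $D$, $M$; hence $\langl\preceq\langl'$ holds trivially, while the converse fails by the clause of Lemma~\ref{lem:exp2} matching the added operator. For instance every pair in item~1 adds common knowledge, so $\langl'\not\preceq\langl$ follows from Lemma~\ref{lem:exp2}(1) (reading $\langl'$ as a superlanguage of \langc and $\langl$ as a sublanguage of \langdm); items~2 and~3 are handled identically using clauses~2 and~3. Together with $\langl\preceq\langl'$ this gives $\langl\not\equiv\langl'$, i.e.\ $\langl\prec\langl'$.

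For the incomparability clauses (items 4--8) I would in each case exhibit two distinguishing operators and invoke two clauses, one per direction. For example \langc and \langd are separated because \langc carries common knowledge absent from \langd (giving $\langc\not\preceq\langd$ via clause~1) while \langd carries distributed knowledge absent from \langc (giving $\langd\not\preceq\langc$ via clause~2); the remaining pairs in items~4 and~5, together with the cross pairs in items~6--8 (e.g.\ \langc versus \langdm, separated by $C$ in one direction and by $D$ or $M$ in the other), are treated the same way by pairing the appropriate clauses. In every instance the two languages differ by at least one operator in each direction, which is exactly what is needed.

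The only step requiring care — and the one I expect to be the main obstacle, such as it is — is the bookkeeping: for each pair one must check that the candidate languages really satisfy the superlanguage/sublanguage hypotheses of the relevant clause (for instance, that \langdm is a sublanguage of itself, or that \langcd is simultaneously a superlanguage of \langc and of \langd so that both clauses~1 and~2 apply to it). Since Lemma~\ref{lem:exp2} already supplies every model-theoretic separation, no new models need be built; the corollary reduces to tabulating, for each pair, which operator witnesses each direction of non-inclusion.
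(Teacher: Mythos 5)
Your proposal is correct and matches the paper's own argument: the paper likewise derives the corollary directly from Lemma~\ref{lem:exp2} combined with the trivial sublanguage inclusions (as encoded in Figure~\ref{fig:expressivty}), reading off each strict inequality and each incomparability by pairing the appropriate clauses. Your bookkeeping of which clause witnesses each direction of non-inclusion checks out in every case.
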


\subsection{Semantic correspondence}
\label{sec:correspondence}

In this section, we introduce two truth-preserving translations: one from (similarity) models to classical \emph{relational models} (also known as \emph{Kripke models}) for modal logic, and the other in the reverse direction. These translations not only facilitate a comparison between the proposed logics and the classical epistemic modal logics, but also contribute to the completeness results of the proposed logics (see Section~\ref{sec:completeness1}). Familiarity with standard multi-agent epistemic logic interpreted over relational models is assumed.

In brief, a \emph{relational model} for multi-agent epistemic logic is a tuple $N = (W,R,V)$ where (i) $W$ represents the domain, (ii) $R : \ag \to \wp(W \times W)$ assigns each agent a binary relation $R(a)$ on $W$ (this need not to be an equivalence relation; any binary relation will suffice), and (iii) $V: \prop \to \wp(W)$ assigns each propositional variable a set of states. $N$ is referred to as a \emph{symmetric} relational model if every binary relation $R(a)$ of $N$ is symmetric. The satisfaction of a given formula $\phi$ at a state $s$ in $N$ (denoted $N,s \Vdash \phi$) is defined in the same way as in classical epistemic logic (see, for example, \cite{FHMV1995}). In particular, 
\begin{center}
\vspace{6pt}
\begin{tabular}{lll}
$N,s \Vdash K_a \phi$ &$\iff$& for all $t \in W$, if $(s,t) \in R(a)$ then $N,t \Vdash \phi$\\
$N,s \Vdash C_G \phi$ &$\iff$& for all $k \in \mbN^+$, $M,s \models E_G^k \psi$\\
$N,s \Vdash D_G \phi$ &$\iff$& for all $t \in W$, if $(s,t) \in \bigcap_{a \in G} R(a)$ then $N,t \Vdash \phi$\\
\end{tabular}
\vspace{6pt}
\end{center}
For the conventions of notations in the above (especially for the interpretation of common knowledge), we refer to Definition ~\ref{def:semantics}.

We note that the logics we propose can be reduced to their classical counterparts of multi-agent epistemic logic over relational models. However, as we do not have a parallel for mutual knowledge in classical epistemic logic, we limit ourselves to the logics without extending beyond the language \langcd. We present the following definition and lemma to introduce and explain the results.

\begin{definition}[standard translation]\label{def:trans-r}
A mapping $\cdot^\sigma$ from models to relational models is called the \emph{standard translation}, if for any given model $M = (W,\ab,E,C,\nu)$, $M^\sigma$ is the relational model $(W,R,V)$ with the same domain such that:
\begin{itemize}
\item $R$ is such that for every $a \in \ag$, $R(a) = \{ (s,t) \in W \mid C(a) \subseteq E(s,t)\}$;
\item $V$ is such that for every $p \in \prop$, $V(p) = \{s \in W \mid p \in \nu(s)\}$.
\qed
\end{itemize}
\end{definition}

\begin{lemma}[truth preservation of standard translation]\label{lem:trans-r}
For any \langcd-formula $\phi$, any model $M$ and any state $s$ of $M$, $M,s \models \phi$ iff $M^\sigma,s \Vdash \phi$.
\end{lemma}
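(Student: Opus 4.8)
The plan is to prove this by a straightforward structural induction on the complexity of the \langcd-formula $\phi$, showing that the standard translation preserves truth at every state simultaneously. The key observation that makes everything work is already baked into Definition~\ref{def:trans-r}: the relation $R(a)$ is defined so that $(s,t) \in R(a)$ holds \emph{exactly} when $C(a) \subseteq E(s,t)$, and the valuation $V$ is set up so that $s \in V(p)$ iff $p \in \nu(s)$. Thus the base case and the modal clauses are designed to match by construction.

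First I would fix a model $M = (W,\ab,E,C,\nu)$ and its translation $M^\sigma = (W,R,V)$, and prove by induction on $\phi$ that $M,s \models \phi$ iff $M^\sigma, s \Vdash \phi$ for all $s \in W$ at once (the universal quantifier over $s$ must be inside the induction hypothesis, since the modal clauses refer to other states). The base case $\phi = p$ is immediate from the definition of $V$, and the boolean cases $\neg\psi$ and $(\psi \ra \chi)$ follow routinely from the induction hypothesis. For $\phi = K_a\psi$, I would unfold both satisfaction clauses: $M,s \models K_a\psi$ says that $M,t \models \psi$ for every $t$ with $C(a) \subseteq E(s,t)$, while $M^\sigma, s \Vdash K_a\psi$ says that $M^\sigma,t \Vdash \psi$ for every $t$ with $(s,t) \in R(a)$. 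Since $(s,t)\in R(a)$ iff $C(a)\subseteq E(s,t)$ by definition of $R$, the two ranges of quantification coincide, and the induction hypothesis applied to $\psi$ closes the case.

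The distributed-knowledge case $\phi = D_G\psi$ is the one requiring the most care, and I expect it to be the main obstacle, though a mild one. Here I must check that $\bigcup_{a \in G} C(a) \subseteq E(s,t)$ (the condition in Definition~\ref{def:semantics}) is equivalent to $(s,t) \in \bigcap_{a \in G} R(a)$ (the condition in the relational semantics). The point is that $\bigcup_{a\in G} C(a) \subseteq E(s,t)$ holds iff $C(a) \subseteq E(s,t)$ for every $a \in G$, which by the definition of $R$ is exactly $(s,t) \in R(a)$ for every $a \in G$, i.e.\ $(s,t) \in \bigcap_{a \in G} R(a)$. So despite the deliberate mismatch in intuition (unions of abilities on one side, intersections of relations on the other), the two semantics land on the same set of accessible states, and the induction hypothesis finishes the case.

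Finally, for $\phi = C_G\psi$, since $C_G$ is interpreted on both sides as the infinite conjunction $\bigwedge_{n \in \mbN^+} E_G^n\psi$ with $E_G\chi$ abbreviating $\bigwedge_{a\in G} K_a\chi$, the case reduces to the already-established $K_a$ clause: each $E_G^n\psi$ is a \langcd-formula built from the $K_a$ operators, so applying the induction hypothesis (or rather a sub-induction on $n$, using the $K_a$ case) gives $M,s \models E_G^n\psi$ iff $M^\sigma,s \Vdash E_G^n\psi$ for every $n$, whence the two infinite conjunctions agree. This completes the induction and establishes the lemma.
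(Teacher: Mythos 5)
Your proposal is correct and follows essentially the same route as the paper: induction on $\phi$, with the atomic and boolean cases immediate, the $K_a$ and $D_G$ cases resting on the observation that $(s,t)\in R(a)$ iff $C(a)\subseteq E(s,t)$ (and hence $\bigcup_{a\in G}C(a)\subseteq E(s,t)$ iff $(s,t)\in\bigcap_{a\in G}R(a)$), and the $C_G$ case reduced to the $K_a$ case via the shared fixed-point/iterated-$E_G$ semantics. Your explicit sub-induction on $n$ for $C_G$ is in fact slightly more careful than the paper's one-line dismissal of that case, but it is the same argument.
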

\begin{proof}
By induction on $\phi$. Let $M=(W,\ab,E,C,\nu)$ be any model and $M^\sigma = (W, R, V)$ its standard translation. The atomic case is quite obvious by the definition of standard translation. Boolean cases follow easily from the definition of satisfaction. The case for common knowledge holds because the semantic definition is the same -- both defined inductively upon everyone's knowledge (and in turn upon individual knowledge). The cases that need elaboration are those for individual and distributed knowledge:
$$\begin{array}{@{}rcll@{}}
M,s \models K_a \psi & \iff & \text{for all $t \in W$, if $C(a) \subseteq E(s,t)$ then $M,t \models \psi$} &\text{(Def.~\ref{def:semantics})}\\
& \iff & \text{for all $t \in W$, if $(s,t) \in R(a)$ then $M,t \models \psi$} &\text{(Def.~\ref{def:trans-r})}\\
& \iff & \text{for all $t \in W$, if $(s,t) \in R(a)$ then $M^\sigma,t \Vdash \psi$} & \text{(IH)}\\
& \iff & M^\sigma,s \Vdash K_a\psi. &\\[1em]
M,s \models D_G \psi & \iff & \text{for all $t\in W$, if $\bigcup_{a\in G}C(a) \subseteq E(s,t)$, then $M,t \models \psi$} &\text{(Def.~\ref{def:semantics})}\\
& \iff & \text{for all $t\in W$, if $C(a)\subseteq E(s,t)$ for all $a\in G$, then $M,t \models \psi$} & \\
& \iff & \text{for all $t\in W$, if $(s,t) \in R(a)$ for all $a\in G$, then $M,t \models \psi$} &\text{(Def.~\ref{def:trans-r})}\\
& \iff & \text{for all $t\in W$, if $(s,t) \in \bigcap_{a \in G} R(a)$, then $M,t \models \psi$} & \\
& \iff & \text{for all $t\in W$, if $(s,t) \in \bigcap_{a \in G} R(a)$, then $M^\sigma,t \Vdash \psi$} & \text{(IH)}\\
& \iff & M^\sigma,s \Vdash D_G\psi. &\\
\end{array}
$$
The induction holds, and so we achieve the lemma.
\end{proof}

Perhaps less apparent is our ability to also reduce the classical multi-agent epistemic logic over relational models to the logics studied in this paper by applying a slight modification to the set of epistemic abilities.

\begin{definition}[reverse translation]\label{def:trans-w}
A mapping $\cdot^\rho$ from relational models to models is called the \emph{reverse translation}, if for any given relational model $N = (W,R,V)$, $N^\rho$ is the model $(W,\ag,E,C,\nu)$ with the same domain and:
\begin{itemize}
\item $E$ is such that for all $s,t \in W$, $E(s,t) = \{ a \in \ag \mid (s,t) \in R(a)\}$,
\item $C$ is such that for all $a \in \ag$, $C(a) = \{a\}$, and
\item $\nu$ is such that for all $s \in W$, $\nu(s) = \{ p \in \prop \mid s \in V(p)\}$.
\qed
\end{itemize}
\end{definition}
In the translated model $N^\rho$ of the aforementioned definition, the set of epistemic abilities is appointed as \ag. We use agents as labels of edges, which can intuitively be understood as an agent's inability to distinguish the ongoing state from current state when considering their epistemic abilities as a whole.
In the subsequent lemma, we demonstrate that this reverse translation preserves truth.

\begin{lemma}[truth preservation of reverse translation]\label{lem:trans-w}
For any formula $\phi$, any relational model $N$ and any state $s$ of $N$, $N,s \Vdash \phi$ iff $N^\rho,s \models \phi$.
\end{lemma}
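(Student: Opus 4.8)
The plan is to prove the statement by induction on the structure of the formula $\phi$, exactly parallel to the proof of Lemma~\ref{lem:trans-r} but exploiting the specific shape of the reverse translation. The crucial observation is that under $N^\rho = (W,\ag,E,C,\nu)$ the abilities are the agents themselves, with $C(a) = \{a\}$ and $E(s,t) = \{a \in \ag \mid (s,t) \in R(a)\}$. Hence for any single agent $a$ we have the equivalence $C(a) \subseteq E(s,t)$ iff $\{a\} \subseteq E(s,t)$ iff $a \in E(s,t)$ iff $(s,t) \in R(a)$. This identity is the engine of the whole proof: it says that the abilities-based uncertainty relation induced by $C$ and $E$ in $N^\rho$ coincides exactly with the original relation $R(a)$.

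First I would set up the base case (propositional variables), which is immediate from the definition of $\nu$ in $N^\rho$, and the boolean cases, which follow routinely from the satisfaction clauses for $\neg$ and $\ra$. Next I would handle individual knowledge $K_a\psi$: using the identity above together with the induction hypothesis, $N,s \Vdash K_a\psi$ iff for all $t$ with $(s,t)\in R(a)$ we have $N,t \Vdash \psi$, iff for all $t$ with $C(a)\subseteq E(s,t)$ we have $N^\rho,t \models \psi$, iff $N^\rho,s \models K_a\psi$. Common knowledge $C_G\psi$ then follows because, in both semantics, it is defined inductively from everyone's knowledge $E_G$, which is in turn a conjunction of individual-knowledge operators already covered by the previous case; so no separate work is needed beyond invoking the $K_a$ case.

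The step requiring genuine care -- and the main obstacle -- is the case for distributed knowledge $D_G\psi$. Here I must check that $\bigcup_{a\in G} C(a) \subseteq E(s,t)$ in $N^\rho$ matches $(s,t) \in \bigcap_{a\in G} R(a)$ in $N$. The key computation is that $\bigcup_{a\in G} C(a) = \bigcup_{a\in G}\{a\} = G$, so the antecedent becomes $G \subseteq E(s,t)$, which by the definition of $E$ says precisely that $(s,t)\in R(a)$ for every $a\in G$, i.e.\ $(s,t)\in\bigcap_{a\in G} R(a)$. This is where the reverse translation's design (abilities $=$ singletons of agents, union of abilities $=$ the group) aligns the union-based semantics of $D_G$ with the classical intersection-based semantics. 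With this equivalence established, the distributed-knowledge case closes just like the individual-knowledge case, using the induction hypothesis on $\psi$.

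One point worth flagging is the scope of the lemma: since the reverse translation only populates abilities drawn from \ag and has no counterpart for mutual knowledge, the statement is restricted to formulas of \langcd (the claim quantifies over ``any formula $\phi$'', but implicitly within the languages for which both semantics are defined). I would therefore phrase the induction over the grammar of \langcd, covering the constructors $p$, $\neg$, $\ra$, $K_a$, $C_G$, and $D_G$, and remark that these are precisely the cases that arise. Having verified each constructor, the induction is complete and the truth-preservation equivalence $N,s \Vdash \phi$ iff $N^\rho,s \models \phi$ holds for all such $\phi$.
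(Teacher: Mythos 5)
Your proposal is correct and follows essentially the same route as the paper's proof: induction on $\phi$, with the trivial atomic, boolean, and common-knowledge cases dispatched briefly, and the $K_a$ and $D_G$ cases closed via the identity $C(a)=\{a\}$, $E(s,t)=\{a\mid (s,t)\in R(a)\}$, so that $\bigcup_{a\in G}C(a)=G\subseteq E(s,t)$ iff $(s,t)\in\bigcap_{a\in G}R(a)$. Your remark restricting the induction to \langcd-formulas is a sensible clarification consistent with the paper's stated scope for the translations.
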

\begin{proof}
Let $N = (W,R,V)$ and its reverse translation $N^\rho = (W,\ag,E,C,\nu)$. We show the lemma by induction on $\phi$. The cases for the atomic propositions, boolean connections and common knowledge are easy to verify. Here we only show the cases for the individual and distributed knowledge:
\[
\begin{array}{rcll}
N,s \Vdash K_a\psi & \iff & \text{for all $t \in W$, if $(s,t) \in R(a)$ then $N,t \Vdash \psi$}& \\
& \iff & \text{for all $t \in W$, if $a \in E(s,t)$ then $N,t \Vdash \psi$} & \\
& \iff &  \text{for all $t \in W$, if $C(a) \subseteq E(s,t)$ then $N,t \Vdash \psi$} & \\
& \iff & \text{for all $t \in W$, if $C(a) \subseteq E(s,t)$ then $N^\rho,t \models \psi$} &\\
& \iff & N^\rho,s \models K_a\psi &\\[1em]
N,s \Vdash D_G\psi & \iff & \text{for all $t \in W$, if $(s,t) \in \bigcap_{a \in G} R(a)$, then $N,t \Vdash \psi$}& \\
& \iff & \text{for all $t\in W$, if $(s,t) \in R(a)$ for all $a\in G$, then $N,t \Vdash \psi$} & \\
& \iff & \text{for all $t\in W$, if $C(a)\subseteq E(s,t)$ for all $a\in G$, then $N,t \Vdash \psi$} & \\
& \iff & \text{for all $t\in W$, if $\bigcup_{a \in G} C(a)\subseteq E(s,t)$, then $N,t \models \psi$} &\\
& \iff & \text{for all $t\in W$, if $\bigcup_{a \in G} C(a)\subseteq E(s,t)$, then $N^\rho,t \models \psi$} &\\
& \iff & N^\rho,s \models D_G\psi &\\
\end{array}
\]
This finishes the proof.
\end{proof}

Additionally, the following lemma asserts that the standard and reverse translations are closed under symmetry.

\begin{lemma}\label{lem:sem-trans}
The following results hold for the translations:
\begin{enumerate}
\item\label{it:sound} Given a symmetric model $M$, its standard translation $M^\sigma$ is a symmetric relational model;
\item\label{it:comp} Given a symmetric relational model $N$, its reverse translation $N^\rho$ is a symmetric model.
\end{enumerate}
\end{lemma}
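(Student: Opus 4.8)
The plan is to prove the two closure claims directly from the definitions of the translations, since both reduce to tracking how symmetry of the ability-labeling or the relation transfers across the translation maps. For clause~\ref{it:sound}, I would start from a symmetric model $M = (W,\ab,E,C,\nu)$, meaning $E(s,t) = E(t,s)$ for all $s,t \in W$ (Definition~\ref{def:graphs}). By Definition~\ref{def:trans-r}, its standard translation $M^\sigma = (W,R,V)$ has $R(a) = \{(s,t) \mid C(a) \subseteq E(s,t)\}$ for each agent $a$. To show $R(a)$ is symmetric, I would fix an arbitrary pair $(s,t) \in R(a)$, so $C(a) \subseteq E(s,t)$; applying symmetry of $E$ gives $C(a) \subseteq E(t,s)$, which is exactly $(t,s) \in R(a)$. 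Since $a$ was arbitrary, every relation of $M^\sigma$ is symmetric, so $M^\sigma$ is a symmetric relational model.

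For clause~\ref{it:comp}, the argument runs in the reverse direction. Starting from a symmetric relational model $N = (W,R,V)$, so each $R(a)$ satisfies $(s,t) \in R(a) \Leftrightarrow (t,s) \in R(a)$, I would unfold Definition~\ref{def:trans-w}: the reverse translation $N^\rho = (W,\ag,E,C,\nu)$ sets $E(s,t) = \{a \in \ag \mid (s,t) \in R(a)\}$. To verify that $N^\rho$ is a symmetric model, I must check that $E$ itself is symmetric, i.e., $E(s,t) = E(t,s)$ for all $s,t$. Fixing $s,t \in W$, an agent $a$ lies in $E(s,t)$ iff $(s,t) \in R(a)$ iff (by symmetry of $R(a)$) $(t,s) \in R(a)$ iff $a \in E(t,s)$, so the two sets coincide. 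Hence $E$ satisfies symmetry and $N^\rho$ is a symmetric model.

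Neither direction presents a genuine obstacle; both are immediate unfoldings of the definitions, and the only thing to be careful about is matching the direction of the subset/membership condition with the correct ordered pair. The one point worth a remark is that symmetry is the \emph{only} similarity-graph condition preserved at this level of generality: the reverse translation's fixed capability assignment $C(a) = \{a\}$ makes positivity fail in general, since $E(s,s) \supseteq \ab$ would require every agent to relate $s$ to itself, which a relational model need not guarantee. This is presumably why the lemma is stated for symmetric models and symmetric relational models rather than for similarity models, and it is consistent with Proposition items~2,~6 and~9, where the $\NEG$-introspection schemas are s-valid (relying on symmetry) but not valid.
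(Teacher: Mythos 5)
Your proof is correct and follows essentially the same route as the paper's: both clauses are the same direct unfolding of Definitions~\ref{def:trans-r} and \ref{def:trans-w}, which the paper presents as short chains of equivalences ($(s,t)\in R(a) \iff C(a)\subseteq E(s,t) \iff C(a)\subseteq E(t,s) \iff (t,s)\in R(a)$, and likewise for membership in $E$). Your closing aside slightly misstates why positivity is not preserved --- the obstruction is a pair of \emph{distinct} states related by every agent, whereas $E(s,s)\supseteq\ab$ concerns the converse of positivity (reflexivity), which the paper deliberately drops --- but this remark is outside the lemma and does not affect the correctness of your proof.
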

\begin{proof}
(\ref{it:sound}) Let $M=(W,\ab,E,C,\nu)$ be a symmetric model,  and its standard translation $M^\sigma = (W, R, V)$. For any $a\in \ag$ and $s,t\in W$, we have:
\[
\begin{array}{llll}
	(s,t)\in R(a) & \iff & C(a)\subseteq E(s,t) & \text{(Def.~\ref{def:trans-r})}\\
	& \iff & C(a)\subseteq E(t,s) & \text{(by symmetry, see Def.~\ref{def:graphs})}\\
	& \iff & (t,s)\in R(a). & \text{(Def.~\ref{def:trans-r})}\\
\end{array}
\]
Thus $M^\sigma$ is a symmetric relational model.

(\ref{it:comp}) Let $N = (W,R,V)$ be a symmetric relational model, and its reverse translation $N^\rho = (W,\ag,E,C,\nu)$. For any $a \in \ag$ and $s,t\in W$, we have:
\[
\begin{array}{llll}
	a\in E(s,t) & \iff & (s,t)\in R(a) & \text{(Def.~\ref{def:trans-w})}\\
	& \iff & (t,s)\in R(a) & \text{(since $R(a)$ is symmetric)}\\
	& \iff & a\in E(t,s). & \text{(Def.~\ref{def:trans-w})}\\
\end{array}
\]
Hence $N^\rho$ is a symmetric model.
\end{proof}

Lastly, we introduce a lemma for a distinctive type of ``translation,'' transitioning from a symmetric model to a similarity model, which preserves the truth of formulas.

\begin{lemma}\label{lem:tr-sim-model}
Given a symmetric model $M$ and a state $s$ of $M$, there exists an \emph{equivalent} similarity model sharing the same domain with $M$; namely, a model $M'$ with the same domain as $M$, such that for all \langcdm-formulas $\phi$, $M,s \models \phi$ iff $M',s \models \phi$.
\end{lemma}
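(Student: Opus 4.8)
The plan is to notice that the truth of any \langcdm-formula depends on the edge function $E$ only through set-inclusion tests of the form $X \subseteq E(s,t)$, where $X$ ranges over the capability sets $C(a)$, their unions $\bigcup_{a\in G}C(a)$ (for distributed knowledge) and their intersections $\bigcap_{a\in G}C(a)$ (for mutual knowledge). Crucially, every such test set $X$ is contained in the union $U=\bigcup_{a\in\ag}C(a)$ of all capabilities, since $G\subseteq\ag$ and $\bigcap_{a\in G}C(a)\subseteq C(a_0)\subseteq U$ for any $a_0\in G$. The sole reason $M$ might fail to be a similarity model is a violation of positivity: a symmetric model may contain distinct states $s\neq t$ with $E(s,t)=\ab$, and $U$ may or may not coincide with \ab.

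First I would sidestep positivity by \emph{enlarging} the ability set rather than altering $E$. Concretely, pick a fresh ability $\ast\notin\ab$, set $\ab'=\ab\cup\{\ast\}$, and define $M'=(W,\ab',E,C,\nu)$, retaining the same domain, edge function, capability function and valuation. Since $\ast$ appears neither in any $E(s,t)$ nor in any $C(a)$, every value $E(s,t)$ stays a subset of \ab, hence a proper subset of $\ab'$.

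Next I would verify that $M'$ is a similarity model. Symmetry is inherited directly from $M$ because $E$ is unchanged. Positivity now holds vacuously: for all $s,t$ we have $E(s,t)\subseteq\ab\subsetneq\ab'$, so $E(s,t)=\ab'$ is impossible, and the implication ``$E(s,t)=\ab'\Rightarrow s=t$'' is trivially satisfied (this even covers the degenerate case $\ab=\emptyset$, where $\ab'=\{\ast\}$ and all edges are empty).

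Finally, truth preservation follows by a routine induction on $\phi$. The atomic and Boolean cases are immediate because $\nu$ is untouched, and common knowledge reduces to individual knowledge via the fixed-point unfolding, so it suffices to treat $K_a$, $D_G$ and $M_G$. For each of these the satisfaction clause only checks whether $C(a)$, $\bigcup_{a\in G}C(a)$, or $\bigcap_{a\in G}C(a)$ is included in $E(s,t)$; since $E$ and $C$ are literally identical in $M$ and $M'$, these tests have the same outcomes, yielding $M,s\models\phi$ iff $M',s\models\phi$. I do not expect a genuine technical obstacle; the only point requiring insight is choosing to \emph{grow} \ab instead of \emph{trimming} $E$. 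Trimming $E(s,t)$ down from \ab (for instance by replacing it with $E(s,t)\cap U$) would work when $U\subsetneq\ab$ but fails when $U=\ab$, since deleting an ability that lies in some $C(a)$ could flip the truth value of a knowledge formula; adding one fresh, capability-free ability instead repairs positivity while provably leaving every inclusion test intact.
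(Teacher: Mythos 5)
Your proposal is correct and follows essentially the same route as the paper's own proof: both add a single fresh ability to $\ab$ while leaving $W$, $E$, $C$ and $\nu$ untouched, observe that positivity then holds vacuously and symmetry is inherited, and conclude truth preservation by a routine induction since every inclusion test $X\subseteq E(s,t)$ is unaffected. Your added remark on why trimming $E$ instead would fail is a nice sanity check but does not change the argument.
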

\begin{proof}
Let $M = (W, \ab,E,C,\nu)$ be a symmetric model, and define $M' = (W, \ab \cup \{b\},E,C,\nu)$ with $b$ a new epistemic ability (i.e., $b \notin \ab$). $M'$ is clearly a symmetric model sharing the same domain with $M$, since it shares the same edge function $E$ as well. Furthermore, $M'$ satisfies positivity, as there cannot be any $s,t\in W$ such that $E(s,t) = \ab \cup \{b\}$ and $s \neq t$. Therefore $M'$ is a similarity model. Moreover, we can show by induction that for any formula $\phi$, $M, s \models \phi$ iff $M', s \models \phi$. The proof is straightforward in all cases. In particular, in the cases for the modal operators $K_a$, $C_G$, $D_G$ and $M_G$, since $E$ and $C$ keep the same in both models (the new element $b$ does not show up in the range of $C$), so the equivalence of $\phi$ can be easily obtained by the definitions and the induction hypothesis. 
\end{proof}

\section{Axiomatization}
\label{sec:ax}

The axiomatic systems that we will demonstrate to be sound and complete axiomatizations for the respective logics are listed in Table~\ref{fig:ax}. Let us delve into the details of these axiomatic systems.

\begin{table}[htbp]
\caption{Overview of the logics and their corresponding axiomatic systems examined in this paper.\label{fig:ax}}\centering
\begin{tblr}{%
	colspec={*{4}{l}},
	row{odd}={gray!10},
	row{1}={gray!20},
    vline{2-Y} = {1}{0.3pt,gray!50},
    vline{2-Y} = {2-Z}{0.3pt,gray!30},
    hline{1,2,Z} = {0.1pt,azure5},
}
Logic&Axiomatic system
&
Logic&Axiomatic system
\\
\l&\K&
\ls&\KB
\\
\lc&\KC&
\lsc&\KBC
\\
\ld&\KD&
\lsd&\KBD
\\
\lm&\KM&
\lsm&\KBM
\\
\lcd&\KCD&
\lscd&\KBCD
\\
\lcm&\KCM&
\lscm&\KBCM
\\
\ldm&\KDM&
\lsdm&\KBDM
\\
\lcdm&\KCDM&
\lscdm&\KBCDM
\\
\end{tblr}
\end{table}

\subsection{Detailed Description of Sixteen Axiomatic Systems}

The \K is a widely recognized axiomatic system for modal logic (here it refers to the multi-agent version with each $K_a$ functioning as a box operator). For simplicity, the axiom schemes are referred to as axioms in this context. System \K include PC and K, along with the rules MP and N  (see Figure~\ref{fig:K}). For a comprehensive understanding of these axiomatizations for modal logic, please refer to, say, \cite{BdRV2001}.

The axiom system \KB is derived by augmenting the system \K with an additional axiom B (see Figure~\ref{fig:K}). In this context, we represent \KB as \K $\oplus$ B, where the symbol $\oplus$ acts like a union operation for the set of axioms and/or rules.

\begin{figure}[htbp]
\centering
\begin{framed}
\begin{tblr}[t]{ll}
\SetCell[c=2]{c}Axioms of \K or \KB\\
\hline
(PC)&all instances of propositional tautologies\\
(K)&$K_a(\phi \ra \psi) \ra (K_a\phi \ra K_a\psi)$\\
\hline[dashed]
(B)&$\phi \ra K_a \neg K_a\neg \phi$ \hfill (\KB only)\\
\end{tblr}
\qquad
\begin{tblr}[t]{ll}
\SetCell[c=2]{c}Rules of \K and \KB\\
\hline
(MP)&from $\phi$ and $\phi \ra \psi$ infer $\phi$\\
(N)&from $\phi$ infer $K_a\phi$\\
\end{tblr}
\end{framed}
\caption{Axiomatic systems \K and \KB. System \KB includes all the axioms and rules listed, while system \K includes all except the axiom B.\label{fig:K}}
\end{figure}

Common knowledge is characterized by two inductive principles (Figure~\ref{fig:sysC}), represented by an axiom and a rule, which can be found in \cite{FHMV1995}. Here, the system \KC is represented as $\K \oplus \sysC$, and \KBC as $\KB \oplus \sysC$.

\begin{figure}[htbp]
\centering
\small
\begin{framed}
\begin{tblr}[t]{ll}
\SetCell[c=2]{c}System \sysC\\
\hline
(C1)&$C_{G}\phi \ra \bigwedge_{a\in G} K_a(\phi \wedge C_G \phi)$\\
(C2)&from $\phi \ra \bigwedge_{a\in G} K_a (\phi \wedge \psi)$ infer $\phi \ra C_G\psi$\\
\end{tblr}
\end{framed}
\caption{The system characterizing common knowledge.\label{fig:sysC}}
\end{figure}

Distributed knowledge is characterized by a set of additional axioms, the number and type of which depend on the base system. If the base system is \K, the characterization axioms for distributed knowledge form the system \sysD (as shown in Figure~\ref{fig:sysD}). The resulting system is denoted as $\KD = \K \oplus \sysD$, which we aim to prove as a sound and complete axiom system for the logic \ld. If the base system is \KB, the characterization axioms for distributed knowledge form the system \sysBD (depicted in Figure~\ref{fig:sysD}). The resulting system is then denoted as $\KBD = \KB \oplus \sysBD$, which we aim to validate as a sound and complete system for the logic \lsd.

\begin{figure}[htbp]
\centering
\small
\begin{framed}
\dashbox{
\begin{tblr}[t]{ll}
\SetCell[c=2]{c}System \sysD\\
\hline
(K$_D$)&$D_G(\phi \ra \psi) \ra (D_G\phi \ra D_G\psi)$\\
(D1)&$D_{\{a\}}\phi\lra K_a\phi$\\
(D2)&$D_G\phi\ra D_H\phi$ with $G\subseteq H$\\
\end{tblr}
}
\qquad
\dashbox{
\begin{tblr}[t]{ll}
\SetCell[c=2]{c}System \sysBD\\
\hline
&all the axioms of \sysD\\
(BD)&$\phi\ra D_G\neg D_G\neg\phi$\\
\end{tblr}
}
\end{framed}
\caption{Characterization axioms for distributed knowledge. Depending on whether the base system is \K or \KB, we have the sets \sysD and \sysBD respectively.\label{fig:sysD}}
\end{figure}

Mutual knowledge is characterized by the axiomatic systems \sysM and \sysBM (as illustrated in Figure~\ref{fig:sysM}). While these systems might appear similar to \sysD and \sysBD respectively, there is a critical distinction between the axioms M2 and D2 -- the groups $G$ and $H$ interchange their positions. The validity of these axioms can confirmed straightforwardly, and it is interesting to note how the varied sequence of $G$ and $H$ aligns perfectly with the union/intersection of epistemic abilities as seen in the semantics. Correspondingly, $\KM$ is represented as $\K \oplus \sysM$, and $\KBM$ as $\KB \oplus \sysBM$.

\begin{figure}[htbp]
\centering
\small
\begin{framed}
\dashbox{
\begin{tblr}[t]{ll}
\SetCell[c=2]{c}System \sysM\\
\hline
(K$_M$)&$M_G(\phi \ra \psi) \ra (M_G\phi \ra M_G\psi)$\\
(M1)&$M_{\{a\}}\phi\lra K_a\phi$\\
(M2)&$M_G\phi\ra M_H\phi$ with $H\subseteq G$\\
\end{tblr}
}
\qquad
\dashbox{
\begin{tblr}[t]{ll}
\SetCell[c=2]{c}System \sysBM\\
\hline
&all the axioms of \sysM\\
(BM)&$\phi \ra M_G \neg M_G\neg\phi$\\
\end{tblr}
}
\end{framed}
\caption{Characterization axioms for mutual knowledge. Depending on whether the base system is \K or \KB, we have the sets \sysM and \sysBM. A key difference between M2 and D2 from \sysD is the exchange of positions between $G$ and $H$.\label{fig:sysM}}
\end{figure}

Moving towards more complex axiomatic systems, they are constructed in a similar manner. For any given string $\chi$ comprising elements from the set $\mathbf{\{C,D,M\}}$:
\begin{itemize}
\item The axiomatic system $\mathbf{K(\chi)}$ consists of all axioms and rules of \K, along with those of the systems denoted by each character in string $\chi$;
\item The axiomatic system $\mathbf{KB(\chi)}$ integrates all axioms and rules of \KB and the systems $\mathbf{B(X)}$ for each $\mathbf{X}$ present in $\chi$ (the system $\mathbf{B(C)}$ is considered to be $\mathbf{C}$).
\end{itemize}
To illustrate, when $\chi$ is the string ``$\mathbf{CM}$'', $\KCM$ is represented $\K \oplus \sysC \oplus \sysM$, and $\KBCM$ as $\KB \oplus \sysC \oplus \sysBM$. For two extreme cases, firstly, when $\chi$ is an empty string, $\mathbf{K(\chi)}$ simply equates to \K and $\mathbf{KB(\chi)}$ equals $\KB$. Secondly, when $\chi$ is the string ``$\mathbf{CDM}$'', then $\KCDM$ equates to $\K \oplus \sysC \oplus \sysD \oplus \sysM$ and $\KBCDM$ equals $\KB \oplus \sysC \oplus \sysBD \oplus \sysBM$.

We now turn our attention to validating these axiomatic systems as appropriate axiomatizations for the corresponding logics, as outlined in Figure~\ref{fig:ax}. Generally, these are characterized by soundness and completeness results. Soundness signifies that all the theorems of an axiomatic system are valid sentences of the corresponding logic. This can be simplified to the task of verifying that all the axioms of the system are valid, and that all the rules preserve this validity. The soundness of the proposed axiomatic systems can be confirmed with relative ease. Though we omit the proof, we state it as the following theorem. We will follow this up with the completeness results in the subsequent section.

\begin{theorem}[soundness]
Every axiomatic system introduced in this section is sound for its corresponding logic, as listed in Figure~\ref{fig:ax}.
\qed
\end{theorem}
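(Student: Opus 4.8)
The plan is to prove soundness for each of the sixteen axiomatic systems by the standard two-part recipe: verify that every axiom is valid (resp.\ s-valid) over the appropriate class of models, and verify that every inference rule preserves this validity. Since each system $\mathbf{K(\chi)}$ or $\mathbf{KB(\chi)}$ is assembled as a union $\K \oplus \cdots$ or $\KB \oplus \cdots$ of the component systems $\sysC$, $\sysD$ (or $\sysBD$), and $\sysM$ (or $\sysBM$), it suffices to establish validity of each axiom and rule in these building blocks once, over the class of all models for the non-$\mathbf{B}$ systems and over the class of similarity models for the $\mathbf{B}$ systems. Soundness of the full systems then follows immediately, because a formula derivable in a union of systems is obtained by chaining together derivation steps each of which is already known to preserve (s-)validity.

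Concretely, I would first dispatch the base system \K: (PC) is valid as all propositional tautologies hold pointwise, (K) is the distribution axiom whose validity follows directly from the semantic clause for $K_a$ and elementary reasoning about the condition $C(a)\subseteq E(s,t)$, (MP) preserves truth at every state trivially, and (N) preserves validity because if $\phi$ holds at every state of every model then in particular it holds at every $t$ with $C(a)\subseteq E(s,t)$, giving $K_a\phi$. For the axiom (B) of \KB I would invoke symmetry: this is exactly the content already checked in the Proposition (clause 2), where $\phi \ra K_a \neg K_a \neg \phi$ is shown s-valid using $C(a)\subseteq E(s,t) \Rightarrow C(a)\subseteq E(t,s)$. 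Next, for $\sysC$ I would verify that (C1) reflects the fixed-point unfolding of the $C_G$ clause (validity already asserted in the Proposition, clause 3), and that the induction rule (C2) preserves validity by an induction on $n$ showing that the hypothesis forces $\phi \ra E_G^n\psi$ for every $n$, hence $\phi \ra C_G\psi$. For $\sysD$ the axioms (K$_D$), (D1), (D2) correspond respectively to normality of $D_G$, the single-agent collapse $D_{\{a\}}\phi \lra K_a\phi$, and the monotonicity $D_G\phi \ra D_H\phi$ for $G\subseteq H$; all three follow from the $D_G$ clause together with the observation that $G\subseteq H$ yields $\bigcup_{a\in G}C(a) \subseteq \bigcup_{a\in H}C(a)$, so the antecedent condition is easier to satisfy for the larger group. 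The axioms of $\sysM$ are handled dually, the only subtlety being that (M2) reverses the inclusion to $H\subseteq G$, which matches the fact that $H\subseteq G$ gives $\bigcap_{a\in G}C(a)\subseteq\bigcap_{a\in H}C(a)$; the reversal in the semantics (intersection for $M_G$ versus union for $D_G$) is precisely what flips the subset condition. The additional axioms (BD) and (BM) of the $\mathbf{B}$-systems are again instances already verified s-valid in the Proposition (clauses 6 and 9) via symmetry.

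I do not expect a genuine obstacle here, which is why the authors omit the proof; the work is entirely routine once the semantic clauses are written down. The only point demanding minor care is the induction-rule (C2): unlike the axioms, a rule must be shown to preserve validity rather than be valid itself, and one must argue that validity of the premise $\phi \ra \bigwedge_{a\in G}K_a(\phi\wedge\psi)$ propagates to $\phi \ra E_G^n\psi$ for all $n$ by induction before concluding $\phi \ra C_G\psi$. A secondary bookkeeping point is to keep straight which systems are validated against all models and which against similarity models, so that the $\mathbf{B}$-flavored axioms (B), (BD), (BM) are only ever used where symmetry (and positivity) is available; this is exactly the bookkeeping encoded in Figure~\ref{fig:ax}, so the argument reduces to checking the two columns separately with the component-wise verifications above.
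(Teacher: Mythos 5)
Your proposal is correct and is exactly the argument the paper has in mind: the paper omits the proof entirely, saying only that soundness "can be simplified to the task of verifying that all the axioms of the system are valid, and that all the rules preserve this validity," which is precisely your component-wise verification of \K, \KB, \sysC, \sysD/\sysBD, \sysM/\sysBM over the appropriate model class, with several of the key axioms (B, C1, D1, D2, M1, M2, BD, BM) already checked in the paper's earlier Proposition. One wording slip worth fixing: for (D2) the antecedent $\bigcup_{a\in H}C(a)\subseteq E(s,t)$ is \emph{harder} (not easier) to satisfy for the larger group $H$, so $D_H$ quantifies over \emph{fewer} states and is the weaker statement --- the inclusion $\bigcup_{a\in G}C(a)\subseteq\bigcup_{a\in H}C(a)$ you state is correct and gives the result, but the gloss reverses the direction.
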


\subsection{Completeness}\label{sec:completness}

In this section, we aim to demonstrate the completeness of all sixteen axiomatic systems that were introduced earlier. When referring to a given axiomatic system  $\mathbf{\Lambda}$ for a logic $\lambda$, based on a defined language $L$, the term ``strong completeness'' of $\mathbf{\Lambda}$ (for the logic $\lambda$) is equivalent to: for any set $\Phi$ of $L$-formulas and any $L$-formula $\phi$, if $\phi$ is a logical consequence of $\Phi$ in $\lambda$ (denoted as $\Phi \models_\lambda \phi$), then $\phi$ is derivable in $\mathbf{\Lambda}$ from $\Phi$ (expressed as $\Phi \vdash_{\mathbf{\Lambda}} \phi$). This notion is synonymous with the idea that for any $\mathbf{\Lambda}$-consistent set $\Psi$ of $L$-formulas, $\Psi$ must be (s-)satisfiable (according to $\lambda$).

On the other hand, ``weak completeness'' constrains $\Phi$ to an empty set (or $\Psi$ to a singleton set), which is a specific instance of strong completeness. For more detailed information on this topic, we recommend referring to a textbook on modal logic, such as \cite{BdRV2001}.

It is a widely accepted fact in classical epistemic logic that the inclusion of common knowledge can cause a logic to lose its \emph{compactness}. This leads to the situation where its axiomatic system is not strongly complete, but only weakly complete (see, e.g., \cite{BdRV2001,vDvdHK2008}). This is also the case in our context. As a consequence, we will demonstrate that the eight systems that do not include common knowledge are strongly complete axiomatizations for their corresponding logics, while the other eight systems that do incorporate common knowledge are only weakly complete.

The structure of this section is predicated on the various proof techniques we employ. We start with a method that reduces the satisfiability from classical epistemic logics to the logics we have proposed (Section~\ref{sec:completeness1}). However, this technique is only applicable to a limited number of logics. For a more direct proof, we adapt the canonical model method for systems devoid of group knowledge, specifically \K and \KB. This adaptation can be achieved relatively simply (Section~\ref{sec:completeness2}). When dealing with systems that incorporate either distributed or mutual knowledge, but not both (i.e.,  \KD, \KM, \KBD and \KBM), we utilize a path-based canonical model method (Section~\ref{sec:completeness3}). For systems that include both distributed and mutual knowledge, namely \KDM and \KBDM, a slightly more nuanced approach is required. Despite this, we continue to apply the path-based canonical model method (Section~\ref{sec:completeness4}). Lastly, for the remaining eight systems incorporating common knowledge, we merge the finitary method (which involves constructing a closure) with the methods mentioned above (Section~\ref{sec:completeness5}).

\subsubsection{Proof by translation of satisfiability}
\label{sec:completeness1}

\begin{theorem}[completeness, part 1]\label{thm:completeness1}
The following hold:
\begin{enumerate}
\item \K is strongly complete for \l, which means that for any \lang-formula $\phi$ and any set $\Phi$ of \lang-formulas, if $\Phi$ semantically entails $\phi$ in \l (denoted as $\Phi \models_\l \phi$), then there is a proof in \K that $\Phi$ syntactically entails $\phi$ (denoted as $\Phi \vdash_{\K} \phi$).
\item \KC is weakly complete for \lc. In other words, for any \langc-formula $\phi$, if $\phi$ is valid in \lc (represented as $\models_\lc \phi$), then $\phi$ is a theorem of \KC (denoted as $\vdash_{\KC} \phi$).
\item \KD is strongly complete for \ld. This means that for any \langd-formula $\phi$ and any set $\Phi$ of \langd-formulas, if $\Phi \models_\ld \phi$, then $\Phi \vdash_{\KD} \phi$.
\item \KCD is weakly complete for \lcd, which means that for any \langcd-formula $\phi$, if $\models_\lcd \phi$ then $\vdash_{\KCD} \phi$.
\item \KB is strongly complete with respect to the class of all symmetric models.
\end{enumerate}
\end{theorem}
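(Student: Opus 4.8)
The unifying strategy is to piggyback on the classical completeness theorems for multi-agent epistemic logic over relational models, transporting satisfiability across the reverse translation of Definition~\ref{def:trans-w}. The crucial observation is that each of the systems \K, \KC, \KD, \KCD and \KB is syntactically \emph{identical} to the familiar classical axiomatization of the corresponding multi-agent epistemic logic (with individual knowledge, and possibly common and/or distributed knowledge) interpreted over Kripke models; in particular the notions of derivability and of consistency coincide. Hence I would first invoke the standard results: \K (resp.\ \KD) is strongly complete, and \KC (resp.\ \KCD) weakly complete, with respect to the class of \emph{all} relational models, while \KB is strongly complete with respect to the class of \emph{symmetric} relational models (see \cite{BdRV2001,FHMV1995}; the loss of strong completeness in the presence of $C_G$ is precisely the failure of compactness).

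For the strong-completeness clauses (1, 3, 5) I would argue through satisfiability, using the routine reduction of ``completeness'' to ``consistency implies satisfiability'' (apply the argument to $\Phi \cup \{\neg\phi\}$, the converse direction being soundness). Given a consistent set $\Phi$ of formulas of the relevant language, the cited classical theorem yields a relational model $N$ and a state $s$ with $N,s \Vdash \Phi$. Applying the reverse translation gives the model $N^\rho$, and Lemma~\ref{lem:trans-w} (truth preservation, which holds for \emph{every} formula and so covers the $K_a$- and $D_G$-fragments at issue) yields $N^\rho,s \models \Phi$. Thus every \K- (resp.\ \KD-) consistent set is satisfiable in a model, which is exactly strong completeness for clauses 1 and 3. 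For clause 5 the only extra ingredient is that the witnessing $N$ may be taken symmetric; Lemma~\ref{lem:sem-trans}(\ref{it:comp}) then guarantees that $N^\rho$ is a symmetric model, so every \KB-consistent set is satisfiable in a symmetric model, which is the satisfiability form of strong completeness with respect to the class of all symmetric models.

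For the weak-completeness clauses (2, 4) the same transport applies to a single formula. If $\phi$ is consistent in \KC (resp.\ \KCD), classical weak completeness gives a relational model satisfying $\phi$, and reverse translation together with Lemma~\ref{lem:trans-w} produces a model satisfying $\phi$; contrapositively, $\models_\lc \phi$ implies $\vdash_{\KC}\phi$, and likewise for \KCD. No symmetry bookkeeping is needed here, since these clauses concern the class of all models.

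I expect the only points requiring care -- rather than a genuine obstacle -- to be twofold. First, one must confirm that our systems are literally the classical ones, so that the off-the-shelf completeness theorems may be invoked verbatim; this is immediate from the axiom lists, and it is exactly the reason the present theorem stops at \langcd, since none of \K, \KC, \KD, \KCD, \KB mentions $M_G$, for which no classical analogue (and hence no classical completeness result to transport) exists. Second, for clause 5 one must check that the reverse translation respects the structural constraint of symmetry, which is precisely the content of Lemma~\ref{lem:sem-trans}(\ref{it:comp}). Everything else is bookkeeping already discharged by Lemmas~\ref{lem:trans-w} and~\ref{lem:sem-trans}.
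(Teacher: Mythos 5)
Your proposal is correct and follows essentially the same route as the paper: invoke the classical completeness theorems over relational models, then transport the witnessing model through the reverse translation $\cdot^\rho$ using Lemma~\ref{lem:trans-w}, with Lemma~\ref{lem:sem-trans}(\ref{it:comp}) handling the symmetry bookkeeping for the \KB clause (which the paper defers to the proof of Theorem~\ref{thm:completeness2}). The only cosmetic difference is that the paper argues contrapositively from $\Phi \nvdash \phi$ while you argue via ``consistency implies satisfiability''; these are the same argument.
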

\begin{proof}
This proof relies heavily on understanding the concepts of semantic and syntactic entailment, and the relationship between them.
Suppose $\Phi \nvdash_{\K} \phi$. By the completeness of \K with respect to all relational models, a relational model $N$ and a state $s$ of $N$ exist such that $N,s$ satisfies all formulas in $\Phi$ ($N,s \Vdash \psi$ for all $\psi \in \Phi$) and $N,s$ satisfies the negation of $\phi$ ($N,s \Vdash \neg\phi$). By Lemma~\ref{lem:trans-w}, the reversely translated model $N^\rho$ is such that  $N^\rho,s \models \psi$ (for all $\psi \in \Phi$) and $N^\rho,s\models \neg\phi$. It follows that $\Phi \not\models_\l \phi$.
The other parts of the theorem can be demonstrated in a similar manner, depending on the completeness of those axiomatic systems in classical epistemic or modal logic (see \cite[Ch.~3]{FHMV1995} for \KC and \KD and \cite{WA2020} for \KCD). The case for \KB is part of the proof of Theorem~\ref{thm:completeness2}.
\end{proof}

The completeness with respect to similarity models is slightly more complex. The axiomatization is known to be complete for symmetric relational models, and can be translated back to a symmetric model in a way that preserves truth, but we need to construct a similarity model that satisfies an additional condition on positivity (see Definition~\ref{def:models}). Fortunately, it is not too difficult to construct a similarity model from a symmetric model in a way that preserves truth.

\begin{theorem}[completeness, part 2]\label{thm:completeness2}
\KB is strongly complete for \ls; that is, for any \lang-formula $\phi$ and any set $\Phi$ of \lang-formulas, if $\Phi \models_\ls \phi$, then $\Phi \vdash_{\KB} \phi$.
\end{theorem}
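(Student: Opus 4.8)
The plan is to establish strong completeness of \KB for \ls by chaining together the soundness/completeness results for classical symmetric relational models with the truth-preserving translations developed in Section~\ref{sec:correspondence}. The key observation is that \KB is the standard axiomatization of the modal logic \textbf{KB} (i.e., \textbf{K} augmented with the symmetry axiom B), and it is a well-known classical result that \KB is strongly complete with respect to the class of all symmetric relational models. This classical fact will be my starting point, and I will invoke it to first prove the auxiliary clause (5) of Theorem~\ref{thm:completeness1}, namely that \KB is strongly complete with respect to the class of all symmetric \emph{models} (in our sense). That intermediate step is handled exactly as in the proof of Theorem~\ref{thm:completeness1}(1), but using the symmetric versions of the translation lemmas.

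Concretely, I would argue by contraposition. Suppose $\Phi \nvdash_{\KB} \phi$. By strong completeness of \KB over symmetric relational models, there is a symmetric relational model $N$ and a state $s$ with $N,s \Vdash \psi$ for all $\psi \in \Phi$ and $N,s \Vdash \neg\phi$. Applying the reverse translation (Definition~\ref{def:trans-w}) together with its truth-preservation (Lemma~\ref{lem:trans-w}), the model $N^\rho$ satisfies $N^\rho,s \models \psi$ for all $\psi \in \Phi$ and $N^\rho,s \models \neg\phi$. Crucially, by Lemma~\ref{lem:sem-trans}(\ref{it:comp}), since $N$ is symmetric, $N^\rho$ is a symmetric model. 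This establishes that $\Phi$ is satisfiable together with $\neg\phi$ at a state of a symmetric model, i.e., $\Phi \not\models \phi$ over symmetric models, giving clause (5).

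The remaining and genuinely new step is to upgrade satisfiability in a symmetric model to satisfiability in a \emph{similarity} model, since \ls-validity quantifies only over similarity models (which additionally satisfy positivity). Here I would invoke Lemma~\ref{lem:tr-sim-model}: the symmetric model $N^\rho$ at state $s$ has an equivalent similarity model $M'$ on the same domain such that $M',s \models \chi$ iff $N^\rho,s \models \chi$ for all \langcdm-formulas $\chi$ (in particular for the \lang-formulas in $\Phi \cup \{\neg\phi\}$). Consequently $M',s \models \psi$ for all $\psi \in \Phi$ and $M',s \models \neg\phi$, with $M'$ a similarity model. This shows $\Phi \not\models_\ls \phi$, completing the contrapositive and hence the theorem.

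I expect the main conceptual obstacle to be the passage from symmetric models to similarity models, i.e., the additional positivity constraint, but this difficulty has already been absorbed into Lemma~\ref{lem:tr-sim-model}, whose construction simply adjoins a fresh epistemic ability $b \notin \ab$ to force positivity without altering the edge function $E$ or the range of the capability function $C$ (so all modal truth values are preserved). The only point requiring mild care is to confirm that the formulas in $\Phi \cup \{\neg\phi\}$ are indeed within the scope of these lemmas: since $\Phi$ and $\phi$ are \lang-formulas, they are a fortiori \langcd-formulas (for Lemma~\ref{lem:trans-w}, which is stated for arbitrary formulas) and \langcdm-formulas (for Lemma~\ref{lem:tr-sim-model}), so every invoked lemma applies directly and the chain of equivalences goes through at the single designated state $s$, which is all that strong completeness requires.
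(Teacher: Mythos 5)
Your proposal is correct and follows essentially the same route as the paper's own proof: assume $\Phi \nvdash_{\KB} \phi$, invoke classical strong completeness of \KB over symmetric relational models, apply the reverse translation (Lemmas~\ref{lem:sem-trans}(\ref{it:comp}) and \ref{lem:trans-w}) to obtain a symmetric model satisfying $\Phi \cup \{\neg\phi\}$, and then use Lemma~\ref{lem:tr-sim-model} to pass to an equivalent similarity model, concluding $\Phi \not\models_\ls \phi$. Your additional remarks on why the cited lemmas apply to \lang-formulas are accurate but not needed beyond what the paper already does.
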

\begin{proof}
Consider an \lang-formula $\phi$ and a set $\Phi$ of \lang-formulas. Suppose $\Phi \nvdash_{\KB} \phi$. Given the completeness of \KB over symmetric relational models (a well-known fact in modal logic), there exists a symmetric relational model $N$ and a state $s$ of it such that $N,s \Vdash \psi$ (for all $\psi \in \Phi$) and $N,s \Vdash \neg\phi$. By Lemmas~\ref{lem:sem-trans}(\ref{it:comp}) and \ref{lem:trans-w}, the reverse translation of $N$, namely $N^\rho$, is a symmetric model that also satisfies all formulas in $\Phi$ ($N^\rho, s \models \psi$ for all $\psi \in \Phi$) and the negation of $\phi$ ($N^\rho, s \models \neg\phi$). Lemma~\ref{lem:tr-sim-model} assert that there exists a similarity model $M$, which shares the same domain with $N^\rho$, such that $M, s \models \psi$ (for all $\psi \in \Phi$) and $M, s \models \neg\phi$. Therefore, it can be concluded that $\Phi \not\models_\ls \phi$.
\end{proof}

It is possible to use the same method to achieve complete results for the other systems \KBC, \KBD and \KBCD. However, as far as we know, the completeness of these systems in relational semantics, while expected, has never been explicitly established. Therefore, we do not state the results here immediately, but rather present them as corollaries of the completeness proofs given next.

\subsubsection{Proof by the canonical model method}
\label{sec:completeness2}

In the previous section, we demonstrated that \K and \KB are complete axiomatizations for their corresponding logics using the method of translation. This method is efficient and relies on the completeness results for their counterparts in classical logics interpreted via relational semantics. However, the translation method cannot be employed for logics where such a result does not exist for their classical counterparts (for instance, \lsc, \lsd and \lscd), or when their classical counterparts have not been introduced or studied (for example, logics with mutual knowledge). In this section, we will provide direct proofs of the completeness of \K and \KB using the canonical model method and extend this to a completeness proof for other logics in later sections.

\paragraph{\underline{Completeness of \K}}

We first introduce the \emph{canonical model for EL}. The model for \ls and other logics can be adapted from this with minor modifications. Let us recall that \ag is the set of all agents.

\begin{definition}[canonical model for \l]\label{def:cm-el}
The canonical model for \l is denoted as a tuple $\CM = (\CW, \CA, \CE, \CC, \CV )$ where:
\begin{itemize}
\item $\CW$ is the collection of all \emph{canonical states}, i.e., maximal \K-consistent sets of \lang-formulas;
\item $\CA=\wp(\ag)$, the power set of all agents;
\item $\CE : \CW \times \CW \to \wp(\CA)$ is defined such that for any $\Phi,\Psi \in \CW$, $E (\Phi,\Psi) = \bigcup_{a\in \ag} E_a(\Phi,\Psi)$, where $$E_a(\Phi,\Psi) = \left\{\begin{array}{ll}
	\CC(a), & \text{if $\{\chi\mid K_a\chi\in\Phi\}\subseteq\Psi$},\\
	\emptyset, & \text{otherwise;}
\end{array}\right.$$
\item $\CC : \ag \to \wp(\CA)$ is defined such that for any agent $a$, $\CC(a)=\{G \subseteq \ag \mid a\in G \}$;
\item $\CV: \CW \to \wp(\prop)$ is defined such that for any $\Phi \in \CW$, $\CV (\Phi) = \{ p \in \prop \mid p \in \Phi \}$.
\qed
\end{itemize}
\end{definition}

It can be easily verified that the canonical model for \l indeed qualifies as a model. While \CW and \CV are defined in a similar manner as in the canonical model for a classical modal logic, the other three components need slight elaboration. In the canonical model, any group $G$ is defined to be a set of epistemic abilities (as per the definition of \CA). It behaves as the set of epistemic abilities that are common among the members of $G$. The ``canonical'' epistemic abilities of agent $a$, denoted as $\CC(a)$, represent the collection of all potential common abilities of groups that include $a$. $\CE_a$ is a \emph{canonical relation} that associates two canonical states in a standard way. The condition ``$\{\chi\mid K_a\chi\in\Phi\}\subseteq\Psi$'' used in the standard canonical model creates a connection from $\Phi$ to $\Psi$. In our canonical model, disconnectivity is labeled by the epistemic abilities required to achieve it. Therefore, we label the link from $\Phi$ to $\Psi$ by $\CC(a)$ when they are connected by the standard condition, and label the link by the empty set signifying their disconnection, if the standard condition is not satisfied.

\begin{lemma}[Truth Lemma]\label{lem:truth-K}
Let $\CM=(\CW,\CA,\CE,\CC,\CV)$ represent the canonical model for \l. For any $\Gamma \in \CW$ and any \lang-formula $\phi$, we have $\phi \in \Gamma$ if and only if $\CM,\Gamma \models_{\l} \phi$.
\end{lemma}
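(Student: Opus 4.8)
The plan is to prove the Truth Lemma by induction on the structure of the \lang-formula $\phi$. The base case ($\phi = p$) follows immediately from the definition of \CV, since $p \in \CV(\Gamma)$ iff $p \in \Gamma$ by construction. The boolean cases ($\neg\psi$ and $\psi \ra \chi$) are routine and use only the standard properties of maximal \K-consistent sets: that exactly one of $\psi, \neg\psi$ belongs to $\Gamma$, and that $\psi \ra \chi \in \Gamma$ iff ($\psi \in \Gamma$ implies $\chi \in \Gamma$). The only genuinely modal case is $\phi = K_a\psi$, and this is where the work lies.

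For the modal case I would prove both directions by unfolding the semantics $\CM,\Gamma \models K_a\psi$ iff for all $\Delta \in \CW$ with $\CC(a) \subseteq \CE(\Gamma,\Delta)$ we have $\CM,\Delta \models \psi$, and then apply the induction hypothesis to replace $\CM,\Delta \models \psi$ by $\psi \in \Delta$. The crux is to verify that the condition $\CC(a) \subseteq \CE(\Gamma,\Delta)$ is equivalent to the standard canonical accessibility condition $\{\chi \mid K_a\chi \in \Gamma\} \subseteq \Delta$. Granting that equivalence, the forward direction ($K_a\psi \in \Gamma \Rightarrow$ truth) is the familiar argument: if $\{\chi \mid K_a\chi \in \Gamma\} \subseteq \Delta$ then $\psi \in \Delta$; the backward direction ($K_a\psi \notin \Gamma \Rightarrow$ failure) uses the Lindenbaum lemma to build a witness $\Delta \supseteq \{\chi \mid K_a\chi \in \Gamma\} \cup \{\neg\psi\}$, where consistency of this set is shown by the usual contrapositive argument via the axiom K and the rule N.

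The main obstacle, and the part that departs from the textbook proof, is establishing that $\CC(a) \subseteq \CE(\Gamma,\Delta)$ holds exactly when $\{\chi \mid K_a\chi \in \Gamma\} \subseteq \Delta$. Here I would compute directly from Definition~\ref{def:cm-el}. Recall $\CC(a) = \{G \subseteq \ag \mid a \in G\}$ and $\CE(\Gamma,\Delta) = \bigcup_{b \in \ag} E_b(\Gamma,\Delta)$, where each $E_b(\Gamma,\Delta)$ is either $\CC(b)$ or $\emptyset$. The key observation is that the singleton $\{a\}$ is a member of $\CC(a)$, and $\{a\} \in \CC(b)$ iff $a \in \{a\}$ and $b \in \{a\}$, i.e., iff $b = a$; so $\{a\}$ can only enter the union $\CE(\Gamma,\Delta)$ through the term $E_a(\Gamma,\Delta)$. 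Thus $\CC(a) \subseteq \CE(\Gamma,\Delta)$ forces $E_a(\Gamma,\Delta) = \CC(a)$ rather than $\emptyset$, which by definition of $E_a$ means precisely $\{\chi \mid K_a\chi \in \Gamma\} \subseteq \Delta$. Conversely, if $\{\chi \mid K_a\chi \in \Gamma\} \subseteq \Delta$ then $E_a(\Gamma,\Delta) = \CC(a)$, whence $\CC(a) \subseteq \CE(\Gamma,\Delta)$ immediately since $E_a$ is one of the terms in the union.

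Once this equivalence is in hand, the modal case reduces to the standard canonical-model argument and the induction closes. I would emphasize in the writeup that the careful design of \CA, \CC, and \CE in Definition~\ref{def:cm-el} is exactly what makes the new abilities-based semantic clause for $K_a$ collapse onto the classical accessibility condition, so that no genuinely new combinatorics are needed beyond the set-membership computation above.
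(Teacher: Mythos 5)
Your proposal is correct and follows essentially the same route as the paper: induction on $\phi$ with the only substantive case being $K_a\psi$, reduced to the classical canonical-model argument via the equivalence of $\CC(a) \subseteq \CE(\Gamma,\Delta)$ with $\{\chi \mid K_a\chi \in \Gamma\} \subseteq \Delta$, which both you and the paper justify by observing that the singleton $\{a\}$ can only enter the union $\CE(\Gamma,\Delta)$ through the term $E_a(\Gamma,\Delta)$. Your writeup merely isolates that equivalence more explicitly than the paper, which embeds it as a parenthetical step inside each direction of the contradiction argument.
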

\begin{proof}
We will prove this lemma through induction on $\phi$. The base case and boolean cases can be easily established by the definition of $\nu$ and the induction hypothesis. However, the case where $\phi$ is $K_a\psi$ requires more careful consideration.

Assuming $K_a\psi\in\Gamma$, but $\CM,\Gamma\not\models_\l K_a\psi$, there would exist a $\Delta\in \CW$ such that $\CC(a)\subseteq \CE (\Gamma,\Delta)$ and $\CM,\Delta\not\models_\l \psi$. Consequently, $\{\chi\mid K_a\chi\in\Gamma\}\subseteq\Delta$ (otherwise $\{a\} \notin \CE(\Gamma,\Delta)$, contradicting $\{a\} \in \CC(a)$). Thus, $\psi\in\Delta$. It follows from the induction hypothesis that $\CM,\Delta \models_\l \psi$, which results in a contradiction.

Assuming $K_a\psi\notin\Gamma$, but $\CM,\Gamma\models_\l K_a\psi$, then for any $\Delta\in \CW$, $\CC(a)\subseteq \CE (\Gamma,\Delta)$ implies $\CM,\Delta\models_\l \psi$. Observe that $\{\neg\psi\}\cup\{\chi\mid K_a\chi\in\Gamma\}$ is \K-consistent. If not, then there would exists a finite set of formulas $\Delta_0 = \{\chi_1,\cdots,\chi_i\}$ with $1<i\in\mbN$, such that $\vdash_\K (\neg\psi \wedge \bigwedge\Delta_0) \ra \bot$, then $\vdash_\K (\bigwedge\Delta_0) \ra\psi$, which would lead to $\vdash_\K K_a( (\bigwedge\Delta_0)\ra\psi)$. This suggests $\vdash_\K (\bigwedge_{\chi\in\Delta_0}K_a\chi)\ra K_a\psi$ and subsequently, $K_a\psi\in\Gamma$ since it is closed under deduction, contradicting with $K_a\psi\notin\Gamma$. We can extend $\{\neg\psi\}\cup\{\chi\mid K_a\chi\in\Gamma\}$ to a maximal \K-consistent set $\Delta^+$ of formulas. It follows that $\CC(a)\subseteq \CE(\Gamma,\Delta^+)$. By the induction hypothesis, we have $\CM,\Delta^+\models_\l \neg\psi$, leading to a contradiction as well.
\end{proof}

\begin{theorem}[completeness of \K, with a direct proof]
For any \lang-formula $\phi$ and any set $\Phi$ of \lang-formulas, if $\Phi \models_\l \phi$, then $\Phi \vdash_\K \phi$.
\end{theorem}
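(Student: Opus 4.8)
The plan is to prove the contrapositive, reducing everything to the Truth Lemma (Lemma~\ref{lem:truth-K}) that has just been established for the canonical model $\CM = (\CW,\CA,\CE,\CC,\CV)$. So I would assume $\Phi \nvdash_\K \phi$ and aim to produce a model and a state witnessing $\Phi \not\models_\l \phi$. Since the text already notes that the canonical model for \l qualifies as a genuine model, no separate verification of that is needed, and $\CM$ is the structure I will use.

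First I would observe that $\Phi \nvdash_\K \phi$ entails that the set $\Phi \cup \{\neg\phi\}$ is \K-consistent. This is the standard syntactic fact: were it inconsistent, a finite subset together with $\neg\phi$ would derive $\bot$, and by propositional reasoning inside \K (using PC and MP) one would recover $\Phi \vdash_\K \phi$, contradicting the assumption. Next I would invoke Lindenbaum's Lemma to extend $\Phi \cup \{\neg\phi\}$ to a maximal \K-consistent set $\Gamma$ of \lang-formulas; by the definition of $\CW$ (Definition~\ref{def:cm-el}), $\Gamma$ is a canonical state, i.e., $\Gamma \in \CW$.

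Then I would apply the Truth Lemma: for every $\psi \in \Phi$ we have $\psi \in \Gamma$, hence $\CM,\Gamma \models_\l \psi$, and since $\neg\phi \in \Gamma$ we also get $\CM,\Gamma \models_\l \neg\phi$, i.e.\ $\CM,\Gamma \not\models_\l \phi$. Thus $\CM,\Gamma$ satisfies all of $\Phi$ but falsifies $\phi$, which is precisely $\Phi \not\models_\l \phi$. By contraposition this establishes the theorem.

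The substantive content of the argument---namely the behaviour of the modal operator $K_a$ under the canonical edge and capability functions $\CE$ and $\CC$---is entirely absorbed into the already-proved Truth Lemma, so I do not expect any genuine obstacle here. The only points requiring (routine) care are the consistency step and Lindenbaum's Lemma, both of which are standard and purely propositional/syntactic in nature and rely only on PC, MP, and the maximality of members of $\CW$. In short, this is the classical strong-completeness-via-canonical-model template, and the work has essentially been front-loaded into Definition~\ref{def:cm-el} and Lemma~\ref{lem:truth-K}.
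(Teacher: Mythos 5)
Your proposal is correct and follows exactly the paper's own argument: assume $\Phi \nvdash_\K \phi$, observe that $\Phi \cup \{\neg\phi\}$ is \K-consistent, extend it via Lindenbaum to a maximal consistent set (a canonical state), and invoke the Truth Lemma to conclude $\Phi \not\models_\l \phi$. There is nothing to add; the paper's proof is the same canonical-model template with the work front-loaded into Lemma~\ref{lem:truth-K}.
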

\begin{proof}
Assume that $\Phi \not\vdash_\K \phi$, then $\Phi\cup\{\neg\phi\}$ is consistent. Extend $\Phi\cup\{\neg\phi\}$ to a maximal \K-consistent set $\Delta^+$ of formulas. Let $\CM=(\CW,\CA,\CE,\CC,\CV)$ be the canonical model for \l, we find that $\CM,\Delta^+\models \chi$ for any formula $\chi\in \Phi\cup\{\neg\phi\}$. It follows that $\Phi \not\models_\l \phi$.
\end{proof}

\paragraph{\underline{Completeness of \KB}}

\begin{definition}[canonical model for \ls]\label{def:cm-els}
The canonical model for \ls is an adaptation of the canonical model for \l (as per Definition~\ref{def:cm-el}) with some modifications:
\begin{itemize}
\item The set \CW is now the set of all maximal \KB-consistent sets of \lang-formulas;
\item The definition of the ``canonical'' edge function \CE is altered such that $\CE_a$ is defined as follows:
$$E_a(\Phi,\Psi) = \left\{\begin{array}{ll}
	\CC(a), & \text{if $\{\chi\mid K_a\chi\in\Phi\}\subseteq\Psi$ and $\{\chi\mid K_a\chi\in\Psi\}\subseteq\Phi$},\\
	\emptyset, & \text{otherwise.}
\end{array}\right.$$
\end{itemize}
In this case, $E_a$ is a commutative function which is necessary for the completeness of \KB.
\qed
\end{definition}

\begin{lemma}[canonicity]
The canonical model for \ls is a similarity model.
\end{lemma}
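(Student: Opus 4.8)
The plan is to check that the graph reduct $(\CW,\CA,\CE)$ of the canonical model for \ls satisfies the two defining conditions of a similarity graph from Definition~\ref{def:graphs}, namely symmetry and positivity. Since a similarity model is just a model whose graph reduct is a similarity graph, and the canonical construction is already known to produce a model, verifying these two conditions suffices; the capability function \CC and the valuation \CV play no further role.

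Symmetry is immediate and I would dispatch it first. The defining clause of $\CE_a$ in Definition~\ref{def:cm-els}, namely ``$\{\chi \mid K_a\chi \in \Phi\} \subseteq \Psi$ \emph{and} $\{\chi \mid K_a\chi \in \Psi\} \subseteq \Phi$'', is manifestly symmetric in $\Phi$ and $\Psi$ --- this is precisely the commutativity remarked upon right after Definition~\ref{def:cm-els}. Hence $\CE_a(\Phi,\Psi) = \CE_a(\Psi,\Phi)$ for every $a \in \ag$, and taking the union over $a$ gives $\CE(\Phi,\Psi) = \bigcup_{a\in\ag}\CE_a(\Phi,\Psi) = \bigcup_{a\in\ag}\CE_a(\Psi,\Phi) = \CE(\Psi,\Phi)$.

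Positivity is the interesting case, and the hard part will be resisting the wrong approach rather than any genuine calculation. Positivity requires that $\CE(\Phi,\Psi) = \CA$ force $\Phi = \Psi$. The key observation is that the antecedent $\CE(\Phi,\Psi) = \CA$ can in fact \emph{never} hold, so positivity is satisfied vacuously. Indeed, $\CA = \wp(\ag)$ contains the empty group $\emptyset \subseteq \ag$ as one of its elements, yet for every agent $a$ we have $\emptyset \notin \CC(a) = \{G \subseteq \ag \mid a \in G\}$ since $a \notin \emptyset$. Each value $\CE_a(\Phi,\Psi)$ is either $\CC(a)$ or the empty set of abilities, so the empty group occurs in no $\CE_a(\Phi,\Psi)$, and therefore not in $\CE(\Phi,\Psi) = \bigcup_{a\in\ag}\CE_a(\Phi,\Psi)$. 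Thus $\CE(\Phi,\Psi) \subsetneq \CA$ for all $\Phi,\Psi \in \CW$, and positivity holds trivially.

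The one subtlety to keep straight is the overloaded symbol $\emptyset$: as an element of \CA it denotes the empty group (a subset of \ag), which is exactly what no edge label ever contains, whereas as the ``otherwise'' value of $\CE_a$ it denotes the empty set of abilities. I would flag explicitly that one should \emph{not} try to prove positivity non-vacuously by deriving $\Phi = \Psi$ from the canonical conditions: having the symmetric clause hold for every agent only yields $\{\chi \mid K_a\chi \in \Phi\} \subseteq \Psi$ for all $a$, which does not force $\Phi = \Psi$ (two distinct canonical states can agree on every agent's knowledge). It is precisely the presence in \CA of the empty group, carried by no edge, that makes \CE fall strictly inside \CA and so guarantees positivity. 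Symmetry together with this vacuous positivity shows $(\CW,\CA,\CE)$ is a similarity graph, and hence the canonical model for \ls is a similarity model.
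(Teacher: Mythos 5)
Your proof is correct and follows essentially the same route as the paper's: symmetry comes directly from the commutative defining clause of $\CE_a$, and positivity holds vacuously because the empty group belongs to $\CA=\wp(\ag)$ but to no $\CC(a)$, so $\CE(\Phi,\Psi)$ can never equal $\CA$. Your version just spells out more explicitly the two readings of $\emptyset$ and why a non-vacuous argument would fail, which the paper leaves implicit.
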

\begin{proof}
Let $\CM=(\CW,\CA,\CE,\CC,\CV)$ be the canonical model for \ls. It is straightforward to verify that \CM is a model. Furthermore, notice that $\emptyset \notin \CC(a)$ for any agent $a$, so $\CE(s,t)\neq\CA$ for any $s,t \in \CW$, ensuring positivity. The symmetry of the model is evident as $\CE(s,t)=\CE(t,s)$ for any $s,t \in \CW$. Therefore, $\CM$ is a similarity model.
\end{proof}

The Truth Lemma for \KB parallels that of Lemma~\ref{lem:truth-K}:

\begin{lemma}[Truth Lemma]
\label{lem:truth-KB}
Let $\CM = (\CW,\CA,\CE,\CC,\CV)$ be the canonical model for \ls. For any $\Gamma \in \CW$ and any \lang-formula $\phi$, we have $\phi \in \Gamma$ iff $\CM,\Gamma \models_{\ls} \phi$.
\end{lemma}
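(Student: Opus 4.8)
The plan is to mirror the induction on $\phi$ used in Lemma~\ref{lem:truth-K}, since the canonical model for \ls differs from that for \l only in restricting $\CW$ to maximal \KB-consistent sets and in the two-sided definition of $\CE_a$. The base case and the boolean cases are routine: they depend only on the definition of $\CV$ and on the maximal consistency of canonical states, neither of which is affected by the change, so I would dispatch them exactly as before. The only case requiring genuine work is $\phi = K_a\psi$, and within it the right-to-left direction is where the symmetry clause of $\CE_a$ must be exploited.

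For the left-to-right direction, suppose $K_a\psi \in \Gamma$ and, towards a contradiction, that $\CM,\Gamma \not\models_{\ls} K_a\psi$; then there is $\Delta \in \CW$ with $\CC(a) \subseteq \CE(\Gamma,\Delta)$ and $\CM,\Delta \not\models_{\ls}\psi$. Since $\{a\} \in \CC(a)$ and $\CE(\Gamma,\Delta) = \bigcup_{b\in\ag}\CE_b(\Gamma,\Delta)$ with each $\CE_b(\Gamma,\Delta)$ equal to either $\CC(b)$ or $\emptyset$, the membership $\{a\}\in\CE(\Gamma,\Delta)$ forces $\CE_a(\Gamma,\Delta)=\CC(a)$, exactly as in Lemma~\ref{lem:truth-K}. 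In particular the forward condition $\{\chi\mid K_a\chi\in\Gamma\}\subseteq\Delta$ holds, so $\psi\in\Delta$, and by the induction hypothesis $\CM,\Delta\models_{\ls}\psi$, a contradiction.

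For the right-to-left direction I would argue contrapositively: assume $K_a\psi\notin\Gamma$ and construct a witness $\Delta^+\in\CW$ with $\CC(a)\subseteq\CE(\Gamma,\Delta^+)$ and $\psi\notin\Delta^+$. As in Lemma~\ref{lem:truth-K}, the set $\{\neg\psi\}\cup\{\chi\mid K_a\chi\in\Gamma\}$ is \KB-consistent (the argument uses only N, K and closure under deduction, all available in \KB), so it extends to some maximal \KB-consistent $\Delta^+$; this secures $\neg\psi\in\Delta^+$ and the forward condition $\{\chi\mid K_a\chi\in\Gamma\}\subseteq\Delta^+$. The main obstacle, and the one point where the proof genuinely departs from the \K case, is to establish the back condition $\{\chi\mid K_a\chi\in\Delta^+\}\subseteq\Gamma$ needed to make $\CE_a(\Gamma,\Delta^+)=\CC(a)$. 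Here axiom B is essential: if some $\chi$ had $K_a\chi\in\Delta^+$ but $\chi\notin\Gamma$, then $\neg\chi\in\Gamma$, and instantiating B at $\neg\chi$ yields $K_a\neg K_a\chi\in\Gamma$ (after simplifying the double negation); the forward condition then places $\neg K_a\chi\in\Delta^+$, contradicting $K_a\chi\in\Delta^+$. With the back condition in hand, $\CE_a(\Gamma,\Delta^+)=\CC(a)$, hence $\CC(a)\subseteq\CE(\Gamma,\Delta^+)$, while $\psi\notin\Delta^+$ gives $\CM,\Delta^+\not\models_{\ls}\psi$ by the induction hypothesis, so $\CM,\Gamma\not\models_{\ls}K_a\psi$. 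I expect the delicate bookkeeping of this B-based symmetry argument to be the crux, with everything else inherited directly from Lemma~\ref{lem:truth-K}.
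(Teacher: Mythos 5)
Your proof is correct, but in the crucial right-to-left direction of the $K_a\psi$ case it takes a genuinely different route from the paper. The paper seeds the witness set with an extra symmetry block: it shows that $\{\neg\psi\}\cup\{\chi\mid K_a\chi\in\Gamma\}\cup\{\neg K_a\neg\chi\mid\chi\in\Gamma\}$ is \KB-consistent (this is where axiom B enters, inside a somewhat delicate finite-conjunction derivation), extends that whole set to $\Delta^+$, and then reads off both containments $\{\chi\mid K_a\chi\in\Gamma\}\subseteq\Delta^+$ and $\{\chi\mid K_a\chi\in\Delta^+\}\subseteq\Gamma$ directly from the construction. You instead extend only the standard set $\{\neg\psi\}\cup\{\chi\mid K_a\chi\in\Gamma\}$, whose consistency is verbatim the \K-argument, and then derive the back condition \emph{a posteriori}: from $K_a\chi\in\Delta^+$ and $\chi\notin\Gamma$ you get $\neg\chi\in\Gamma$, apply B at $\neg\chi$ to place $K_a\neg K_a\chi$ in $\Gamma$ (modulo the provable equivalence $K_a\neg\neg\chi\lra K_a\chi$), and the forward containment then forces $\neg K_a\chi\in\Delta^+$, a contradiction. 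This is the classical observation that axiom B makes the ordinary canonical relation symmetric, and it checks out line by line; it buys you a simpler consistency argument and a cleaner separation between the existence lemma and the symmetry of $\CE_a$. What the paper's version buys in exchange is uniformity: the same ``seed the set with $\{\neg\Box\neg\chi\mid\chi\in\Gamma\}$'' template is reused later for \KBD and \KBM (Definition~\ref{def:cm-elds} and its analogues), where the operators $D_G$ and $M_G$ are not normal boxes indexed by a single agent and the a-posteriori symmetry argument would have to be redone operator by operator anyway. Either way, the conclusion $\CC(a)\subseteq\CE(\Gamma,\Delta^+)$ and hence $\CM,\Gamma\not\models_{\ls}K_a\psi$ follows as you state.
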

\begin{proof}
We will only demonstrate the case when $\phi$ is of the form $K_a \psi$ here. The direction from $K_a\psi\in \Gamma$ to $\CM,\Gamma\models_\ls K_a\psi$ can be shown in a manner similar to that in Lemma \ref{lem:truth-K}.

For the opposite direction, suppose $K_a\psi\notin \Gamma$, but $\CM,\Gamma\models_\ls K_a\psi$, then for any $\Delta\in \CW$, $\CC(a)\subseteq \CE (\Gamma,\Delta)$ implies $\CM,\Delta\models_\ls \psi$.
First, we assert that $\{\neg\psi\}\cup\{\chi\mid K_a\chi\in \Gamma\}\cup\{\neg K_a\neg\chi\mid \chi\in \Gamma\}$ is \KB consistent. If not, note that for any $\eta\in\{\chi\mid K_a\chi\in \Gamma\}$, we have $\neg K_a \neg K_a\eta\in \{\neg K_a\neg\chi\mid \chi\in \Gamma\}$. As $\vdash_\KB \neg K_a \neg K_a \eta \ra \eta$, it follows that $\{\neg\psi\}\cup\{\neg K_a\neg\chi\mid \chi\in \Gamma\}$ is not \KB consistent. Therefore, we have $\vdash_\KB \big(\bigwedge_{\chi\in\Gamma_0}\neg K_a\neg \chi \big) \ra \psi$ for some finite subset $\Gamma_0$ of $\Gamma$. This leads to $\vdash_\KB K_a\big((\bigwedge_{\chi\in\Gamma_0}\neg K_a\neg \chi) \ra \psi\big)$, and hence $\vdash_\KB \bigwedge_{\chi\in\Gamma_0}K_a\neg K_a\neg \chi \ra K_a\psi$. Since we have $\vdash_\KB \chi \ra K_a\neg K_a\neg \chi$ for any $\chi\in\Gamma_0$, it follows that we have $\vdash_\KB \big( \bigwedge_{\chi\in\Gamma_0}\chi \big) \ra K_a\psi$.This deduction implies that $K_a\psi \in \Gamma$, which contradicts our previous assumption.
Now, let us extend the set $\{\neg\psi\}\cup\{\chi\mid K_a\chi\in \Gamma\}\cup\{\neg K_a\neg\chi\mid \chi\in \Gamma\}$ to some maximal \KB-consistent set $\Delta^+$ of \lang-formulas. Notice that $K_a\chi\in \Gamma$ implies $\chi\in\Delta^+$ for any $\chi$. Furthermore, if we suppose $\chi\notin \Gamma$, then $\neg\chi\in \Gamma$, which leads to $\neg K_a\neg\neg\chi\in\Delta^+$, implying $\neg K_a\chi\in\Delta^+$. Therefore, $K_a\chi\in \Delta^+$ implies $\chi\in \Gamma$ for any $\chi$.
Given these stipulations, we find that $\CC(a)\subseteq\CE(\Gamma,\Delta)$. However, by using the induction hypothesis, we see that $M,\Delta\not\models_\lsm \psi$. As a result, $M,\Gamma\not\models_\lsm K_a\psi$. This conclusion contradicts our previous assumptions, confirming this direction of the lemma.
\end{proof}

With the Truth Lemma, we can state the following theorem:

\begin{theorem}[completeness of \KB, with a direct proof]
For any \lang-formula $\phi$ and any set $\Phi$ of \lang-formulas, if $\Phi \models_\ls \phi$, then $\Phi \vdash_\KB \phi$.
\end{theorem}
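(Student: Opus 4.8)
The plan is to run the standard Lindenbaum-plus-canonical-model argument, now that all the heavy lifting has been done in the preceding canonicity lemma and Truth Lemma (Lemma~\ref{lem:truth-KB}). I would prove the contrapositive: assume $\Phi \nvdash_{\KB} \phi$ and then exhibit a similarity model together with a state witnessing $\Phi \not\models_{\ls} \phi$.

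First I would observe that $\Phi \cup \{\neg\phi\}$ is \KB-consistent, since otherwise a finite subset $\Phi_0 \subseteq \Phi$ would give $\vdash_{\KB} (\bigwedge \Phi_0) \ra \phi$, contradicting $\Phi \nvdash_{\KB} \phi$. Then, by Lindenbaum's lemma, I would extend $\Phi \cup \{\neg\phi\}$ to a maximal \KB-consistent set $\Gamma$ of \lang-formulas. By construction $\Gamma$ is a canonical state, i.e., an element of the domain $\CW$ of the canonical model $\CM$ for \ls (Definition~\ref{def:cm-els}).

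Next I would invoke the two supporting results. The canonicity lemma guarantees that $\CM$ is a similarity model (positivity holds because $\emptyset \notin \CC(a)$, so $\CE(s,t) \neq \CA$, and symmetry holds because $\CE_a$ is commutative), so it is an admissible structure for evaluating s-validity. The Truth Lemma (Lemma~\ref{lem:truth-KB}) then yields, for every formula $\chi$, that $\chi \in \Gamma$ iff $\CM, \Gamma \models_{\ls} \chi$. Since $\Phi \cup \{\neg\phi\} \subseteq \Gamma$, we obtain $\CM, \Gamma \models_{\ls} \psi$ for every $\psi \in \Phi$ and $\CM, \Gamma \models_{\ls} \neg\phi$. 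Thus $\CM$ is a similarity model with a state satisfying all of $\Phi$ but not $\phi$, which is exactly $\Phi \not\models_{\ls} \phi$; taking the contrapositive delivers the theorem.

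I do not expect a genuine obstacle at this final stage. The one delicate point — that the symmetry axiom B forces $\CE_a$ to be commutative, which is what makes the $K_a\psi \notin \Gamma$ direction of the Truth Lemma go through — has already been resolved inside Lemma~\ref{lem:truth-KB}. The only thing I would be careful to state explicitly is that ``s-validity'' ranges over \emph{similarity} models, so it is precisely the canonicity lemma (and not merely symmetry) that certifies the counter-model; everything else is the routine bookkeeping of a canonical-model completeness proof, entirely parallel to the completeness proof for \K given earlier.
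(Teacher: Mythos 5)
Your proposal is correct and follows essentially the same route as the paper's own proof: take the contrapositive, extend $\Phi \cup \{\neg\phi\}$ via Lindenbaum to a maximal \KB-consistent set, and apply the Truth Lemma in the canonical model for \ls. Your explicit appeal to the canonicity lemma to certify that the counter-model is a genuine \emph{similarity} model is a point the paper leaves implicit, but the argument is the same.
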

\begin{proof}
To prove this, suppose the contrary: $\Phi \not\vdash_\KB \phi$. In this case, the set $\Phi\cup\{\neg\phi\}$ can be extended to a maximal \KB-consistent set $\Delta^+$. In the canonical model for \ls, denoted \CM, we have $\CM,\Delta^+\models \chi$ for any formula $\chi\in \Phi\cup\{\neg\phi\}$. This conclusion leads to $\Phi \not\models_\ls \phi$.
\end{proof}

\subsubsection{Proof by constructing a standard model (a path-based canonical model)}
\label{sec:completeness3}

When dealing with logics that involve distributed and/or mutual knowledge, the traditional canonical model method proves to be ineffective. To address this, a classical method has been proposed for logics with distributed knowledge (cf. \cite{FHV1992}, whose approach is based on unraveling techniques dating back to \cite{Sahlqvist1975}). The method starts by treating distributed and mutual knowledge as individual knowledge, then constructs a pseudo model incorporating these elements. This pseudo model is subsequently unraveled into a path-based tree-like model and then identified/folded into the required model. A simplified approach suggested by \cite{WA2020} advocates for directly building a path-based tree-like model, termed a standard model, bypassing the actual process of unraveling and identification/folding. Our logics can also adopt this method, and we aim to construct a standard model to achieve the completeness result.

\paragraph{\underline{Completeness of \KD}}

\begin{definition}[canonical path for \ld]
A \emph{canonical path} for \ld is defined as a sequence $\langle \Phi_0, G_1, \Phi_1, \dots, G_n, \Phi_n \rangle$, where:
\begin{itemize}
\item $\Phi_0, \Phi_1, \dots, \Phi_n$ represent maximal \KD-consistent sets of \langd-formulas,
\item $G_1, \dots, G_n$ denote groups of agents, i.e., nonempty subsets of \ag.
\end{itemize}
In the context of a canonical path (this also applies to canonical paths defined later) $s = \langle \Phi_0,G_1,\Phi_1,\dots,G_n,\Phi_n \rangle$, we denote $\Phi_n$ as $tail(s)$.
\qed
\end{definition}

\begin{definition}[standard model for \ld]
The standard model for \ld is represented as the tuple $\CM = (\CW, \CA, \CE, \CC, \CV)$, where:
\begin{itemize}
\item \CW is the set of all canonical paths for \ld;
\item $\CA = \wp(\ag)$;
\item $\CE : \CW \times \CW \to \wp(\CA)$ is defined such that for any $s, t \in \CW$, 
$$\CE(s,t)  = \left\{\begin{array}{ll}
	\bigcup_{a\in G} \CC(a), & \text{if $t$ is $s$ extended with $\langle G,\Psi \rangle$ and $\{\chi\mid D_G\chi \in tail(s)\} \subseteq \Psi$},\\
	\emptyset, & \text{otherwise;}
\end{array}\right.$$
\item $\CC : \ag \to \wp(\CA)$ is defined such that for any agent $a$, $\CC(a)=\{G \in \ag \mid a \in G \}$;
\item $\CV: \CW \to \wp(\prop)$ is defined such that for any $s \in \CW$, $\CV (s) = \{p \in \prop \mid p \in tail(s)\}$.
\qed
\end{itemize}
\end{definition}

\begin{lemma}[standardness]
The standard model for \ld, as defined above, indeed qualifies as a model.
\qed
\end{lemma}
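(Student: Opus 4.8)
The plan is to verify directly that the quintuple $\CM = (\CW, \CA, \CE, \CC, \CV)$ satisfies every clause of Definition~\ref{def:models}. Since that definition only requires $\CM$ to be a \emph{model} (and not a \emph{similarity} model), I need not check positivity or symmetry, but only that $(\CW,\CA,\CE)$ forms a graph in the sense of Definition~\ref{def:graphs} and that $\CC$ and $\CV$ have the correct types. Almost all clauses reduce to routine type-checking, which I would organise into the three groups below.

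First I would confirm that $(\CW,\CA,\CE)$ is a graph. The domain $\CW$ is nonempty: since \KD is consistent, Lindenbaum's lemma yields a maximal \KD-consistent set $\Phi_0$, and the one-element sequence $\langle \Phi_0 \rangle$ is by definition a canonical path for \ld, so $\CW \neq \emptyset$. The second component $\CA = \wp(\ag)$ is plainly a set of (abstract) epistemic abilities, as required. For the edge function I must check both that it takes values in $\wp(\CA)$ and that it is genuinely a function. For the range: in the first clause the value $\bigcup_{a\in G}\CC(a)$ is a union of the sets $\CC(a)=\{G'\subseteq\ag\mid a\in G'\}$, each of which is a subset of $\wp(\ag)=\CA$, so the union is itself a subset of $\CA$, i.e.\ a member of $\wp(\CA)$; in the second clause $\emptyset\in\wp(\CA)$ trivially. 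Hence $\CE$ indeed maps $\CW\times\CW$ into $\wp(\CA)$.

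The one point needing a genuine (if still easy) argument is the well-definedness of $\CE$, namely that its piecewise specification assigns each pair $(s,t)$ a unique value. The two clauses are mutually exclusive, since a pair either does or does not satisfy ``$t$ is $s$ extended with $\langle G,\Psi \rangle$ and $\{\chi\mid D_G\chi\in tail(s)\}\subseteq\Psi$''. The subtlety is that the first clause mentions a group $G$, so I must argue that $G$ is determined by the pair $(s,t)$: if $t$ is $s$ extended by a single step $\langle G,\Psi \rangle$, then $G$ is exactly the final group occurring in the sequence $t$ and $\Psi = tail(t)$, both of which are read off uniquely from $t$. Consequently the value $\bigcup_{a\in G}\CC(a)$ is unambiguous, and $\CE$ is a well-defined function. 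This uniqueness-of-$G$ step is the only place where more than bookkeeping is involved; I expect it to be the main (though modest) obstacle.

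Finally I would dispatch the capability function and the valuation. For $\CC$: for each agent $a$, the set $\CC(a)=\{G\subseteq\ag\mid a\in G\}$ is a collection of subsets of \ag, hence a subset of $\wp(\ag)=\CA$, so $\CC(a)\in\wp(\CA)$ and $\CC:\ag\to\wp(\CA)$ has the required type. For $\CV$: each canonical path $s$ has a tail $tail(s)$ that is a maximal \KD-consistent set, and $\CV(s)=\{p\in\prop\mid p\in tail(s)\}\subseteq\prop$, so $\CV:\CW\to\wp(\prop)$ is well-typed. Assembling these observations shows that $\CM$ meets all the requirements of Definition~\ref{def:models}, which establishes the lemma.
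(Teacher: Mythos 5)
Your verification is correct and is exactly the routine check the paper leaves implicit: the lemma is stated there without any proof, being regarded as immediate from Definitions~\ref{def:graphs} and \ref{def:models}. The two points you single out -- nonemptiness of $\CW$ via Lindenbaum's lemma applied to the consistent system \KD, and the uniqueness of the extending pair $\langle G,\Psi \rangle$ (hence well-definedness of the piecewise specification of $\CE$) -- are precisely what makes the omitted verification go through, so nothing is missing.
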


The above definitions and lemma form a groundwork to build upon for subsequent proofs and theorems, providing a robust framework that can be applied to complex logics involving distributed and mutual knowledge.

\begin{lemma}[Truth Lemma]
\label{lem:truth-eld}
In the canonical model for \ld, represented as $\CM = (\CW,\CA,\CE,\CC,\CV)$, the following correspondence holds: for any $s \in \CW$ and any \langd-formula $\phi$, $\phi \in tail(s)$ if and only if $\CM,s \models_{\ld} \phi$.
\end{lemma}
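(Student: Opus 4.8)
The plan is to prove the biconditional by induction on the structure of $\phi$, with the modal operators being the only nontrivial cases. The atomic case is immediate from the definition of $\CV$, and the boolean cases follow from the induction hypothesis (IH) together with the maximal consistency of $tail(s)$. Before treating $D_G$, I would record two preliminary observations. First, since $\vdash_{\KD} K_a\psi \lra D_{\{a\}}\psi$ (axiom D1) and $tail(s)$ is maximal \KD-consistent, we have $K_a\psi \in tail(s)$ iff $D_{\{a\}}\psi \in tail(s)$; moreover the semantic clauses for $K_a\psi$ and $D_{\{a\}}\psi$ coincide because $\bigcup_{b\in\{a\}}\CC(b) = \CC(a)$. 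Hence the case for $K_a$ is subsumed by the case for $D_G$ with $G=\{a\}$, and it suffices to treat $D_G$. Second, I would pin down exactly which states are $D_G$-accessible: computing $\bigcup_{a\in G}\CC(a) = \{H\subseteq\ag \mid H\cap G\neq\emptyset\}$, one checks that for nonempty groups $G,G'$, $\bigcup_{a\in G}\CC(a)\subseteq\bigcup_{a\in G'}\CC(a)$ iff $G\subseteq G'$. Since $\bigcup_{a\in G}\CC(a)$ is nonempty while $\CE(s,t)=\emptyset$ unless $t$ is $s$ extended with some $\langle G',\Psi\rangle$, the semantic antecedent $\bigcup_{a\in G}\CC(a)\subseteq\CE(s,t)$ holds precisely for those one-step extensions $t$ of $s$ by $\langle G',\Psi\rangle$ with $G\subseteq G'$ and $\{\chi\mid D_{G'}\chi\in tail(s)\}\subseteq\Psi = tail(t)$.

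For the direction $D_G\psi\in tail(s)\Rightarrow\CM,s\models_{\ld}D_G\psi$, I would take any $D_G$-accessible $t$, i.e.\ any extension of $s$ by $\langle G',\Psi\rangle$ with $G\subseteq G'$ as characterized above. By axiom D2, $\vdash_{\KD}D_G\psi\ra D_{G'}\psi$, so $D_{G'}\psi\in tail(s)$, whence $\psi\in\{\chi\mid D_{G'}\chi\in tail(s)\}\subseteq tail(t)$. By IH, $\CM,t\models_{\ld}\psi$; as $t$ was an arbitrary $D_G$-accessible state, $\CM,s\models_{\ld}D_G\psi$.

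For the converse I would argue contrapositively: assume $D_G\psi\notin tail(s)$ and build a witness. The key step is that $\{\chi\mid D_G\chi\in tail(s)\}\cup\{\neg\psi\}$ is \KD-consistent. If it were not, there would be a finite $\{\chi_1,\dots,\chi_k\}\subseteq\{\chi\mid D_G\chi\in tail(s)\}$ with $\vdash_{\KD}(\chi_1\wedge\dots\wedge\chi_k)\ra\psi$; applying necessitation for $D_G$ and the normality axiom $\mathrm{K}_D$ then yields $\vdash_{\KD}(D_G\chi_1\wedge\dots\wedge D_G\chi_k)\ra D_G\psi$, forcing $D_G\psi\in tail(s)$, a contradiction. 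I would extend the consistent set to a maximal \KD-consistent $\Psi$ and let $t$ be $s$ extended with $\langle G,\Psi\rangle$; then $\{\chi\mid D_G\chi\in tail(s)\}\subseteq\Psi$ gives $\CE(s,t)=\bigcup_{a\in G}\CC(a)$, so $t$ is $D_G$-accessible, while $\neg\psi\in tail(t)$ gives $\CM,t\not\models_{\ld}\psi$ by IH, so $\CM,s\not\models_{\ld}D_G\psi$.

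The main obstacle I anticipate is this existence-of-witness argument, and within it the fact that $D_G$ satisfies necessitation even though it is not a primitive rule of \KD: one derives it by noting that for any $a\in G$, $\vdash_{\KD}\theta$ gives $\vdash_{\KD}K_a\theta$ (rule N), hence $\vdash_{\KD}D_{\{a\}}\theta$ (D1), hence $\vdash_{\KD}D_G\theta$ (D2, since $\{a\}\subseteq G$). The subsidiary combinatorial fact translating inclusion of unions of abilities into group inclusion $G\subseteq G'$ is routine, but it is what aligns the semantics of $D_G$ with the tree structure of the standard model and must be established first.
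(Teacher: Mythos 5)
Your proof is correct and follows essentially the same route as the paper's: induction on $\phi$, with the left-to-right direction for $D_G$ handled via axiom D2 and the characterization of accessible one-step extensions, and the converse via consistency of $\{\neg\psi\}\cup\{\chi\mid D_G\chi\in tail(s)\}$ and extension of the path by $\langle G,\Psi\rangle$. The only presentational difference is that you subsume the $K_a$ case into $D_{\{a\}}$ via D1 and make explicit the combinatorial fact about inclusions of $\bigcup_{a\in G}\CC(a)$ and the derived necessitation rule for $D_G$, details the paper treats separately or leaves implicit.
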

\begin{proof}
We only show two cases here, namely, when $\phi = K_a\psi$ and when $\phi = D_G\psi$.

Case $\phi = K_a \psi$:
Suppose $K_a\psi\in tail(s)$, but $\CM,s\not\models_\ld K_a\psi$. Then, there exists some $t\in \CW$ such that $\CC(a)\subseteq \CE (s,t)$ and $\CM,t\not\models_\ld \psi$. This implies that for some group $G$ that includes $a$, the set $\{\chi\mid D_G\chi\in tail(s)\}$ is a subset of $tail(t)$. 
Hence, we have $\psi\in tail(t)$ since $K_a\psi \in tail(s)$ implies $D_G\psi \in tail(s)$. However, the induction hypothesis suggests that $\CM,t\models_\ld \psi$, leading to a contradiction.
In the other direction, suppose $K_a\psi \notin tail(s)$, but $\CM,s\models_\ld K_a\psi$. Then, for any $t\in \CW$, $\CC(a)\subseteq \CE (s,t)$ implies $\CM,t\models_\ld \psi$. If we extend the set $\{\neg\psi\}\cup\{\chi\mid K_a\chi\in\Gamma\}$ to a maximal \KD-consistent set $\Delta^+$, we find that $\CC(a)\subseteq \CE(s,t)$, where $t$ extends $s$ with $\langle\{a\},\Delta^+\rangle$. However, the induction hypothesis suggests that $\CM,t\models_\ld \neg\psi$, again leading to a contradiction.

Case $\phi = D_G\psi$:
Suppose $D_G\psi\in tail(s)$, but $\CM,s\not\models_\ld D_G\psi$. Then there exists some $t\in \CW$ such that $\bigcup_{a\in G}\CC(a)\subseteq \CE (s,t)$ and $\CM,t\not\models_\ld \psi$. This implies that for some group $H$ that is a superset of $G$, the set $\{\chi\mid D_H\chi\in tail(s)\}$ is a subset of $tail(t)$. Hence, we have $\psi \in tail(t)$ since $D_G\psi\in tail(s)$ implies $D_H\psi\in tail(s)$. However, the induction hypothesis suggests that $\CM,t\models_\ld \psi$, leading to a contradiction.
In the other direction, suppose $D_G\psi \notin tail(s)$, but $\CM,s\models_\ld D_G\psi$. Then, for any $t\in \CW$ $\bigcup_{a\in G}\CC(a)\subseteq \CE (s,t)$ implies $\CM,t\models_\ld \psi$. If we extend the set $\{\neg\psi\}\cup\{\chi\mid D_G\chi\in\Gamma\}$ (whose consistency can be proven similarly) to a maximal \KD-consistent set $\Delta^+$, we find that $\bigcup_{a\in G}\CC(a)\subseteq \CE(s,t)$, where $t$ extends $s$ with $\langle G,\Delta^+\rangle$. However, the induction hypothesis suggests that $\CM,t\models_\ld \neg\psi$, again leading to a contradiction.
\end{proof}

We can now leverage the above proof to establish the completeness of \KD in a similar manner.

\begin{theorem}[completeness of \KD, with a direct proof]
For any \langd-formula $\phi$ and any set $\Phi$ of \langd-formulas, if $\Phi \models_\ld \phi$, then $\Phi \vdash_\KD \phi$.
\qed
\end{theorem}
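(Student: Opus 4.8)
The plan is to prove the contrapositive through the standard model just constructed, in exact parallel with the completeness arguments already given for \K and \KB. Assume $\Phi \not\vdash_\KD \phi$. Then by the definition of derivability the set $\Phi \cup \{\neg\phi\}$ is \KD-consistent, so by Lindenbaum's lemma it extends to a maximal \KD-consistent set $\Delta^+$ of \langd-formulas.

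Next I would exhibit a state of the standard model $\CM = (\CW,\CA,\CE,\CC,\CV)$ for \ld whose tail is exactly $\Delta^+$. The natural choice is the length-zero canonical path $s = \langle \Delta^+ \rangle$, which is a legitimate element of $\CW$ and satisfies $tail(s) = \Delta^+$. Applying the Truth Lemma (Lemma~\ref{lem:truth-eld}) at $s$ yields $\CM, s \models_\ld \chi$ for precisely the $\chi \in \Delta^+$; in particular $\CM, s \models_\ld \psi$ for every $\psi \in \Phi$ and $\CM, s \models_\ld \neg\phi$. Thus $s$ witnesses $\Phi \not\models_\ld \phi$, which is the contrapositive of the claim.

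Since the substantive work has already been absorbed into the construction of the standard model and into the Truth Lemma, the remaining argument is short, and the only points to verify are bookkeeping: that the singleton path genuinely lies in $\CW$ with the correct tail, that Lindenbaum's lemma is available (needing only the finitary character of \KD, which also explains why \emph{strong} rather than merely weak completeness is attainable for a system lacking common knowledge), and that the Truth Lemma is invoked at the right state. The main obstacle is therefore not located in this theorem at all but in the already-established Lemma~\ref{lem:truth-eld}; the one conceptual emphasis worth making here is that the method delivers strong completeness because every \KD-consistent \emph{set}—not just every consistent formula—can be Lindenbaum-extended and then realized as the tail of a single canonical path.
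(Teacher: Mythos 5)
Your proposal is correct and follows exactly the paper's own argument: contrapositive via Lindenbaum extension of $\Phi\cup\{\neg\phi\}$ to a maximal \KD-consistent set $\Delta^+$, realization as the singleton canonical path $\langle\Delta^+\rangle$ in the standard model, and an appeal to the Truth Lemma (Lemma~\ref{lem:truth-eld}). Your added remarks on why strong (rather than weak) completeness is obtained are accurate and consistent with the paper's discussion of compactness.
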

\begin{proof}
Assuming $\Phi \nvdash_\KD \phi$, then $\Phi\cup\{\neg\phi\}$ is \KD-consistent. As a result, it can be extended to a maximal \KD-consistent set $\Delta^+$. Given the standard model $\CM$ for \ld, for any formula $\phi \in \Delta$, we can conclude that $\CM, \langle \Delta^+ \rangle \models \psi$. This leads us to infer that $\Phi \not\models_\ld \phi$.
\end{proof}

\paragraph{\underline{Completeness of \KBD}}

By revising the standard model, we can obtain a completeness proof for \KBD. The process is similar to the proof for \KD. The main difference is that we need to adjust the standard model for it to be symmetric. Canonical paths for \lsd are defined in the same way as those for \ld, with the only difference being the need to replace ``\KD-consistent sets'' to ``\KBD-consistent sets''.

\begin{definition}[standard model for \lsd]\label{def:cm-elds}
The \emph{standard model} for \lsd is obtained by adapting the standard model for \ld (Definition~\ref{def:cm-el}) in two places:
(1) \CW is changed to be the set of all canonical paths for \lsd, and (2)
$$\CE(s,t)  = \left\{\begin{array}{ll}
	\bigcup_{a\in G} \CC(a), & \text{if $t$ is $s$ extended with $\langle G,\Psi \rangle$, $\{\chi\mid D_G\chi \in tail(s)\} \subseteq \Psi$ and $\{\chi\mid D_G\chi \in \Psi\} \subseteq tail(s)$},\\
	\bigcup_{a\in G} \CC(a), & \text{if $s$ is $t$ extended with $\langle G,\Psi \rangle$, $\{\chi\mid D_G\chi \in tail(t)\} \subseteq \Psi$ and $\{\chi\mid D_G\chi \in \Psi\} \subseteq tail(t)$},\\
	\emptyset, & \text{otherwise.}
\end{array}\right.$$
\end{definition}

\begin{lemma}[standardness]
The standard model for \lsd is a similarity model.
\end{lemma}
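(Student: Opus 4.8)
The plan is to verify the two defining conditions of a similarity graph from Definition~\ref{def:graphs} — symmetry and positivity — for the edge function $\CE$ of the standard model for \lsd, having first noted that it is a model in the sense of Definition~\ref{def:models}. The latter point is immediate: the domain $\CW$, the ability set $\CA = \wp(\ag)$, the capability function $\CC$ and the valuation $\CV$ are inherited essentially unchanged from the standard model for \ld, whose modelhood is the content of the corresponding standardness lemma; the only alteration is to $\CE$, which still maps every pair of canonical paths into $\wp(\CA)$. Hence it remains only to check the two extra conditions.

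Symmetry can be read off directly from the case split in Definition~\ref{def:cm-elds}. The first clause assigns $\bigcup_{a \in G}\CC(a)$ to $\CE(s,t)$ exactly when $t$ extends $s$ by $\langle G,\Psi\rangle$ with the two back-and-forth containments $\{\chi\mid D_G\chi \in tail(s)\}\subseteq\Psi$ and $\{\chi\mid D_G\chi\in\Psi\}\subseteq tail(s)$, whereas the second clause assigns the same value when $s$ extends $t$ by $\langle G,\Psi\rangle$ under the mirror-image containments. Exchanging $s$ and $t$ therefore swaps the roles of these two clauses while leaving the assigned value $\bigcup_{a \in G}\CC(a)$ untouched, and the ``otherwise'' clause is manifestly symmetric. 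Thus $\CE(s,t) = \CE(t,s)$ for all $s,t \in \CW$.

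For positivity I would argue, exactly as in the canonicity proof for \ls, that $\CE(s,t)$ can never equal the whole ability set $\CA$, so the condition holds vacuously. Indeed $\CC(a) = \{ G' \subseteq \ag \mid a \in G' \}$, so $\emptyset \notin \CC(a)$ for every agent $a$ (since $a \notin \emptyset$); consequently $\emptyset \notin \bigcup_{a \in G}\CC(a)$ for every nonempty group $G$. Since $\CE(s,t)$ is by definition either $\emptyset$ or $\bigcup_{a \in G}\CC(a)$ for some nonempty $G$, and the empty subset of $\ag$ is an element of $\CA = \wp(\ag)$ lying in neither, we obtain $\CE(s,t) \subsetneq \CA$ in every case. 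Hence there is no pair $s,t$ with $\CE(s,t) = \CA$, and positivity follows trivially.

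I do not anticipate a genuine obstacle: both conditions reduce to inspection of the definitions. The only points deserving care are the observation that the two nontrivial clauses of $\CE$ are mirror images of one another — so that the newly added ``backward'' clause, absent in the \ld case, is precisely what restores symmetry — and the small combinatorial fact that $\emptyset \in \wp(\ag)$ witnesses $\CE(s,t) \neq \CA$, which is what drives positivity.
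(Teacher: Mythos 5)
Your proposal is correct and follows essentially the same argument as the paper: positivity holds vacuously because $\emptyset \notin \CC(a)$ for any agent $a$, so $\CE(s,t)$ can never be all of $\CA$, and symmetry holds because the two nontrivial clauses of $\CE$ in Definition~\ref{def:cm-elds} are mirror images, making $\CE$ commutative. Your write-up merely spells out these two observations in more detail than the paper's terse proof.
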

\begin{proof}
Note that $\emptyset \notin \CC(a)$ for any agent $a$. This implies that for any $s,t \in \CW$, $\CE(s,t) \neq \CA$, thereby meeting the criterion of positivity. Additionally, the condition of symmetry is fulfilled as $\CE$ is a commutative function.
\end{proof}

\begin{lemma}[Truth Lemma]
Let $\CM = (\CW,\CA,\CE,\CC,\CV)$ be the canonical model for \lsd. For any $s \in \CW$ and \langd-formula $\phi$, we have $\phi \in tail(s)$ iff $\CM,s \models_{\lsd} \phi$.
\end{lemma}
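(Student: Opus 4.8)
The plan is to prove the lemma by induction on the structure of $\phi$, following the template of Lemma~\ref{lem:truth-eld} but grafting onto it the symmetric ``back-and-forth'' bookkeeping that was used for \KB in Lemma~\ref{lem:truth-KB}. The propositional-atom case holds by the definition of $\CV$, and the boolean cases follow immediately from the induction hypothesis, so these I would dispatch in one line. Because D1 makes $K_a\psi$ interderivable with $D_{\{a\}}\psi$, I would treat the $K_a\psi$ case as the instance $G = \{a\}$ of the distributed modality and concentrate all the real work on $\phi = D_G\psi$.

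For the direction $D_G\psi \in tail(s) \Rightarrow \CM,s \models_{\lsd} D_G\psi$, I would first record the order-theoretic fact that, since $\CC(a) = \{H \subseteq \ag \mid a \in H\}$, one has $\bigcup_{a\in G}\CC(a) \subseteq \bigcup_{a\in H}\CC(a)$ precisely when $G \subseteq H$. Hence any $t$ with $\bigcup_{a\in G}\CC(a) \subseteq \CE(s,t)$ forces $\CE(s,t) = \bigcup_{a\in H}\CC(a)$ for some $H \supseteq G$, and by the definition of $\CE$ (Definition~\ref{def:cm-elds}) this means either $t$ is $s$ extended with $\langle H,\Psi\rangle$ (first clause) or $s$ is $t$ extended with $\langle H,\Psi\rangle$ (second clause), in each case subject to the two containment conditions. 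The key observation is that in \emph{both} clauses the conditions deliver $\{\chi \mid D_H\chi \in tail(s)\} \subseteq tail(t)$; since $D_G\psi \in tail(s)$ gives $D_H\psi \in tail(s)$ by D2 (as $G \subseteq H$), I get $\psi \in tail(t)$ and close the case with the induction hypothesis. Thus the symmetric edge definition lets one argument cover both ``$t$ extends $s$'' and ``$s$ extends $t$''.

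For the contrapositive direction I would assume $D_G\psi \notin tail(s)$ and construct a witnessing successor, mirroring the \KB argument. The claim to establish is that $\{\neg\psi\} \cup \{\chi \mid D_G\chi \in tail(s)\} \cup \{\neg D_G\neg\chi \mid \chi \in tail(s)\}$ is \KBD-consistent; here the axiom BD ($\phi \ra D_G\neg D_G\neg\phi$), which yields $\vdash_{\KBD}\neg D_G\neg D_G\eta \ra \eta$, absorbs the middle block into the third block exactly as $\neg K_a\neg K_a\eta \ra \eta$ did in Lemma~\ref{lem:truth-KB}, and K$_D$ together with the derived $D_G$-necessitation finishes the consistency argument by forcing $D_G\psi \in tail(s)$ on pain of contradiction. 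Extending this set to a maximal \KBD-consistent $\Delta^+$ and setting $t$ to be $s$ extended with $\langle G,\Delta^+\rangle$, the forward containment $\{\chi \mid D_G\chi \in tail(s)\} \subseteq \Delta^+$ holds by construction, while the backward containment $\{\chi \mid D_G\chi \in \Delta^+\} \subseteq tail(s)$ follows because a supposed $D_G\chi \in \Delta^+$ with $\chi \notin tail(s)$ gives $\neg\chi \in tail(s)$, hence $\neg D_G\neg\neg\chi \in \Delta^+$, hence $\neg D_G\chi \in \Delta^+$ (using that $D_G$ respects provable equivalence), a contradiction. Both edge conditions of the second-clause being met, $\CE(s,t) = \bigcup_{a\in G}\CC(a)$, and since $\neg\psi \in \Delta^+ = tail(t)$ the induction hypothesis gives $\CM,t \not\models_{\lsd}\psi$, witnessing $\CM,s \not\models_{\lsd} D_G\psi$.

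I expect the main obstacle to be this contrapositive step: one must engineer the three-part set so that its \KBD-consistency is provable \emph{and} the appended state simultaneously satisfies both symmetric edge conditions of $\CE$. This is precisely the point where the distributed-knowledge path construction (the successor must see every $\chi$ with $D_G\chi \in tail(s)$) has to be fused with the symmetry device of \KB (the extra $\neg D_G\neg\chi$ formulas plus the BD axiom supplying the backward containment). Everything else is routine bookkeeping inherited from Lemmas~\ref{lem:truth-eld} and \ref{lem:truth-KB}.
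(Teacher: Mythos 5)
Your proposal is correct and follows essentially the same route as the paper: the forward direction reuses the argument for \ld while checking both symmetric clauses of $\CE$, and the backward direction for $D_G\psi$ builds exactly the paper's three-part set $\{\neg\psi\}\cup\{\chi\mid D_G\chi\in tail(s)\}\cup\{\neg D_G\neg\chi\mid \chi\in tail(s)\}$, uses BD to establish consistency, and extends $s$ with $\langle G,\Delta^+\rangle$ so that both containment conditions of the edge definition hold. The only cosmetic differences are that you fold the $K_a$ case into $D_{\{a\}}$ via D1 (the paper instead points back to the \KB truth lemma) and you fill in details the paper leaves implicit; your reference to ``the second-clause'' when placing $t$ as an extension of $s$ should read the first clause of Definition~\ref{def:cm-elds}, but this is a labeling slip, not a gap.
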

\begin{proof}
The proof proceeds by considering the two directions of the equivalence separately. The direction from $\phi \in tail(s)$ to $\CM,s\models_\lsd \phi$ can be shown similarly to the proof of Lemma~\ref{lem:truth-eld}. For the reverse direction, we consider the cases $\phi = K_a\psi$ and $\phi=D_G\psi$ here.

Case $\phi=K_a\psi$: This case can be shown similarly to Lemma~\ref{lem:truth-KB}.

Case $\phi = D_G\psi$: This case is handled by contradiction.
Suppose $D_G\psi\notin tail(s)$, but $\CM,s\models_\lsd D_G\psi$. Then, by the definition of \CE, $\bigcup_{a\in G}\CC(a)\subseteq \CE (s,t)$ implies $\CM,t\models_\lsd \psi$ for any $t\in \CW$. We can extend $\{\neg\psi\}\cup\{\chi\mid D_G\chi\in\Gamma\}\cup\{\neg D_G\neg\chi\mid \chi\in\Gamma\}$ to some maximal \KBD-consistent set $\Delta^+$. Then, we have $\bigcup_{a\in G}\CC(a)\subseteq \CE(s,t)$ where $t$ extends $s$ with $\langle G,\Delta^+\rangle$. By the induction hypothesis we have $\CM,t\models_\ld \neg\psi$, leading to a contradiction.
\end{proof}

By applying the Truth Lemma, we can now prove the completeness theorem for \lsd.

\begin{theorem}[completeness of \KBD, part 3]
\label{thm:completeness3}
For any \langd-formula $\phi$ and any set $\Phi$ of \langd-formulas, if $\Phi \models_\lsd \phi$, then $\Phi \vdash_\KBD \phi$.
\qed
\end{theorem}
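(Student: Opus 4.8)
The plan is to prove the contrapositive by a standard Lindenbaum-style argument, mirroring exactly the completeness proof for \KD but now drawing on the standard model for \lsd of Definition~\ref{def:cm-elds} and the Truth Lemma just established for it. First I would assume $\Phi \nvdash_\KBD \phi$. Then $\Phi \cup \{\neg\phi\}$ must be \KBD-consistent: otherwise a finite conjunction of members of $\Phi$ would prove $\phi$, contradicting $\Phi \nvdash_\KBD \phi$. By Lindenbaum's lemma I would extend $\Phi \cup \{\neg\phi\}$ to a maximal \KBD-consistent set $\Delta^+$ of \langd-formulas.

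Next I would pass to the standard model $\CM = (\CW,\CA,\CE,\CC,\CV)$ for \lsd and consider the length-one canonical path $\langle \Delta^+ \rangle$, for which $tail(\langle \Delta^+ \rangle) = \Delta^+$. Since every $\chi \in \Phi \cup \{\neg\phi\}$ lies in $\Delta^+ = tail(\langle \Delta^+ \rangle)$, the Truth Lemma for \lsd yields $\CM, \langle \Delta^+ \rangle \models_\lsd \chi$ for each such $\chi$; in particular $\CM, \langle \Delta^+ \rangle \models_\lsd \psi$ for all $\psi \in \Phi$ and $\CM, \langle \Delta^+ \rangle \models_\lsd \neg\phi$.

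Finally, the standardness lemma for \lsd guarantees that $\CM$ is a genuine similarity model: positivity holds because $\emptyset \notin \CC(a)$ forces every $\CE$-value to differ from $\CA$, and symmetry holds because the revised $\CE$ of Definition~\ref{def:cm-elds} is commutative. Hence $\langle \Delta^+ \rangle$ is a state in a similarity model satisfying all of $\Phi$ together with $\neg\phi$, which witnesses $\Phi \not\models_\lsd \phi$. This establishes the contrapositive and therefore the theorem.

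Since the two genuinely substantive ingredients---the Truth Lemma and the verification that the standard model is a similarity model---are supplied by the immediately preceding lemmas, I expect no real obstacle here. The only points needing care are the routine consistency argument for $\Phi \cup \{\neg\phi\}$ and the observation that evaluating at the singleton path $\langle \Delta^+ \rangle$ recovers exactly the formulas in $\Delta^+$, both of which are entirely standard.
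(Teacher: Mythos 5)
Your proposal is correct and follows exactly the route the paper intends: the paper leaves this theorem's proof implicit (marked \qed) precisely because it is the same Lindenbaum-plus-Truth-Lemma argument it spelled out for \KD, now instantiated with the standard model for \lsd and its standardness lemma. Nothing is missing.
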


\paragraph{\underline{Completeness of \KM and \KBM}}

The completeness of \KM and \KBM can be demonstrated in a manner that parallels the completeness of \KD and \KBD. While we will not delve into the intricate details of the proofs, we will outline the necessary adaptations to the definitions of the standard model for each of the logics.

A canonical path for \lm or \lsm mirrors that for \ld. The only modification required is the adjustment of the maximal consistent sets to align with the axiomatic system being considered.

When defining the standard model for \lm, we substitute \CW with the set of all canonical paths for \lm, and let
$$\CE(s,t)  = \left\{\begin{array}{ll}
	\bigcap_{a\in G} \CC(a), & \text{if $t$ is an extension of $s$ with $\langle G,\Psi \rangle$ and $\{\chi\mid M_G\chi \in tail(s)\} \subseteq \Psi$},\\
	\emptyset, & \text{otherwise.}
\end{array}\right.$$
Similarly, while forming the standard model for \lsm, we replace \CW with the set of all canonical paths for \lsm, and let
$$\CE(s,t)  = \left\{\begin{array}{ll}
	\bigcap_{a\in G} \CC(a), & \text{if $t$ is $s$ extended with $\langle G,\Psi \rangle$, $\{\chi\mid M_G\chi \in tail(s)\} \subseteq \Psi$ and $\{\chi\mid M_G\chi \in \Psi\} \subseteq tail(s)$},\\
	\bigcap_{a\in G} \CC(a), & \text{if $s$ is $t$ extended with $\langle G,\Psi \rangle$, $\{\chi\mid M_G\chi \in tail(t)\} \subseteq \Psi$ and $\{\chi\mid M_G\chi \in \Psi\} \subseteq tail(t)$},\\
	\emptyset, & \text{otherwise.}
\end{array}\right.$$
Please note that $\bigcup_{a\in G} \CC(a) = \{ H \mid H \cap G \neq \emptyset \}$, which includes all the common epistemic abilities of those groups $H$ that intersects with $G$. Additionally, $\bigcap_{a\in G} \CC(a) = \{ H \mid G \subseteq H \}$, which represents all the supersets of $G$.

By using analogous proof structures, we can demonstrate the standardness of these models, derive a Truth Lemma, and subsequently establish the completeness of the logics.

\begin{theorem}[completeness, part 4]\label{thm:completeness4}
\KM and \KBM are strongly complete for \lm and \lsm, respectively.
\qed
\end{theorem}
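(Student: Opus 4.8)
The plan is to transcribe the path-based (standard model) argument of Section~\ref{sec:completeness3} for \KD and \KBD, replacing every occurrence of the distributed-knowledge data by its mutual-knowledge analogue. Canonical paths for \lm (resp.\ \lsm) are exactly those for \ld (resp.\ \lsd) with ``\KD-consistent'' replaced by ``\KM-consistent'' (resp.\ ``\KBD-consistent'' by ``\KBM-consistent''), and the two standard models are the ones already displayed above, whose edge function is built from $\bigcap_{a\in G}\CC(a)$ together with the clause $\{\chi\mid M_G\chi\in tail(s)\}\subseteq\Psi$ (plus its commutative twin in the \lsm case). What remains is then (i) standardness, (ii) a Truth Lemma, and (iii) a Lindenbaum-style conclusion identical to the \KD/\KBD theorems.

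For standardness the \lm model is a model by inspection; for \lsm I would verify positivity and symmetry. Using the identity $\bigcap_{a\in G}\CC(a)=\{H\mid G\subseteq H\}$ recorded just before the theorem and the nonemptiness of $G$, no edge can equal $\wp(\ag)=\CA$, so positivity holds, while symmetry is immediate from the commutative definition of $\CE$. The core is the Truth Lemma, by induction on $\phi$ with the interesting cases $K_a\psi$ and $M_G\psi$. The case $K_a\psi$ collapses to $M_{\{a\}}\psi$: a nonempty edge $\CE(s,t)=\bigcap_{b\in H}\CC(b)$ contains $\CC(a)$ exactly when $H=\{a\}$, and axiom M1 rewrites $\{\chi\mid M_{\{a\}}\chi\in tail(s)\}$ as $\{\chi\mid K_a\chi\in tail(s)\}$, reducing to the ordinary $K_a$ computation. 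For $M_G\psi$ the forward direction takes a refuting successor $t$; its edge is nonempty, so $\CE(s,t)=\bigcap_{b\in H}\CC(b)$ for the labelling group $H$, and $\bigcap_{a\in G}\CC(a)\subseteq\CE(s,t)$ forces $H\subseteq G$; axiom M2 (with $H\subseteq G$) then gives $M_H\psi\in tail(s)$, hence $\psi\in tail(t)$, contradicting the induction hypothesis. The backward direction extends $\{\neg\psi\}\cup\{\chi\mid M_G\chi\in tail(s)\}$ to a maximal consistent $\Delta^+$ and attaches $\langle G,\Delta^+\rangle$ to $s$; its consistency mirrors Lemma~\ref{lem:truth-eld}, and for \lsm one instead extends $\{\neg\psi\}\cup\{\chi\mid M_G\chi\in tail(s)\}\cup\{\neg M_G\neg\chi\mid\chi\in tail(s)\}$ and invokes axiom BM to secure the two-way inclusion demanded by the commutative edge function, exactly as axiom B is used in Lemma~\ref{lem:truth-KB}.

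The decisive point — and the place where the argument is not a mechanical copy of the distributed case — is the reversed inclusion: here $\bigcap_{a\in G}\CC(a)\subseteq\bigcap_{b\in H}\CC(b)$ holds iff $H\subseteq G$, the mirror image of the distributed model, and it is precisely this that makes M2 (with $H\subseteq G$, rather than D2's $G\subseteq H$) the correct characterizing axiom. A second subtlety hides in the consistency step of the backward direction, which needs $M_G$ to be a normal operator, i.e.\ the $\mathrm{K}_M$ axiom \emph{and} a necessitation rule for $M_G$. Unlike distributed knowledge, where necessitation for $D_G$ is recoverable from D1 and D2 by ascending to a larger group, M2 only descends to subgroups, so for mutual knowledge one must genuinely rely on the necessitation rule attached to $M_G$. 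Once the Truth Lemma is established, strong completeness of \KM and \KBM follows by the same final step as for \KD and \KBD.
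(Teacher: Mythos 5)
Your proposal follows the same route the paper itself takes and only sketches: transcribe the path-based standard-model construction used for \KD and \KBD, replacing $\bigcup_{a\in G}\CC(a)$ by $\bigcap_{a\in G}\CC(a)=\{H\mid G\subseteq H\}$ and the $D_G$-clauses by $M_G$-clauses. Your computations of the edge-inclusion conditions are exactly right --- $\CC(a)\subseteq\bigcap_{b\in H}\CC(b)$ iff $H=\{a\}$, and $\bigcap_{a\in G}\CC(a)\subseteq\bigcap_{b\in H}\CC(b)$ iff $H\subseteq G$ --- and these are precisely what make M1 and M2 (with $H\subseteq G$) the correct characterizing axioms; the standardness and Lindenbaum steps are as in the \KD/\KBD case.

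Your remark about necessitation for $M_G$ is the one substantive point, and it is well taken. For $D_G$ the rule is derivable: $\vdash\phi$ gives $\vdash K_a\phi$ by N, then $\vdash D_{\{a\}}\phi$ by D1, then $\vdash D_G\phi$ by D2 since $\{a\}\subseteq G$. For $M_G$ the same chain stops at $\vdash M_{\{a\}}\phi$, because M2 only descends to subgroups, and \sysM as displayed in Figure~\ref{fig:sysM} lists no rule at all. As literally defined, \KM therefore does not derive the valid formula $M_G\top$ for $|G|\ge 2$, and the consistency step in the backward direction of the Truth Lemma for $M_G\psi$ --- which needs $\vdash(\chi_1\wedge\dots\wedge\chi_n)\ra\psi$ to yield $\vdash(M_G\chi_1\wedge\dots\wedge M_G\chi_n)\ra M_G\psi$ --- cannot be carried out. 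Your proof is correct once a necessitation rule for $M_G$ is added to \sysM (and likewise to \sysBM); the paper's sketch silently elides this, so you have identified a genuine omission in the axiomatization rather than introduced a gap of your own.
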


\subsubsection{Incorporation of both distributed and mutual knowledge}
\label{sec:completeness4}

We now discuss logics and their axiomatic systems that incorporate both distributed and mutual knowledge but exclude common knowledge. Specifically, we focus on \KDM and \KBDM. The construction process requires careful consideration of the intricate interaction between the two types of knowledge modalities.

\paragraph{\underline{Completeness of \KDM and \KBDM}}

\begin{definition}[canonical path for \ldm/\lsdm]
A \emph{canonical path for \ldm} is a sequence $\langle \Phi_0, I_1, \Phi_1, \dots, I_n, \Phi_n \rangle$ where:
\begin{itemize}
\item $\Phi_0,\Phi_1,\dots,\Phi_n$ are maximal \KDM-consistent sets of \langdm-formulas;
\item $I_1, \dots, I_n $ are of the form $(G, d)$ or $(G, m)$, with $G$ denoting a group, and ``$d$'' and ``$m$'' being just two distinct characters.
\end{itemize}
The \emph{canonical path for \lsdm} is similarly defined , with the only alteration being the replacement of \KDM with \KBDM.
\qed
\end{definition}

\begin{definition}[standard model for \ldm/\lsdm]
\label{def:sm-ldm}
The standard model for \ldm is a tuple $\CM = (\CW, \CA, \CE, \CC, \CV)$ where:
\begin{itemize}
\item \CW is the set of all canonical paths for \ldm;
\item $\CA = \wp(\ag)$;
\item $\CE : \CW \times \CW \to \wp(\CA)$ is such that for any $s, t \in \CW$, 
$$\CE(s,t)  = \left\{\begin{array}{ll}
	\bigcup_{a\in G} \CC(a), & \text{if $t$ is $s$ extended with $\langle (G,d), \Psi \rangle$ and $\{\chi\mid D_G\chi \in tail(s)\} \subseteq \Psi$},\\
	\bigcap_{a\in G} \CC(a), & \text{if $t$ is $s$ extended with $\langle (G,m), \Psi \rangle$ and $\{\chi\mid M_G\chi \in tail(s)\} \subseteq \Psi$},\\
	\emptyset, & \text{otherwise;}
\end{array}\right.$$
\item $\CC : \ag \to \wp(\CA)$ is such that for any agent $a$, $\CC(a)=\{G \in \ag \mid a \in G \}$;
\item $\CV: \CW \to \wp(\prop)$ is such that for any $s \in \CW$, $\CV (s) = \{p \in \prop \mid p \in tail(s)\}$.
\end{itemize}
The standard model for \lsdm is largely defined in the same way, with some changes:
\begin{itemize}
\item \CW is the set of all canonical paths for \lsdm;
\item $\CE : \CW \times \CW \to \wp(\CA)$ is defined with the necessary adjustments to account for symmetry, namely, for any $s, t \in \CW$, 
\end{itemize}
$$\CE(s,t)  = \left\{\begin{array}{ll}
	\bigcup_{a\in G} \CC(a), & \text{if $t$ extends $s$ with $\langle (G,d), \Psi \rangle$,  $\{\chi\mid D_G\chi \in tail(s)\} \subseteq \Psi$ and $\{\chi\mid D_G\chi \in \Psi \} \subseteq tail(s)$},\\
	\bigcup_{a\in G} \CC(a), & \text{if $s$ extends $t$ with $\langle (G,d), \Psi \rangle$,  $\{\chi\mid D_G\chi \in tail(t)\} \subseteq \Psi$ and $\{\chi\mid D_G\chi \in \Psi \} \subseteq tail(t)$},\\
	\bigcap_{a\in G} \CC(a), & \text{if $t$ extends $s$ with $\langle (G,m), \Psi \rangle$, $\{\chi\mid M_G\chi \in tail(s)\} \subseteq \Psi$ and $\{\chi\mid M_G\chi \in \Psi \} \subseteq tail(s)$},\\
	\bigcap_{a\in G} \CC(a), & \text{if $s$ extends $t$ with $\langle (G,m), \Psi \rangle$, $\{\chi\mid M_G\chi \in tail(t)\} \subseteq \Psi$ and $\{\chi\mid M_G\chi \in \Psi \} \subseteq tail(t)$},\\
	\emptyset, & \text{otherwise.}
\end{array}\right.$$
\end{definition}

It is straightforward to verify that the standard model for \ldm is indeed a model, and the standard model for \lsdm is a similarity model.

\begin{lemma}[Truth Lemma]\label{lem:truthdm}
The following statements hold:
\begin{enumerate}
\item\label{it:truthdm1} Let $\CM = (\CW, \CA, \CE, \CC, \CV)$ be the standard model for \ldm. For any $s \in \CW$ and any \langdm-formula $\phi$, $\phi \in tail(s)$ if and only if $\CM,s \models_{\ldm} \phi$;
\item\label{it:truthdm2} Let $\CM = (\CW, \CA, \CE, \CC, \CV)$ be the standard model for \lsdm. For any $s \in \CW$ and any \langdm-formula $\phi$, $\phi\in tail(s)$ if and only if $\CM,s\models_{\lsdm}\phi$.
\end{enumerate}
\end{lemma}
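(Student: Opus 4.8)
The plan is to prove both clauses of Lemma~\ref{lem:truthdm} simultaneously by induction on the structure of $\phi$, following closely the pattern established in Lemmas~\ref{lem:truth-eld} and~\ref{lem:truth-KB}. The atomic, boolean cases are routine via the definition of $\CV$ and the induction hypothesis. The modal cases $K_a\psi$, $D_G\psi$, and $M_G\psi$ are where the work lies, and each splits into the two directions ($\phi\in tail(s)$ $\Rightarrow$ truth, and truth $\Rightarrow$ $\phi\in tail(s)$). For the \ldm-model (clause~\ref{it:truthdm1}) the reasoning is essentially identical to the separate treatments already given for distributed and mutual knowledge, since the edge function $\CE$ now simply dispatches on whether the last path-label is of the form $(G,d)$ or $(G,m)$ and applies the corresponding clause. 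The key point to observe is that $K_a\psi$ reduces to both $D_{\{a\}}\psi$ and $M_{\{a\}}\psi$ via axioms D1 and M1, and that by the definition of $\CC$ we have $\CC(a)\subseteq\CE(s,t)$ precisely when $t$ extends $s$ by some $\langle(G,d),\Psi\rangle$ with $a\in G$ (so $\CC(a)\subseteq\bigcup_{b\in G}\CC(b)$) and $\{\chi\mid D_G\chi\in tail(s)\}\subseteq\Psi$, or analogously by some $\langle(G,m),\Psi\rangle$ with $a\in G$.

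First I would handle the ``box-membership to truth'' direction for each operator. For $D_G\psi\in tail(s)$, suppose for contradiction that $\CM,s\not\models D_G\psi$; then some $t$ with $\bigcup_{a\in G}\CC(a)\subseteq\CE(s,t)$ has $\CM,t\not\models\psi$. By the definition of $\CE$, this forces $t$ to extend $s$ by some $\langle(H,d),\Psi\rangle$ with $\bigcup_{a\in G}\CC(a)\subseteq\bigcup_{a\in H}\CC(a)$ (hence $G\subseteq H$) and $\{\chi\mid D_H\chi\in tail(s)\}\subseteq\Psi$; since $D_G\psi\in tail(s)$ implies $D_H\psi\in tail(s)$ by D2, we get $\psi\in\Psi=tail(t)$, contradicting the induction hypothesis. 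The $M_G$ case is symmetric but uses the reversed monotonicity M2 (from $M_G$ to $M_H$ with $H\subseteq G$), matching the fact that $\bigcap_{a\in G}\CC(a)=\{H\mid G\subseteq H\}$, so that $\bigcap_{a\in G}\CC(a)\subseteq\bigcap_{a\in H}\CC(a)$ forces $H\subseteq G$. The $K_a$ case follows from the $D$ or $M$ case through D1/M1.

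Next I would treat the ``truth to box-membership'' direction by the contrapositive: assuming $D_G\psi\notin tail(s)$, I show $\{\neg\psi\}\cup\{\chi\mid D_G\chi\in tail(s)\}$ is \KDM-consistent (using $K_D$ and the deduction machinery exactly as in Lemma~\ref{lem:truth-eld}), extend it to a maximal consistent $\Delta^+$, and form $t$ by extending $s$ with $\langle(G,d),\Delta^+\rangle$; then $\bigcup_{a\in G}\CC(a)=\CE(s,t)$ and $\CM,t\not\models\psi$ by the induction hypothesis, so $\CM,s\not\models D_G\psi$. The $M_G$ direction is analogous with $K_M$ and the label $(G,m)$. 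For clause~\ref{it:truthdm2} (the \lsdm-model) the only genuinely new ingredient is symmetry: the consistency argument must be strengthened to include the back-looking formulas, i.e.\ one shows $\{\neg\psi\}\cup\{\chi\mid D_G\chi\in tail(s)\}\cup\{\neg D_G\neg\chi\mid\chi\in tail(s)\}$ is \KBDM-consistent, exactly as the \KB-style argument in Lemma~\ref{lem:truth-KB} but invoking the B-variant axiom BD (resp.\ BM for mutual knowledge) in place of B. The extension $\Delta^+$ then satisfies both the forward condition $\{\chi\mid D_G\chi\in tail(s)\}\subseteq\Delta^+$ and the backward condition $\{\chi\mid D_G\chi\in\Delta^+\}\subseteq tail(s)$, so the symmetric clause of $\CE$ applies.

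The main obstacle I anticipate is purely bookkeeping rather than conceptual: because $D$ and $M$ coexist, one must verify in the ``box to truth'' direction that an edge $\CE(s,t)$ with, say, $\bigcup_{a\in G}\CC(a)\subseteq\CE(s,t)$ cannot be accidentally realised by a mutual-knowledge extension $\langle(H,m),\Psi\rangle$, and vice versa. This requires checking that $\bigcup_{a\in G}\CC(a)=\{H\mid H\cap G\neq\emptyset\}$ is never a subset of any $\bigcap_{a\in H}\CC(a)=\{H'\mid H\subseteq H'\}$ unless the label types match appropriately; since $\CA=\wp(\ag)$ contains the singleton $\{a\}$ for $a\in G$, and such a singleton lies in $\bigcup_{a\in G}\CC(a)$ but lies in $\bigcap_{a\in H}\CC(a)$ only if $H\subseteq\{a\}$, a short cardinality/inclusion check rules out the cross-contamination and confirms that the two clauses of $\CE$ partition correctly. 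Once this disjointness of the $d$- and $m$-labelled edges is pinned down, the remaining inductive steps go through verbatim from the single-modality proofs.
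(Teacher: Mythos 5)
Your overall strategy (induction on $\phi$, consistency of $\{\neg\psi\}\cup\{\chi\mid D_G\chi\in tail(s)\}$ and its back-looking strengthening for the symmetric case, extension to a maximal consistent set and a one-step extension of the path) is the same as the paper's. But there is a genuine gap in your handling of the ``box-membership to truth'' direction, and it sits exactly where you locate the ``main obstacle'': your claim that the $d$- and $m$-labelled edges partition correctly and that cross-contamination can be ruled out is false. Your inclusion check only covers one direction, namely that $\bigcup_{a\in G}\CC(a)\subseteq\bigcap_{a\in H}\CC(a)$ (generically) fails. In the other direction, $\bigcap_{a\in G}\CC(a)=\{H\mid G\subseteq H\}$ \emph{is} a subset of $\bigcup_{a\in J}\CC(a)=\{H\mid H\cap J\neq\emptyset\}$ whenever $G\cap J\neq\emptyset$ (every superset of $G$ meets $J$ as soon as $G$ does). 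So the truth condition for $M_G\psi$ at $s$ quantifies not only over $(H,m)$-successors with $H\subseteq G$ but also over all $(J,d)$-successors with $G\cap J\neq\emptyset$. Your $M_G$ case, which invokes only M2 for $m$-edges, does not conclude $\psi\in tail(t)$ when the offending $t$ is reached by such a $d$-edge. The paper's proof of precisely this case splits into the two subcases explicitly and, for the $d$-edge subcase, uses the derivable implication $M_G\psi\ra D_J\psi$ for $G\cap J\neq\emptyset$ (obtained by chaining M2, M1, D1, D2 through a common agent $a\in G\cap J$). The fix is easy, but as written your argument is missing this subcase, and the claim on which you rest the assertion that ``the remaining inductive steps go through verbatim'' is incorrect.

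Two smaller inaccuracies in the same vein: your characterization of when $\CC(a)\subseteq\CE(s,t)$ says the $m$-labelled case occurs for $\langle(G,m),\Psi\rangle$ with $a\in G$, but in fact it requires $G=\{a\}$ (for $|G|\geq 2$ one has $\{a\}\in\CC(a)\setminus\bigcap_{b\in G}\CC(b)$); and for singleton $G$ the $D_{\{a\}}$ condition \emph{is} realized by the $(\{a\},m)$-edge, so even the $\bigcup$-side is not perfectly disjoint -- harmless here because D1/M1 bridge the gap, but it confirms that the right way to organize the proof is not by disjointness of edge types but by verifying that every edge type that can realize the semantic condition is matched by a derivable implication among the modalities, which is the paper's route.
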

\begin{proof}
We demonstrate only the first clause here. The second clause can be proven in a similar manner. The proof is once more by induction on $\phi$, and here we only display the case when $\phi=M_G\psi$.

Suppose $M_G\psi\in tail(s)$, but $\CM,s\not\models_\ldm M_G\psi$, then there exists some $t\in \CW$ such that $\bigcap_{a\in G}\CC(a)\subseteq \CE (s,t)$ and $\CM,t\not\models_\ldm \psi$. Therefore, $\{\chi\mid M_H\chi\in tail(s)\}\subseteq tail(t)$ for some group $H$ such that $H\subseteq G$ or $\{\chi\mid D_J\chi\in tail(s)\}\subseteq tail(t)$ for some group $J$ such that $G\cap J\neq\emptyset$. In both scenarios, we have $\psi\in tail(t)$ since $M_G\psi\in tail(s)$ implies $M_H\psi, D_J\psi\in tail(s)$. By the induction hypothesis, we have $\CM,t\models_\ldm \psi$, which leads to a contradiction.
Suppose $M_G\psi\notin tail(s)$, but $\CM,s\models_\ldm M_G\psi$, then $\bigcap_{a\in G}\CC(a)\subseteq \CE (s,t)$ implies $\CM,t\models_\ldm \psi$ for any $t\in \CW$. Extend $\{\neg\psi\}\cup\{\chi\mid M_G\chi\in\Gamma\}$ to some maximal \KDM-consistent set $\Delta^+$, thus $\bigcap_{a\in G}\CC(a)\subseteq \CE(s,t)$ where $t$ extends $s$ with $\langle (G,m),\Delta^+\rangle$. By the induction hypothesis, we have $\CM,t\models_\ldm \neg\psi$, leading to a contradiction.
\end{proof}

\begin{theorem}[completeness, part 5]\label{thm:completeness5}
The following hold:
\begin{enumerate}
\item \KDM is strongly complete for \ldm;
\item \KBDM is strongly complete for \lsdm.
\qed
\end{enumerate}
\end{theorem}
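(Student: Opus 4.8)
The plan is to obtain both strong-completeness claims from the Truth Lemma (Lemma~\ref{lem:truthdm}) by the usual contrapositive packaging, treating \KDM and \KBDM uniformly and switching only the underlying standard model. For \KDM I would argue as follows. Suppose $\Phi \nvdash_{\KDM} \phi$; by the definition of syntactic derivability this means that $\Phi \cup \{\neg\phi\}$ is \KDM-consistent. By Lindenbaum's lemma I extend it to a maximal \KDM-consistent set $\Delta^+$ of \langdm-formulas, and I take the one-element canonical path $\langle \Delta^+\rangle$ in the standard model $\CM$ for \ldm of Definition~\ref{def:sm-ldm}, noting that $tail(\langle\Delta^+\rangle) = \Delta^+$. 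Since every $\chi \in \Phi \cup \{\neg\phi\}$ lies in $\Delta^+ = tail(\langle\Delta^+\rangle)$, Lemma~\ref{lem:truthdm}(\ref{it:truthdm1}) gives $\CM, \langle\Delta^+\rangle \models_\ldm \chi$ for each such $\chi$. Hence $\CM, \langle\Delta^+\rangle$ satisfies all of $\Phi$ while falsifying $\phi$, so $\Phi \not\models_\ldm \phi$. Contraposition yields the first claim.

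For \KBDM the argument is identical, the only substitutions being: replace ``\KDM-consistent'' by ``\KBDM-consistent'', use the standard model for \lsdm from Definition~\ref{def:sm-ldm}, and invoke Lemma~\ref{lem:truthdm}(\ref{it:truthdm2}). The one extra thing to record is that this second standard model is a \emph{similarity} model, so that the falsifying pointed model is of the right kind to witness $\Phi \not\models_\lsdm \phi$: positivity holds because $\emptyset \notin \CC(a)$, whence $\CE(s,t)$ (being a union or intersection of sets of nonempty groups, or else $\emptyset$) never equals $\CA$; and symmetry holds because the paired clauses in Definition~\ref{def:sm-ldm} make $\CE$ commutative.

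I expect no genuine obstacle at this final step, since the real work has already been discharged upstream in the Truth Lemma. The delicate part lies in that lemma, and within it the case for $M_G\psi$ (and symmetrically $D_G\psi$): one must check that whenever $\bigcap_{a\in G}\CC(a) \subseteq \CE(s,t)$ on a nontrivial edge, the edge arose either from an $M_H$-extension with $H \subseteq G$ or from a $D_J$-extension with $G \cap J \neq \emptyset$, and then use the validities $M_G\psi \ra M_H\psi$ and $M_G\psi \ra D_J\psi$ to force $\psi$ into $tail(t)$. This is precisely the interaction between the intersection semantics of $M_G$ and the union semantics of $D_G$ that makes the combined system subtle. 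With Lemma~\ref{lem:truthdm} in hand, however, the completeness proof reduces to the routine Lindenbaum-plus-Truth-Lemma argument sketched above.
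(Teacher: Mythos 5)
Your proposal is correct and follows exactly the route the paper intends: the theorem is stated with an immediate \qed because the argument is the routine Lindenbaum-plus-Truth-Lemma packaging already spelled out for \KD and \KBD, applied verbatim to the standard models of Definition~\ref{def:sm-ldm} via Lemma~\ref{lem:truthdm}, together with the observation that the \lsdm standard model is a similarity model. Your verification of positivity (since $\emptyset \notin \CC(a)$, no edge label can equal $\CA$) and symmetry (the paired clauses make $\CE$ commutative) matches the paper's corresponding standardness checks.
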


\subsubsection{Proof by a finitary standard model}
\label{sec:completeness5}

We will now delineate the extension of the completeness results to the rest of the logics with common knowledge, deploying a finitary method for this purpose. We can only achieve weak completeness due to the non-compact nature of the common knowledge modality. To prove the completeness of logics with common knowledge, we often also need to address the modality for distributed or mutual knowledge.

In this section, we focus on providing the completeness proofs for \KCDM and \KBCDM. By making simple adaptations, we can obtain the completeness of the axiomatic systems for their sublogics with common knowledge. We adapt the definition of the \emph{closure} of a formula presented in \cite{WA2020} , to cater to formulas with modalities $D_G$ and/or $M_G$.

\begin{definition}\label{def:cl}
For an \langcdm-formula $\phi$, we define $cl(\phi)$ as the minimal set satisfying the subsequent conditions:
\begin{enumerate}
\item\label{it:cl-id} $\phi\in cl(\phi)$;
\item\label{it:cl-sub} if $\psi$ is in $cl(\phi)$, so are all subformulas of $\psi$;
\item\label{it:cl-neg} $\psi\in cl(\phi)$ implies ${\sim}\psi\in cl(\phi)$, where $\NEG\psi=\neg\psi$ if $\psi$ is not a negation and $\NEG\psi=\chi$ if $\psi=\neg\chi$;
\item\label{it:cl-1} $K_a\psi\in cl(\phi)$ implies $D_{\{a\}}\psi, M_{\{a\}}\psi\in cl(\phi)$;
\item\label{it:cl-d1} $D_{\{a\}}\psi\in cl(\phi)$ implies $K_a\psi\in cl(\phi)$;
\item\label{it:cl-d2} For groups $G$ and $H$, if $G \subseteq H$ and $H$ appears in $\phi$, then $D_G\psi\in cl(\phi)$ implies $D_H\psi\in cl(\phi)$;
\item\label{it:cl-c1} $C_G\psi\in cl(\phi)$ implies $\{ K_a\psi, K_a C_G\psi \mid a \in G \} \subseteq cl(\phi)$;
\item\label{it:cl-m1} $M_G\psi\in cl(\phi)$ implies $\{ K_a\psi \mid a \in G\} \subseteq cl(\phi)$;
\item\label{it:cl-m2} For groups $G$ and $H$, if $H \subseteq G$ and $H$ appears in $\phi$, then $M_G\psi\in cl(\phi)$ implies $M_H\psi\in cl(\phi)$.
\qed
\end{enumerate}
\end{definition}
Given that there are finitely many groups appearing in $\phi$, and every group comprises only a finite number of agents, we can readily confirm that $cl(\phi)$ is finite for any given formula $\phi$.

Subsequently, we introduce the concept of a \emph{maximal consistent set of formulas within a closure}. For a comprehensive definition, which is naturally contingent on the specific axiomatic system under consideration, we refer to established literature, for example, \cite{vDvdHK2008}.

\begin{definition}[canonical path for \lcdm/\lscdm in a closure]
Given an \langcdm-formula $\phi$, we define a \emph{canonical path for \lcdm in $cl(\phi)$} as a sequence $\langle \Phi_0, I_1, \Phi_1, \dots, I_n, \Phi_n \rangle$ that satisfies the following conditions:
\begin{itemize}
\item $\Phi_0,\Phi_1,\dots,\Phi_n$ are maximal \KCDM-consistent sets of \langcdm-formulas in $cl(\phi)$;
\item $I_1, \dots, I_n $ take the form $(G, d)$ or $(G, m)$, where $G$ is a group, and ``$d$'' and ``$m$'' are simply two distinct letters.
\end{itemize}
We can define the \emph{canonical path for \lscdm} similarly by only substituting \KCDM with \KBCDM in the above.
\end{definition}

Given an \langcdm-formula $\phi$, we can construct the \emph{standard model for \lcdm with respect to $cl(\phi)$} in a manner that closely mirrors the construction of the standard model for \ldm (as per Definition~\ref{def:sm-ldm}). The primary differences lie in bounding the canonical paths by the closure and adjusting the logics accordingly. More specifically, we need to (1) replace all occurrences of ``\langdm'' with ``\langcdm'', and ``\ldm'' with ``\lcdm''; (2) within the definition of \CW, replace ``canonical paths for \ldm'' with ``canonical paths for \lcdm in $cl(\phi)$''. In a similar vein, we can modify the standard model for \lscdm with respect to $cl(\phi)$ from that of \lsdm (Definition~\ref{def:sm-ldm}). Furthermore, it is straightforward to confirm that the standard model for \lcdm (in any closure of a given formula) is indeed a model, and that for \lscdm constitutes a similarity model.

\begin{lemma}[Truth Lemma]\label{lem:truthcdm}
Given an \langcdm-formula $\theta$,
\begin{enumerate}
\item\label{it:truthcdm1}Let $\CM = (\CW, \CA, \CE, \CC, \CV)$ be the standard model for \lcdm with respect to $cl(\theta)$, for any $s \in \CW$ and \langcdm-formula $\phi \in cl(\theta)$, we have $\phi \in tail(s)$ iff $\CM,s \models_{\lcdm} \phi$;
\item\label{it:truthcdm2}Let $\CM = (\CW, \CA, \CE, \CC, \CV)$ be the standard model for \lscdm with respect to $cl(\theta)$, for any $s \in \CW$ and \langcdm-formula $\phi \in cl(\theta)$, we have $\phi \in tail(s)$ iff $\CM,s \models_{\lscdm} \phi$.
\end{enumerate}
\end{lemma}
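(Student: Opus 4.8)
The plan is to argue by induction on the construction of $\phi \in cl(\theta)$, proving both clauses simultaneously; I would treat the non-symmetric system \KCDM in detail and note that the symmetric case for \KBCDM is identical except that each ``successor-existence'' step additionally invokes the back-and-forth argument already used in Lemma~\ref{lem:truth-KB} (and in the \lsdm Truth Lemma). The atomic and Boolean cases are immediate from the definition of $\CV$ and the induction hypothesis, and the cases $\phi = K_a\psi$, $\phi = D_G\psi$, $\phi = M_G\psi$ are carried out exactly as in the proof of Lemma~\ref{lem:truthdm}. The only new bookkeeping in these cases is that every maximal consistent set is now taken \emph{within} $cl(\theta)$: the sets used to create successors, e.g.\ $\{\neg\psi\}\cup\{\chi\mid D_G\chi\in tail(s)\}$, must be extended to maximal \KCDM-consistent subsets of $cl(\theta)$ rather than of the full language. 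This is legitimate because the closure conditions of Definition~\ref{def:cl} (items \ref{it:cl-1}--\ref{it:cl-m2}) guarantee that whenever $\phi\in cl(\theta)$ the relevant witnesses $\psi$, $D_H\psi$, $M_H\psi$, $K_a\psi$ again lie in $cl(\theta)$, so consistency inside the closure is preserved and the path-extension of Definition~\ref{def:sm-ldm} remains well defined.

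The genuinely new case is $\phi = C_G\psi$, which I split into two directions. For ``$C_G\psi\in tail(s)$ implies $\CM,s\models C_G\psi$'' I would argue along $G$-reachability: call $t$ \emph{$G$-reachable} from $s$ (in at least one step) if there is a chain of one-step extensions $s=s_0,\dots,s_k=t$ with $k\ge 1$, each witnessed by $\CC(a_i)\subseteq\CE(s_{i-1},s_i)$ for some $a_i\in G$. Using the fixed-point axiom C1, $C_G\psi\in tail(s)$ gives $K_a(\psi\wedge C_G\psi)\in tail(s)$ for every $a\in G$; pushing this through each $K_a$-edge (as in the $K_a$-case) shows by induction on $k$ that $\psi\wedge C_G\psi\in tail(t)$ for every $G$-reachable $t$. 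Since $\psi\in cl(\theta)$ (a subformula of $C_G\psi$), the induction hypothesis yields $\CM,t\models\psi$ at all such $t$, hence $\CM,s\models E_G^n\psi$ for every $n\in\mbN^+$, and therefore $\CM,s\models C_G\psi$.

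The hard direction is ``$\CM,s\models C_G\psi$ implies $C_G\psi\in tail(s)$'', where I must invoke the induction rule C2 together with the finiteness of the closure. Let $S$ be the set of tails $tail(t)$ of all states $t$ that are $G$-reachable from $s$ in at least one step, together with $tail(s)$ itself; since there are only finitely many maximal \KCDM-consistent subsets of the finite set $cl(\theta)$, $S$ is finite. For each $\Delta\in S$ write $\widehat\Delta=\bigwedge\{\chi\mid\chi\in\Delta\}$ and set $\beta=\bigvee_{\Delta\in S}\widehat\Delta$. The key subclaim is $\vdash_{\KCDM}\beta\ra\bigwedge_{a\in G}K_a(\beta\wedge\psi)$. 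To establish it I fix $\Delta\in S$ and $a\in G$ and use the path-model existence lemma: the $K_a$-successor tails of $\Delta$ are exactly the maximal consistent $\Delta'\subseteq cl(\theta)$ with $\{\chi\mid K_a\chi\in\Delta\}\subseteq\Delta'$ (realized in $\CM$ by the extension $\langle(\{a\},d),\Delta'\rangle$, using $D_{\{a\}}\lra K_a$), and each such $\Delta'$ is $G$-reachable from $s$ in at least one step, hence $\Delta'\in S$; moreover $\CM$ satisfies $\psi$ at that reachable state, so the induction hypothesis gives $\psi\in\Delta'$. The standard derivability fact $\vdash \widehat\Delta\ra K_a\bigl(\bigvee\{\widehat{\Delta'}\mid \Delta\to_a\Delta'\}\bigr)$ then yields $\vdash\widehat\Delta\ra K_a(\beta\wedge\psi)$, and disjoining over $\Delta\in S$ gives the subclaim. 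Applying rule C2 produces $\vdash_{\KCDM}\beta\ra C_G\psi$; since $tail(s)\in S$ we have $\vdash\widehat{tail(s)}\ra\beta$, so $C_G\psi\in tail(s)$ because $C_G\psi\in cl(\theta)$ and $tail(s)$ is maximal in the closure.

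I expect this last direction to be the main obstacle: it is the only place where the finitary method is essential, and it requires three ingredients to be aligned — the finiteness of $S$ (from finiteness of $cl(\theta)$), the existence lemma phrased over \emph{paths} rather than plain maximal consistent sets (so that $G$-reachable tails correspond exactly to the canonical $K_a$-successor relation), and the correct application of the induction rule C2 to the characteristic disjunction $\beta$. For \KBCDM the same argument runs on the symmetric standard model, the one change being that the existence lemma is the symmetric one (successors realized with the two-sided condition $\{\chi\mid K_a\chi\in\Delta\}\subseteq\Delta'$ and $\{\chi\mid K_a\chi\in\Delta'\}\subseteq\Delta$), exactly as in Lemma~\ref{lem:truth-KB}.
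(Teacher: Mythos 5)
Your proposal is correct and follows essentially the same route as the paper: induction with the non-common-knowledge cases handled as in the earlier truth lemmas (relativized to maximal consistent subsets of the finite closure), the easy direction of the $C_G$ case via C1 and propagation of $\psi\wedge C_G\psi$ along $G$-reachable paths, and the hard direction via rule C2 applied to the characteristic disjunction of the finitely many reachable tails. The only cosmetic difference is that you derive $\beta\ra K_a(\beta\wedge\psi)$ directly from the standard successor-disjunction fact, whereas the paper establishes the corresponding premises $\delta\ra K_a\delta$ and $\delta\ra K_a\psi$ by contradiction, realizing a putative bad successor as a path extension $\langle(\{a\},d),\cdot\rangle$ in the standard model.
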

\begin{proof}
\ref{it:truthcdm1}. We show the lemma by induction on $\phi$. We omit the straightforward cases here. The cases involving modalities do not significantly differ from those in previous proofs of the truth lemmas. However, attention must be given to handle the closure appropriately. Our focus here will be primarily on the cases concerning the common knowledge operators.

Suppose $C_G\psi\in tail(s)$, but $\CM,s\not\models_\lcdm C_G\psi$, then there are $s_i\in \CW$, $a_i\in G$, $0\leq i\leq n$ for some $n\in\mbN$ such that: $s_0=s$, $\CM,s_n\not\models\psi$ and $\CC(a_i)\subseteq \CE (s_{i-1},s_i)$ for $1\leq i\leq n$. Since $\CC(a_i)\subseteq \CE (s_{i-1},s_i)$, we have either $\{\chi\mid D_H\chi\in tail(s_{i-1})\}\subseteq tail(s_i)$ for some $H$ containing $a_i$ or $\{\chi\mid M_{\{a_i\}}\chi\in tail(s_{i-1})\}\subseteq tail(s_i)$. In both cases, $\{\chi\mid K_{a_i}\chi\in tail(s_{i-1})\}\subseteq tail(s_i)$. Since $C_G\psi\in tail(s_i)$ implies $K_{a_i}C_G\psi,K_{a_i}\psi\in tail(s_i)$, we can infer that $C_G\psi,\psi\in tail(s_n)$. Then by the induction hypothesis, $\CM,s_n\models_\lcdm \psi$, leading to a contradiction.

Suppose $C_G\psi\not\in tail(s)$, but $\CM,s\models_\lcdm C_G\psi$. Thus for any $s_i\in \CW$ and $a_i\in G$ where $0\leq i\leq n$, such that: $s_0=s$ and $\CC(a_i)\subseteq\CE(s_{i-1},s_i)$, we have $M,s_n\models_\lcdm \psi$ and $M,s_n\models_\lcdm C_G\psi$. Collect all such possible $s_n$ above and $s$ into the set $\mcS$; similarly collect all the $tail(s_n)$ and $tail(s)$ into the set $\Theta$. We define $\delta=\bigvee_{t\in\mcS}\widehat{tail(t)}$, where for any $t \in \CW$, $\widehat{tail(t)}$ stands for $\bigwedge tail(t)$. (In general, for any finite set $\Psi$ of formulas, write $\widehat{\Psi}$ for $\bigwedge\Psi$.)
We claim that $\vdash_{\lcdm}\delta\ra K_a\delta$ and $\vdash_{\lcdm}\delta\ra K_a\psi$ for any $a\in G$. By this claim and (C2) we have $\vdash_{\lcdm}\delta\ra C_G\psi$, and then by $\widehat{\Gamma}\ra\delta$ we have $\widehat{\Gamma} \ra C_G\psi$. In this way we obtain $C_G\psi\in\Gamma$, which leads to a contradiction. As for the proof of the claim:

(1) Suppose $\nvdash_{\lcdm}\delta\ra K_a\delta$, then $\delta\wedge \neg K_a\delta$ is consistent. Then there exists $t_0\in\mcS$ such that $\widehat{tail(t_0)}\wedge\neg K_a\delta$ is consistent. Notice that $\vdash_{\lcdm}\bigvee_{t\in \CW}\widehat{tail(t)}$, hence we have a consistent set $\widehat{tail(t_0)}\wedge\neg K_a\neg \widehat{tail(t_1)}$ for some $t_1\in \CW\setminus\mcS$ such that $tail(t_1)\notin\Theta$. Thus we have $\{\chi\mid K_a\chi\in tail(t_0)\}\subseteq tail(t_1)$, which implies $\{\chi\mid D_{\{a\}}\chi\in tail(t_0)\}\subseteq tail(t_1)$. Now we let $t_2$ be $t_0$ extended with $\langle(\{a\},d),tail(t_1)\rangle$, we have $\CC(a)\subseteq\CE(t_0,t_2)$. Hence $t_2\in\mcS$ but $tail(t_2)=tail(t_1)\notin\Theta$, a contradiction!

(2) Suppose $\nvdash_{\lcdm}\delta\ra K_a\psi$, then $\delta\wedge \neg K_a\psi$ is consistent. So there exists $t_0\in\mcS$ such that $\widehat{tail(t_0)}\wedge\neg K_a\psi$ is consistent. Thus $\{\NEG\psi\}\cup\{\chi\mid K_a\chi\in tail(t_0)\}$ is consistent as before. Hence it can be extended to some some max consistent subset $\Delta^+$ on $cl(\theta)$. Let $t_1$ be $t_0$ extended with $\langle(\{a\},d),\Delta^+\rangle$, we have $\CC(a)\subseteq\CE(t_0,t_1)$. Hence $t_1\in\mcS$ and then $\CM,t_1\models_\lcdm\psi$, which contradicts with $\NEG\psi\in tail(t_1)$ by the induction hypothesis.

The second clause can be shown in a quite similar way to the above. Details are omitted here.
\end{proof}

\begin{theorem}[completeness, part 6]\label{thm:completeness6}
The following hold:
\begin{enumerate}
\item \KCDM is weakly complete for \lcdm, or equivalently, every \KCDM-consistent formula is satisfiable;
\item \KBCDM is weakly complete for \lscdm, or equivalently, every \KBCDM-consistent formula is s-satisfiable.
\end{enumerate}
\end{theorem}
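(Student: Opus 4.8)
The plan is to leverage the Truth Lemma (Lemma~\ref{lem:truthcdm}) already established for the finitary standard models, reducing completeness to its contrapositive formulation: every consistent formula is (s-)satisfiable. Since weak completeness concerns only a single formula rather than an arbitrary set, the closure-based finitary construction suffices and sidesteps the failure of compactness caused by the common knowledge operator (which is exactly why strong completeness is unattainable here).

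For the first clause, I would start with an arbitrary \KCDM-consistent \langcdm-formula $\phi$. First I would form its closure $cl(\phi)$, which is finite by the remark following Definition~\ref{def:cl}. Using a Lindenbaum-style argument confined to $cl(\phi)$, I would extend $\{\phi\}$ to a maximal \KCDM-consistent set $\Phi_0$ of formulas within $cl(\phi)$; this is possible precisely because $\phi$ is consistent and $cl(\phi)$ is finite. I would then consider the singleton canonical path $s = \langle \Phi_0 \rangle$ inside the standard model $\CM = (\CW,\CA,\CE,\CC,\CV)$ for \lcdm with respect to $cl(\phi)$, noting that $tail(s) = \Phi_0 \ni \phi$. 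By Lemma~\ref{lem:truthcdm}(\ref{it:truthcdm1}), applied to the formula $\phi \in cl(\phi)$, we obtain $\CM, s \models_\lcdm \phi$. Since the standard model for \lcdm is a model (as noted after its construction), $\phi$ is satisfiable, completing the first clause.

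The second clause is entirely parallel: given a \KBCDM-consistent formula $\phi$, I would build $cl(\phi)$, extend $\{\phi\}$ to a maximal \KBCDM-consistent set $\Phi_0$ in $cl(\phi)$, and form the singleton path $\langle \Phi_0 \rangle$ in the standard model for \lscdm with respect to $cl(\phi)$. Applying Lemma~\ref{lem:truthcdm}(\ref{it:truthcdm2}) yields satisfaction of $\phi$ at this path; because this standard model is a \emph{similarity} model, $\phi$ is s-satisfiable.

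The genuinely hard part of this development is not the completeness theorem itself but the Truth Lemma it invokes --- in particular the common knowledge case, whose backward direction relies on the characteristic $\delta$-disjunction argument together with the induction rule (C2) and careful bookkeeping inside the closure. Given that lemma, the present theorem is essentially immediate; the only point demanding attention is to confirm that the maximal consistent set chosen at the root genuinely contains $\phi$ and lies within $cl(\phi)$, so that the Truth Lemma --- which is stated only for formulas in the closure --- applies to $\phi$ itself.
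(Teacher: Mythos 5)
Your proposal is correct and follows essentially the same route as the paper: extend the consistent formula to a maximal consistent subset of its (finite) closure, place it at the root of a singleton canonical path in the corresponding finitary standard model, and invoke the Truth Lemma (Lemma~\ref{lem:truthcdm}) to conclude (s-)satisfiability. The paper's own proof is just a terser version of exactly this argument.
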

\begin{proof}
1. Consider an \langcdm-formula $\phi$ that is \KCDM consistent. This formula can be augmented to a maximal \KCDM-consistent subset $\Delta^+$ of $cl(\phi)$. By applying the Truth Lemma, we find that for the standard model \CM for \lcdm with respect to $cl(\phi)$, we have $\CM, \langle \Delta^+ \rangle \models \phi$.

2. Similarly, if we take any \langcdm-formula $\phi$ that is \KBCDM consistent, it is satisfied in the standard model for \lscdm with respect to $cl(\phi)$. This standard model is a similarity model. 
\end{proof}

As stated earlier in this section, the completeness proofs for \KC, \KBC, \KCD, \KBCD, \KCM and \KBCM, can be adapted from the proofs for \KCDM or \KBCDM as provided above. This can be achieved by employing the finitary method, in conjunction with the approach we adopted for the axiomatic systems for logics lacking common knowledge.

Within this set, the completeness proofs for \KC and \KCD have already been demonstrated in Section~\ref{sec:completeness1} through the method of translation. Hence, we list the completeness results for the remaining four systems below.

\begin{theorem}[completeness, part 7]
The following statements hold:
\begin{enumerate}
\item \KBC is weakly complete for \lsc;
\item \KCM is weakly complete for \lcm;
\item \KBCM is weakly complete for \lscm;
\item \KBCD is weakly complete for \lscd.
\qed
\end{enumerate}
\end{theorem}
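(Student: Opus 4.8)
The plan is to observe that each of the four systems is a fragment of \KCDM or \KBCDM, and to replay the finitary standard-model construction underlying Theorem~\ref{thm:completeness6}, pruning the machinery down to the modalities actually present in the corresponding language. For each system I would (i) define the closure $cl(\phi)$ by keeping only the clauses of Definition~\ref{def:cl} relevant to its operators, (ii) define canonical paths in $cl(\phi)$ whose intervals range only over the interval types needed, (iii) build the standard model with respect to $cl(\phi)$ exactly as for \lcdm/\lscdm but with the edge function \CE restricted to those interval types, and (iv) establish a Truth Lemma and read off weak completeness as in the proof of Theorem~\ref{thm:completeness6}.

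The restrictions are as follows. For \KBCD (language \langcd, no mutual knowledge) I keep the common- and distributed-knowledge clauses of Definition~\ref{def:cl} (in particular \ref{it:cl-c1}, \ref{it:cl-d1}, \ref{it:cl-d2} and \ref{it:cl-1}) and drop the mutual-knowledge clauses \ref{it:cl-m1}, \ref{it:cl-m2}; canonical paths then use only $(G,d)$ intervals, and \CE is the symmetrized distributed-knowledge edge function of Definitions~\ref{def:cm-elds}/\ref{def:sm-ldm}. For \KCM and \KBCM (language \langcm, no distributed knowledge) I keep the common- and mutual-knowledge clauses and drop \ref{it:cl-d1}, \ref{it:cl-d2}; the paths use only $(G,m)$ intervals. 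For \KBC (language \langc, neither $D$ nor $M$) no path machinery is needed: here I would take the finitary version of the \KB canonical model of Definition~\ref{def:cm-els}, restricting canonical states to maximal \KBC-consistent subsets of $cl(\phi)$ and closing $cl(\phi)$ only under the boolean/subformula clauses together with the common-knowledge clause \ref{it:cl-c1}. In each symmetric (\KB) case \CE is made commutative by the usual bidirectional condition, and since $\emptyset \notin \CC(a)$ no edge is ever labelled by all of \CA, so positivity holds and the resulting structures are genuine similarity models.

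The Truth Lemma is proved by induction exactly as in Lemma~\ref{lem:truthcdm}, and the only case needing genuine care is common knowledge, where one must check that every individual $K_a$-step used by the fixed-point semantics of $C_G$ is realized by an edge of an available type. This is where the restricted interval set interacts with the closure: for \KBCD the required $K_a$-reachability is furnished by $(\{a\},d)$ intervals through $D_{\{a\}}\chi \lra K_a\chi$ (axiom D1 together with closure clause \ref{it:cl-1}); for \KBCM it is furnished by $(\{a\},m)$ intervals through $M_{\{a\}}\chi \lra K_a\chi$ (axiom M1); and for \KBC it is immediate from the individual-agent edges. The definability step behind the $C_G$ case---reproducing the claim $\vdash \delta \ra K_a\delta$ and $\vdash \delta \ra K_a\psi$ from Lemma~\ref{lem:truthcdm}---then goes through unchanged, since it only manipulates individual knowledge and clause \ref{it:cl-c1}. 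For the symmetric systems the backward directions of the $K_a$, $D_G$ and $M_G$ cases additionally invoke the B-style argument of Lemma~\ref{lem:truth-KB}, adjoining the auxiliary formulas $\neg K_a\neg\chi$ (resp.\ $\neg D_G\neg\chi$, $\neg M_G\neg\chi$) to enforce commutativity of \CE.

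Weak completeness then follows as in Theorem~\ref{thm:completeness6}: a consistent formula $\phi$ extends to a maximal consistent subset $\Delta^+$ of its own closure, and the Truth Lemma yields $\CM,\langle\Delta^+\rangle \models \phi$ in the appropriate standard model---a similarity model in the three \KB cases. I expect the main obstacle to be bookkeeping rather than a single hard argument: one must make sure that, with distributed and/or mutual knowledge removed from the language, enough $K_a$-edges survive in the pruned model to validate the fixed-point reasoning for $C_G$, and that the closure clause \ref{it:cl-1} linking $K_a$ to $D_{\{a\}}$/$M_{\{a\}}$ is retained (or suitably specialized) in each fragment so that this reachability is actually present.
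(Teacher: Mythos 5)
Your proposal is correct and follows essentially the same route as the paper, which proves this theorem only by remarking that the \KCDM/\KBCDM arguments of Section~\ref{sec:completeness5} are to be adapted via the finitary method combined with the techniques used for the common-knowledge-free systems. You carry out exactly that adaptation, and you correctly isolate the one genuinely delicate point — that in the pruned models the $K_a$-reachability needed for the fixed-point semantics of $C_G$ must be realized by $(\{a\},d)$ or $(\{a\},m)$ intervals via axioms D1/M1 and closure clause~\ref{it:cl-1} — so nothing further is needed.
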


\section{Computational Complexity}
\label{sec:complexity}

In this section we study the computational complexity of the model checking problem and the satisfiability/validity problem, for all the logics that have been proposed in Section~\ref{sec:logics}. The \emph{model checking problem} for a logic is to determine, for a given formula $\phi$, a finite (similarity) model $M$ for the logic and a state $s$ of the (similarity) model, whether $M, s \models \phi$. The \emph{satisfiability problem} for a logic is to determine whether a given formula $\phi$ is (s-)satisfiable or not in the logic. Its dual problem -- the \emph{valid problem} -- is to decide whether a given $\phi$ is (s-)satisfiable; that is, whether $\neg \phi$ is unsatisfiable. We shall focus mainly on the model checking problem and satisfiability problem, but since our results only concern the classes PSPACE and EXPTIME, the complexity of validity problem is either co-PSPACE (the complement of PSPACE) or co-EXPTIME (the complement of EXPTIME), which are known to be equivalent to PSPACE and EXPTIME, respectively. As a result, it does not make any difference here between the complexity of satisfiability and validity problems. So we speak of the validity problem as well from time to time.

\subsection{The input}

We define the measure of the input. The \emph{length} of a formula $\phi$, denoted $|\phi|$, is defined to be the number of symbols that occur in $\phi$ (including the symbols for brackets), just as in \cite[Section 3.1]{FHMV1995}; or more precisely defined inductively by the structure of $\phi$, i.e., when $\phi$ is:
\begin{itemize}
\item Propositional variable $p$: $|p| = 1$
\item Negation $\neg\psi$: $| \neg \psi | = |\psi| +1$
\item Implication $(\psi \ra \chi)$: $| (\psi \ra \chi) | = |\psi| + |\chi| + 3$
\item Individual knowledge $K_a \psi$: $|K_a\psi| = |\psi| + 2$
\item Group knowledge: $|C_G\psi| = |\psi| + 2|G| + 2$,  and similarly for $D_G \psi$ and $M_G\psi$. E.g., $| (p \ra C_{\{a,b,c\}} q) | = 13$.
\end{itemize}

The \emph{size} of a model $M = (W,\ab,E,C,\nu)$, denoted $|M|$, with respect to a given formula $\phi$, is the sum of the following:
\begin{itemize}
\item $|W|$, i.e., the cardinality of the domain,
\item $|\ab|$, i.e., the cardinality of the set of epistemic abilities,
\item $|E|$, defined to be $|W|^2 \cdot |\ab|$,
\item $|C|$ w.r.t. $\phi$, which is defined as $|\phi| \cdot |A|$, and
\item $|\nu|$ w.r.t. $\phi$, defined to be $|W| \cdot |\phi|$.
\end{itemize}

Finally, given a formula $\phi$ and a model $M$ (with a designated state $s$ of it), the \emph{size of the input} is $|\phi| + |M| + 1$.

\subsection{The model checking problem}
\label{sec:mc}

We first show that the model checking problem for \l is in P, and then extend the result to all other logics.

\begin{lemma}\label{lem:mc-el}
The model checking problem for \l and \ls are both in P.
\end{lemma}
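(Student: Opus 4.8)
The plan is to use the standard bottom-up labelling algorithm for modal model checking. Since every similarity model is a model and Definition~\ref{def:semantics} interprets \lang-formulas identically in both settings, a single algorithm settles both \l and \ls; I would describe it for an arbitrary finite model $M = (W,\ab,E,C,\nu)$ with designated state $s$, and observe at the end that it never invokes positivity or symmetry, so it transfers verbatim to the similarity case.

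First I would enumerate the subformulas of the input $\phi$; there are at most $|\phi|$ of them, and they can be listed in order of nondecreasing length in time polynomial in $|\phi|$. Processing them in this order guarantees that whenever a formula $\psi$ is handled, the truth sets $S_\chi = \{ w \in W \mid M,w \models \chi \}$ of all its proper subformulas $\chi$ are already computed. For each $\psi$ I would build $S_\psi$ as follows: if $\psi = p$, read off $\nu$ and test membership state by state, costing $O(|W|\cdot|\phi|)$; if $\psi = \neg\chi$ or $\psi = (\chi_1 \ra \chi_2)$, form the appropriate boolean combination of the already-computed truth sets in time $O(|W|)$; and if $\psi = K_a\chi$, then for each state $w$ I would scan every $t \in W$, test the inclusion $C(a) \subseteq E(w,t)$, and, whenever it holds, check $t \in S_\chi$, declaring $w \in S_\psi$ exactly when no witnessing $t$ falls outside $S_\chi$. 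The final answer is simply whether $s \in S_\phi$.

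The only step whose cost requires genuine accounting is the modal case $K_a\chi$, because of the subset test $C(a) \subseteq E(w,t)$. Both $C(a)$ and $E(w,t)$ are subsets of the ability set \ab, so a single inclusion test takes $O(|\ab|)$ time; iterating over all pairs $(w,t)$ gives $O(|W|^2 \cdot |\ab|)$ per modal subformula, which is of the order of $|E| = |W|^2\cdot|\ab|$ and hence polynomial in the input size $|\phi| + |M| + 1$, since $|\ab|$ and $|E|$ are explicitly part of $|M|$. Summing over the at most $|\phi|$ subformulas, the total running time is bounded by a polynomial in $|\phi| + |M| + 1$, establishing membership in P.

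I do not anticipate any real obstacle: the argument is the classical labelling method, and the single point requiring care is to \emph{charge the inclusion test over the possibly large ability set \ab correctly against the input measure} $|M|$, which the definition of $|M|$ accommodates. Because the algorithm makes no use of the positivity or symmetry constraints, it applies unchanged when the input is guaranteed to be a similarity model, so the model checking problem for \ls lies in P as well.
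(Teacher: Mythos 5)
Your proposal is correct and follows essentially the same route as the paper: a polynomial-time labelling/truth-set computation over subformulas, with the only nontrivial accounting being the inclusion test $C(a) \subseteq E(w,t)$ charged at $O(|\ab|)$ against the input measure, and the observation that \ls reduces to a subproblem of \l since similarity models are models. The paper phrases the algorithm as a recursive function $Val(M,\phi)$ rather than an explicit bottom-up pass, but the argument and the complexity analysis are the same.
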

\begin{proof}
Given a model $M=(W,\ab,E,C,\nu)$, a state $s \in W$ and a formula $\phi$, we need to decide whether $M,s \models \phi$. In order to do so, we present an algorithm (Algorithm~\ref{alg:val}) for calculating $Val(M,\phi)$, the truth set of $\phi$ in $M$, i.e., $\{ s \in W \mid M,s \models \phi\}$. The question about whether $M,s \models \phi$ holds is thus reduced to the membership testing in $Val(M,\phi)$, which takes at most $|W|$ steps in addition to the time costs on computing $Val(M,\phi)$.%

\begin{algorithm}
\caption{Function $Val(M,\phi)$: computing the truthset in \l and \ls}\label{alg:val}
\small
\begin{algorithmic}[1]
\Require a weighted model $M = (W, \ab, E, C, \nu)$ and a formula $\phi$
\Ensure $\{ s \mid M,s\models\phi \}$
\Initialize{$tmpVal \gets \emptyset$}

\If{$\phi=p$} \Return $\{ s \in W \mid p \in \nu(s)\}$

\ElsIf{$\phi=\neg\psi$} \Return $W \setminus Val(M,\psi)$

\ElsIf{$\phi=\psi\to\chi$} \Return $(W \setminus Val(M,\psi)) \cup Val(M,\chi)$

\ElsIf{$\phi=K_a\psi$}
 \ForAll{$t\in W$}
  \Initialize{$n \gets \True$}
  \ForAll{$u\in W$}
    \If{$C(a)\subseteq E(t,u)$ \algand $u\not\in Val(M,\psi)$} {$n \gets \False$}
    \EndIf
  \EndFor
  \If{$n=\True$} $tmpVal \gets tmpVal \cup \{t\}$
  \EndIf
  \EndFor
 \State \Return $tmpVal$ \Comment{This returns $\{t\in W\mid \forall u\in W: C(a)\subseteq E(t,u)\Rightarrow u\in Val(M,\psi)\}$}
\EndIf
\end{algorithmic}
\end{algorithm}

It is not hard to verify that $Val(M,\phi)$ is indeed the set of states of $M$ at which $\phi$ is true. In particular, in the case for the $K_a$ operator,
\[
\begin{array}{lll}
M,s\models K_a\psi & \iff & \forall u \in W: C(a)\subseteq E(s,u) \Rightarrow M,u\models\psi \\
 & \iff & \forall u \in W: C(a)\subseteq E(s,u) \Rightarrow u \in Val(M,\psi) \hfill \text{(IH)}\\
 & \iff & s \in \{t \in W \mid \forall u\in W: C(a)\subseteq E(t,u)\Rightarrow u\in Val(M,\psi)\} 
\end{array}
\]

The cost for computing $Val(M,\phi)$ is in polynomial time. In the case for $K_a \psi$ -- the most time-consuming case here -- there are two while-loops over $W$, and checking $C(a) \subseteq E(t,u)$ costs at most $|A|$ steps, and the membership checking $u \notin Val(M,\psi)$ takes at most $|W|$ steps; this costs $|W|^2 \cdot (|A| + |W|)$. The algorithm for computing $Val(M,\phi)$ calls itself recursively, but only for a subformula of $\phi$, and the maximum number of recursion is bounded by $|\phi|$, i.e., the length of $\phi$. So the total time cost for computing $Val(M,\phi)$ is $|W|^2 \cdot (|A| + |W|) \cdot |\phi|$. Considering the input size, we find that the total time cost is within $O(n^2)$. So the lemma holds.

The model checking problem for \ls is a subproblem of that for \l, and hence in P as well.
\end{proof}

Now we extend the above result to the other logics (Theorem~\ref{thm:complexity}). Before that we come up with a definition and lemmas that support it.

\begin{definition}\label{def:trans-e}
Given a model $M = (W,\ab,E,C,\nu)$ and a formula $\phi$, let
\begin{itemize}
\item $\ab_\phi=\ab\cup\{G \mid\text{``$E_G$'' or ``$C_G$'' appear in $\phi$}\}$,
\item for all states $s,t\in W$, $E_\phi(s,t)=E(s,t)\cup\{G\in A_\phi\mid\text{$\exists a\in G$, $C(a)\subseteq E(s,t)$}\}$,
\item for all states $s,t\in W$, $E_{\phi}^+(s,t) = E_\phi(s,t) \cup \{G\in A_\phi \mid \exists n \geq 1, \exists s_0,\dots,s_n\in W, s_0=s \text{ and } s_n=t \text{ and } G\in \bigcap_{0\leq i< n}E_\phi(s_i,s_{i+1})\}$,
\end{itemize}
where without loss of generality we assume that $A_\phi \cap \ag = \emptyset$. We write $M_\phi^+$ for the quintuple $(W,A^\phi,E^+_\phi,C,\nu)$.
\qed
\end{definition}

\begin{proposition}
For any model $M$ and any formula $\phi$, $M^+_\phi$ is a model.\qed
\end{proposition}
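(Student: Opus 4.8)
The plan is to check directly that $M_\phi^+ = (W, \ab_\phi, E_\phi^+, C, \nu)$ satisfies the three requirements listed in Definition~\ref{def:models}. Since $W$, $C$ and $\nu$ are inherited unchanged from $M$, the only component that genuinely needs attention is the new edge function $E_\phi^+$; everything else reduces to the observation that enlarging the ability set from $\ab$ to $\ab_\phi$ does not disturb the remaining data.

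First I would dispose of the inherited components. The domain $W$ is the same nonempty set as in $M$, so it qualifies as a domain, and the valuation $\nu \colon W \to \wp(\prop)$ is copied verbatim and hence still has the correct type. For the capability function, note that by construction $\ab \subseteq \ab_\phi$, whence $\wp(\ab) \subseteq \wp(\ab_\phi)$; since $C \colon \ag \to \wp(\ab)$ holds in $M$, the very same $C$ is a map $\ag \to \wp(\ab_\phi)$, as required. (Here $\ab_\phi$, $A_\phi$ and $A^\phi$ in Definition~\ref{def:trans-e} all denote the same enlarged ability set.)

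The substantive point is to verify that $E_\phi^+$ is a well-formed edge function, i.e., that $E_\phi^+(s,t) \subseteq \ab_\phi$ for all $s,t \in W$, so that $E_\phi^+ \colon W \times W \to \wp(\ab_\phi)$. I would establish this following the two stages of the definition. For $E_\phi$, the first summand $E(s,t)$ lies in $\wp(\ab) \subseteq \wp(\ab_\phi)$, while the second summand is by its defining condition ``$G \in \ab_\phi$'' a set of elements of $\ab_\phi$; hence $E_\phi(s,t) \subseteq \ab_\phi$. For $E_\phi^+$, the first summand $E_\phi(s,t)$ has just been shown to be a subset of $\ab_\phi$, and the path-reachability summand is again explicitly of the form $\{G \in \ab_\phi \mid \dots\}$, so it too is a subset of $\ab_\phi$. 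As unions preserve the property of being contained in $\ab_\phi$, the desired inclusion follows.

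Combining these observations, $(W, \ab_\phi, E_\phi^+)$ is a graph in the sense of Definition~\ref{def:graphs}, and therefore $M_\phi^+$ is a model. I do not expect any real obstacle: the statement is a definitional sanity check rather than a theorem with content. The only points requiring a moment of care are the notational reconciliation just mentioned and the observation that the elements added to $\ab$ are exactly the groups $G$ occurring under $E_G$ or $C_G$ in $\phi$, which serve only as fresh labels available in $\ab_\phi$ and play no further role in this verification. (Note that the proposition claims only that $M_\phi^+$ is a \emph{model}; no positivity or symmetry needs to be checked, so similarity is not at issue here.)
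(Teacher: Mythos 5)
Your verification is correct: the only non-trivial point is that $E_\phi^+(s,t)\subseteq \ab_\phi$ for all $s,t$, which you establish by inspecting each summand, and the remaining components are inherited unchanged. The paper offers no proof at all (the proposition is stated with an immediate \textsf{qed}, being a definitional sanity check), so your direct unpacking of Definition~\ref{def:models} is exactly the argument the authors leave implicit.
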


\begin{lemma}\label{lem:trans-e}
Given formulas $\phi$ and $\chi$, a group $G$, a model $M$ and a state $s$ of $M$:
\begin{enumerate}
\item\label{it:trans-e1} $M,s\models\phi$ iff $M^+_{\chi},s\models\phi$;
\item\label{it:trans-e3} If ``$C_G$'' appear in $\chi$, then $M,s\models C_G\phi$ iff $M,t\models \phi$ for any state $t$ such that $G\in E^+_{\chi}(s,t)$.
\end{enumerate} 
\end{lemma}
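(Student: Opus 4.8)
The plan is to treat the two clauses separately, since they call for different arguments. Clause~\ref{it:trans-e1} I would prove by a routine induction on the structure of $\phi$, once it is observed that the passage from $M$ to $M^+_\chi$ leaves the domain $W$, the capability function $C$ and the valuation $\nu$ untouched and only enlarges the edge labels by fresh group-abilities. Clause~\ref{it:trans-e3} is the familiar ``common knowledge as reachability'' fact: the transitive structure recorded in $E^+_\chi$ is exactly the one generated by the everyone-knows relation associated with $G$, so $G\in E^+_\chi(s,t)$ records exactly those $t$ that $C_G$ must govern.

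For clause~\ref{it:trans-e1}, the single fact doing the work is that for every $X\subseteq\ab$ and all $s,t\in W$,
\[
X\subseteq E(s,t)\quad\Longleftrightarrow\quad X\subseteq E^+_\chi(s,t).
\]
Indeed, by Definition~\ref{def:trans-e} the labels present in $E^+_\chi(s,t)$ but absent from $E(s,t)$ are all group-abilities drawn from $A_\chi\setminus\ab$ (which are fresh, i.e.\ lie outside $\ab$, by the conventions of Definition~\ref{def:trans-e}), so that $E^+_\chi(s,t)\cap\ab=E(s,t)$; since $X\subseteq\ab$ the equivalence is immediate. Because $C(a)\subseteq\ab$ for each agent $a$, and hence $\bigcup_{a\in G}C(a)\subseteq\ab$ and $\bigcap_{a\in G}C(a)\subseteq\ab$ as well, this equivalence makes the satisfaction clauses for $K_a$, $D_G$ and $M_G$ coincide in $M$ and $M^+_\chi$ once the induction hypothesis is applied to the immediate subformula. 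The atomic and Boolean cases are trivial since $\nu$ is shared, and the $C_G$ case reduces to these: the truth of $C_G\psi$ at a state is fixed by the truth of the proper subformula $\psi$ throughout $W$ (the induction hypothesis) together with the $K_a$-accessibility (which agrees by the displayed equivalence), so $C_G\psi$ too has the same truth value in both models.

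For clause~\ref{it:trans-e3}, I would first note that ``$C_G$'' occurring in $\chi$ gives $G\in A_\chi$, whence for all $u,v\in W$ one has $G\in E_\chi(u,v)$ iff $C(a)\subseteq E(u,v)$ for some $a\in G$ (the direct label $G\in E(u,v)$ being impossible, as $E(u,v)\subseteq\ab$ and $G\notin\ab$); by the semantics of $E_G=\bigwedge_{a\in G}K_a$ this says precisely that $M,u\models E_G\theta$ iff $M,v\models\theta$ for every $v$ with $G\in E_\chi(u,v)$. Unfolding Definition~\ref{def:trans-e}, $G\in E^+_\chi(s,t)$ holds iff there is a chain $s=s_0,\dots,s_n=t$ with $n\geq 1$ and $G\in E_\chi(s_{i-1},s_i)$ for every $i$. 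A short induction on $n$ then shows that $M,s\models E^n_G\phi$ iff $M,t\models\phi$ for every endpoint $t$ of such a chain of length exactly $n$: the base case $n=1$ is the previous sentence, and the step uses $E^n_G\phi=E_G E^{n-1}_G\phi$ together with the induction hypothesis. Finally, by the semantic clause $M,s\models C_G\phi$ iff $M,s\models E^n_G\phi$ for all $n\geq 1$; quantifying the chain characterisation over all $n\geq 1$ turns ``endpoint of some chain of length $n$'' into ``$t$ with $G\in E^+_\chi(s,t)$'', which is exactly the asserted equivalence.

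The atomic, Boolean and edge-agreement parts of clause~\ref{it:trans-e1} are entirely routine. The only place that demands genuine care is the bookkeeping in clause~\ref{it:trans-e3}: one must keep the single fixed label $G$ along the whole chain --- as enforced by the intersection $\bigcap_{0\le i<n}E_\chi(s_i,s_{i+1})$ in the definition of $E^+_\chi$ --- in lockstep with the iterated operator $E^n_G$, and check that the ``exactly $n$ steps'' induction aggregates correctly, under the quantifier over all $n\geq 1$, to the desired ``at least one step'' reachability. I expect this aggregation, rather than any individual case, to be the main (if still modest) point of the proof.
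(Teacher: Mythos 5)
Your proposal is correct and follows essentially the same route as the paper: clause~\ref{it:trans-e1} rests on the observation that the freshly added group-labels lie outside $\ab$, so that $X\subseteq E(s,t)$ iff $X\subseteq E^+_\chi(s,t)$ for every $X\subseteq\ab$ (in particular for $C(a)$, $\bigcup_{a\in G}C(a)$ and $\bigcap_{a\in G}C(a)$), and clause~\ref{it:trans-e3} goes through the same base case identifying $M,u\models E_G\theta$ with truth of $\theta$ at all $v$ with $G\in E_\chi(u,v)$, followed by the same induction on $n$ matching $E^n_G$ with chains of length $n$ and the quantifier exchange over $n\geq 1$. The extra care you take over the freshness assumption and the aggregation step is exactly the content the paper compresses into its proof of $(*)$.
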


\begin{proof}
\ref{it:trans-e1}. Notice that for any agent $a$, formula $\chi$ and states $s,t$, we have $C(a)\subseteq E(s,t)$ iff $C(a)\subseteq E_\chi(s,t)$ iff $C(a) \subseteq E^+_{\chi}(s,t)$. Thus it is easy to verify that $(M,s)$ , $(M_\chi,s)$ and $(M^+_{\chi},s)$ satisfy exactly the same formulas.

\ref{it:trans-e3}. We first observe the base case for $C_G\phi$:
$$\begin{array}[t]{rcll}
	M,s \models E_G \phi & \iff & \text{for any $a \in G$, $M,s\models K_a\phi$} & \\
	& \iff & \text{for any $a\in G$ and $t\in W$, $C(a)\subseteq E(s,t)$ implies $M,t \models \phi$} & \\
	& \iff & \text{for any $t\in W$ and $a\in G$, $C(a)\subseteq E(s,t)$ implies $M,t \models \phi$} & \\
	& \iff & \text{for any $t\in W$, if $C(a)\subseteq E(s,t)$ for some $a\in G$, then $M,t \models \phi$} & \\
	& \iff & \text{for any $t\in W$, $G\in E_\chi(s,t)$ implies $M,t \models \phi$} & \\
	& \iff & \text{$M,t\models \phi$ for any state $t$ such that $G\in E_\chi(s,t)$.} &\\
\text{and so}\quad	M,s \models C_G \phi & \iff & \text{$M,s\models E^k_G\phi$ for all $k\in \mbN^+$} & \\
	& \iff & \text{$M,t\models \phi$ for any state $t$ such that $G\in E^+_{\chi}(s,t)$} & (*)

\end{array}$$
where $(*)$ can be shown as follows: Suppose $M,s \not \models E^n_G \phi$ for some $n\in\mbN^+$, then by induction on $n$, we have $s_1, \dots, s_n \in W$ such that $M, s_n \not\models \phi$ and $G \in E_\chi(s,s_1)\cap\bigcap_{1\leq i< n}E_\chi(s_i,s_{i+1})$. Hence $M,s_n\not\models\phi$ and $G\in E^+_\chi(s,s_n)$.
Suppose $M,t\not\models \phi$ for some state $t$ such that $G\in E^+_\chi(s,t)$, w.l.o.g, assume that there exist $s_0,\dots,s_n\in W$ such that $s_0=s$, $s_n=t$, $G\in \bigcap_{0\leq i< n} E_\chi(s_i,s_{i+1})$ and $M,s_n \not\models \phi$. Thus using the above result $n$ times we have $M,s\not\models E^n_G \phi$.
\end{proof}

\begin{theorem}\label{thm:complexity}
The model checking problem for every proposed logic is in P.
\end{theorem}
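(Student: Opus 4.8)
The plan is to extend the truth-set algorithm of Lemma~\ref{lem:mc-el} (Algorithm~\ref{alg:val}) with clauses for the three group modalities $D_G$, $M_G$ and $C_G$, and to argue that each added clause still runs in polynomial time, so that the overall recursive computation of $Val(M,\phi)$ stays polynomial. Since deciding $M,s\models\phi$ reduces to the membership test $s\in Val(M,\phi)$, this yields membership in P; and because model checking over similarity models is a subproblem of model checking over arbitrary models, the similarity-model logics inherit the same bound.

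For the distributed and mutual knowledge operators the clauses are direct analogues of the $K_a$ clause. First I would compute $\bigcup_{a\in G}C(a)$ (respectively $\bigcap_{a\in G}C(a)$), which costs at most $|G|\cdot|\ab|$ steps; then $Val(M,D_G\psi)$ (respectively $Val(M,M_G\psi)$) is obtained by the same double loop over $W$ that the $K_a$ case uses, testing $\bigcup_{a\in G}C(a)\subseteq E(t,u)$ (respectively the intersection) in place of $C(a)\subseteq E(t,u)$. After the union/intersection is formed, each such subset test is still within $O(|\ab|)$, so these clauses cost the same $O(|W|^2\cdot(|\ab|+|W|))$ order as the $K_a$ clause, up to cheap preprocessing of the group set.

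The genuinely different case is common knowledge, and this is where I expect the main obstacle, since the semantics of $C_G\psi$ is an unbounded conjunction $\bigwedge_n E_G^n\psi$. The way around it is Lemma~\ref{lem:trans-e}: taking $\chi$ to be the whole input formula $\phi$, part~(\ref{it:trans-e1}) lets us work in the augmented model $M^+_\phi$, and part~(\ref{it:trans-e3}) recasts $M,s\models C_G\psi$ as the reachability condition ``$M,t\models\psi$ for every $t$ with $G\in E^+_\phi(s,t)$''. I would precompute $E^+_\phi$ once at the start: for each of the at most $|\phi|$ group labels $G$ occurring in $\phi$, form the relation $\{(s,t)\mid G\in E_\phi(s,t)\}$ and take its transitive closure (a standard polynomial-time computation, e.g.\ $O(|W|^3)$ per label). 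With $E^+_\phi$ in hand, the clause for $C_G\psi$ first computes $Val(M,\psi)$ recursively and then returns $\{s\mid \forall t\colon G\in E^+_\phi(s,t)\Rightarrow t\in Val(M,\psi)\}$, again a double loop over $W$.

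Finally I would assemble the bound. The recursion unfolds over the subformulas of $\phi$, of which there are at most $|\phi|$, each handled by one of the polynomial clauses above; the one-time cost of building $E^+_\phi$ is polynomial in $|M|$ and $|\phi|$; and the closing membership test adds at most $|W|$. Summed against the input size $|\phi|+|M|+1$, the total remains polynomial, so model checking is in P for each of the sixteen logics. The only subtle point requiring care is to confirm that the single precomputation of $E^+_\phi$ correctly serves every common-knowledge subformula simultaneously, which is exactly what the choice $\chi=\phi$ in Lemma~\ref{lem:trans-e} guarantees.
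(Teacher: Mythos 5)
Your proposal is correct and follows essentially the same route as the paper: direct analogues of the $K_a$ clause for $D_G$ and $M_G$, and Lemma~\ref{lem:trans-e} together with a polynomial precomputation of $E^+_\phi$ to turn each $C_G\psi$ into a reachability check. The only cosmetic difference is that you compute the closure as a per-group transitive closure while the paper uses an iterative saturation loop; both are polynomial and yield the same relation.
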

\begin{proof}
The case for \l and \ls is given as Lemma~\ref{lem:mc-el}. For the extended logics, it suffices to provide a polynomial algorithm for the types of formulas $D_G\phi$, $E_G\phi$ (which is the base case for $C_G\phi$), $C_G\phi$ and $M_G\phi$. The details are given in Algorithm~\ref{alg:val-dcx}.

\begin{algorithm}
\caption{Function $Val(M,\phi)$ extended: cases with group knowledge operators}\label{alg:val-dcx}
\vspace{-1.2em}
\begin{multicols}{2}
\footnotesize
\begin{algorithmic}[1]
	\Initialize{$temVal \gets \emptyset$}
	\If{...} ... \Comment{This part keeps the same as in Algorithm~\ref{alg:val}}		
	\ElsIf{$\phi=D_G\psi$}
	\ForAll{$t\in W$}
	\Initialize{$n \gets \True$}
	\ForAll{$u \in W$}
	\If{$\bigcup_{a\in G}C(a)\subseteq E(t,u)$ \algand $u\not\in Val(M,\psi)$} {$n \gets \False$}
	\EndIf
	\EndFor
	\If{$n = \True$} $tmpVal \gets tmpVal \cup \{t\}$
	\EndIf
	\EndFor
	\State \Return $tmpVal$ \quad \Comment{Returns $\{t\in W\mid \forall u\in W: \bigcup_{a\in G}C(a)\subseteq E(t,u)\Rightarrow u\in Val(M,\psi)\}$}
	
	\ElsIf{$\phi = C_G\psi$}
	\ForAll{$t \in W$}
	\Initialize{$n \gets \True$}
	\ForAll{$u \in W$}
	\If{$G\in E^+_{\phi}(t,u)$ \algand $u \notin Val(M,\psi)$} {$n \gets \False$}
	\EndIf
	\EndFor
	\If{$n = \True$} $tmpVal \gets tmpVal \cup \{t\}$
	\EndIf
	\EndFor
	\State \Return $tmpVal$ \quad \Comment{Returns $\{t\in W\mid \forall u\in W: G\in E^+_{\phi}(t,u)\Rightarrow u\in Val(M,\psi)\}$}
	
	\ElsIf{$\phi = M_G\psi$}
	\ForAll{$t\in W$}
	\Initialize{$n \gets \True$}
	\ForAll{$u\in W$}
	\If{$\bigcap_{a\in G}C(a)\subseteq E(t,u)$ \algand $u\not\in Val(M,\psi)$} {$n \gets \False$}
	\EndIf
	\EndFor
	\If{$n = \True$} $tmpVal \gets tmpVal \cup \{t\}$
	\EndIf
	\EndFor
	\State \Return $tmpVal$ \quad \Comment{Returns $\{t\in W\mid \forall u\in W: \bigcap_{a\in G}C(a)\subseteq E(t,u)\Rightarrow u\in Val(M,\psi)\}$}
	\EndIf
\end{algorithmic}
\end{multicols}
\vspace{-.8em}
\end{algorithm}

As explained in the proof of Lemma \ref{lem:mc-el}, checking $C(a) \subseteq E(t,u)$ costs at most $|A|$ steps, here we furthermore need to the cost caused by the new modalities.

For $D_G$ and $M_G$, notice that the number of agents in any group $G$ that appears in $\phi$ is less than $|\phi|$, so checking $\bigcup_{a \in G} C(a)\subseteq E(t,u)$ and $\bigcap_{a \in G} C(a)\subseteq E(t,u)$ costs at most $|A| \cdot |\phi|$ steps. Thus for the logics extended with these modalities, the complexity for model checking would not go beyond P.

For $C_G$ (and its base case $E_G$), we need to ensure that there is a polynomial-time algorithm for computing $E_\phi(s,t)$ and $E^+_{\phi}(s,t)$ and checking whether $G$ is an element of them. By Definition~\ref{def:trans-e} and Lemma~\ref{lem:trans-e}, computing the set $A_\phi$ costs at most $|\phi|$ steps, since $C_G$ appears in $\phi$ at most $|\phi|$ times; and the size of $A_\phi$ is at most $|A|+ |\phi|$. To compute $E_\phi(s,t)$ for any given $s$ and $t$, it costs at most $|A| \cdot |\phi|$ steps to check whether there is $a\in G$ such that $C(a)\subseteq E(s,t)$. So the cost of computing the whole function $E_\phi$ can be finished in at most $|W|^2 \cdot |\phi| \cdot |A|$ steps.
Now we consider the computation of $E^+_{\phi}$. Assume that we have a string that describes $E_\phi$, then we check for all pairs $(s,t),(t,u)\in W^2$ whether there is a $G$ appearing in $\phi$ such that $G\in E_\phi(s,t)\cap E_\phi(t,u)$; if it is, we add $G$ as a member of $E_\phi(s,u)$. Keep doing this until $E_\phi$ does not change any more. Every round of checking takes at most $|\phi| \cdot (|A|+|\phi|)\cdot|W|^3$ steps, and it will be stable in at most $|\phi|\cdot|W|^2$ rounds. Then we obtain the function $E^+_{\phi}$ as we want. Every membership checking for $G\in E^+_{\phi}(t,u)$ is finished in polynomial steps. So the whole process is still in P.
\end{proof}

\subsection{The satisfiability/validity problem}
\label{sec:sat}

In this section, we explore the satisfiability and validity problems for the logics we have discussed. As articulated at the start of  Section~\ref{sec:complexity}, the two problems share the same complexity for every logic presented in this paper. We will show that all logics incorporating common knowledge exhibit EXPTIME-complete complexity for the satisfiability problem, while those devoid of common knowledge have PSPACE-completeness. The methodology for their proofs is depicted in Figure~\ref{fig:sat-withc} for logics with common knowledge and Figure~\ref{fig:sat-nonc} for logics without it.

The final results are presented in Theorem~\ref{thm:com-complex}, and the reduction methods used to obtain these results are provided in Lemmas~\ref{lem:red-withc}, \ref{lem:red-s52sc}, \ref{lem:red-sc2c}, \ref{lem:red-sdm2dm} and \ref{lem:red-dm2d}. We now proceed to validate these lemmas. For each lemma, we introduce a set of rewriting rules capable of transforming any satisfiable formula in the logic to be reduced into a satisfiable formula in the target logic. We present five such sets of rewriting rules. The verification of each lemma involves demonstrating the invariance and polynomial-time upper bound of the rewriting, a process that becomes lengthy when pursued in detail and generally follows the same pattern across all lemmas. For the sake of readability, we have relegated the detailed proof of rewriting invariance (namely, the proof of each proposition below) to a technical appendix at the end of the paper.

\begin{figure}
\footnotesize
\centering
$\xymatrix@C=2.7em{
&&*++o[F]{\lscdm}\ar@/^2.5pc/@{-->}[dd]^{\text{\begin{tabular}{@{}c@{}}
PTIME\\\tiny(Lemma~\ref{lem:red-withc})
\end{tabular}}}
&&*++o[F]{\lcdm}\ar@/_2.5pc/@{-->}[dd]
\\
&*++o[F]{\lscd}\ar@/^1.5pc/[ur]
&*++o[F]{\lscm}\ar[u]
&&*++o[F]{\lcm}\ar[u]
&*++o[F]{\lcd}\ar@/_1.5pc/[ul]
&*+[F]{\dfrac{\textrm{K}^C_n\ (n \geq 1)}{\text{\scriptsize EXPTIME complete}}}
\\
*+[F]{\dfrac{\textrm{S5}^C_n\ (n \geq 2)}{\text{\scriptsize EXPTIME complete}}}\ar@{-->}[rr]^{\text{PTIME}}_{\text{\tiny(Lemma~\ref{lem:red-s52sc})}}
&&*++o[F]{\lsc}\ar@/^1.5pc/[lu]\ar[u]\ar@{-->}[rr]^{\text{PTIME}}_{\text{\tiny(Lemma~\ref{lem:red-sc2c})}}
&& *++o[F]{\lc}\ar[u]\ar@/_1.5pc/[ur]\ar@/_1pc/[urr]
&& *+[F]{\dfrac{\textrm{K}^C_1}{\text{\scriptsize EXPTIME complete}}}\ar@{-->}@<2pt>[ll]_{\text{linear}}
}$
\caption{This figure illustrates the EXPTIME complete logics with common knowledge. A solid arrow from one logic to another represents the satisfiability/validity problem of the former logic as a subproblem of the satisfiability/validity problem for the latter. A dashed arrow labeled ``PTIME'' from one logic to another indicates that the satisfiability problem for the former logic can be reduced to the satisfiability problem for the latter using a polynomial-time algorithm. While ``PTIME'' indicates a polynomial-time reduction, ``linear'' signifies a linear-time reduction. These reductions are demonstrated in Lemmas~\ref{lem:red-withc}, \ref{lem:red-s52sc} and \ref{lem:red-sc2c}. The complexity results for \textrm{K}$_1^C$, \textrm{K}$_n^C$ and \textrm{S5}$_n^C$ are all taken from \cite[Section 3.5]{FHMV1995}.}\label{fig:sat-withc}
\end{figure}
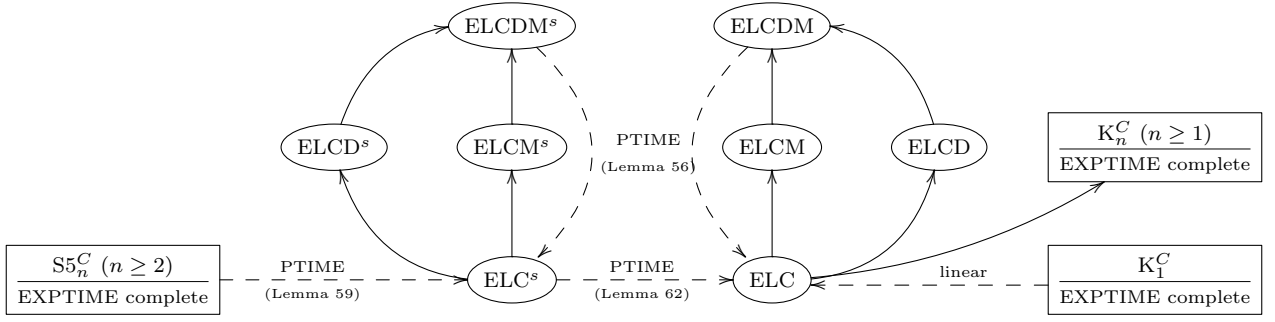

\begin{figure}
\footnotesize
\centering
$\xymatrix@C=3.3em{
&&*++o[F]{\lsdm}\ar@{-->}[rr]^{\text{PTIME}}_{\text{\tiny(Lemma~\ref{lem:red-sdm2dm})}}
&&*++o[F]{\ldm}\ar@/_1.5pc/@{-->}[dr]^{\text{\hspace{-6pt}\begin{tabular}{@{}c@{}}
PTIME\\[-1.2ex]\tiny(Lemma~\ref{lem:red-dm2d})
\end{tabular}}}
\\
&*++o[F]{\lsd}\ar@/^1.5pc/[ur]
&*++o[F]{\lsm}\ar[u]
&&*++o[F]{\lm}\ar[u]
&*++o[F]{\ld}\ar@/_1.5pc/[ul]\ar@<2pt>[r]
&*+[F]{\dfrac{\textrm{K}^D_n\ (n \geq 1)}{\text{\scriptsize PSPACE complete}}}\ar@<2pt>[l]
\\
*+[F]{\dfrac{\textrm{KB}_1}{\text{\scriptsize PSPACE complete}}}\ar[rr]
&&*++o[F]{\ls}\ar@/^1.6pc/[lu]\ar[u]
&& *++o[F]{\l}\ar@<0pt>[rr]\ar[u]\ar@/_1.5pc/[ur]
&& *+[F]{\dfrac{\textrm{K}_n\ (n \geq 1)}{\text{\scriptsize PSPACE complete}}}\ar@<3pt>[ll]
}$
\caption{This figure displays the PSPACE complete logics without common knowledge. It follows a similar convention to Figure~\ref{fig:sat-withc}. The complexity results of known satisfiability/validity problems are listed within the boxed frames. The results for \textrm{K}$_n$ and \textrm{K}$_n^D$ are derived from \cite[Section 3.5]{FHMV1995} (the subscript indicates the number of agents allowed in a logic), and the result for \textrm{KB}$_1$ is considered folklore (a clear reference can be found in [18], where it is named ``KB'' and refers to a 1992 manuscript for a proof).}\label{fig:sat-nonc}
\end{figure}
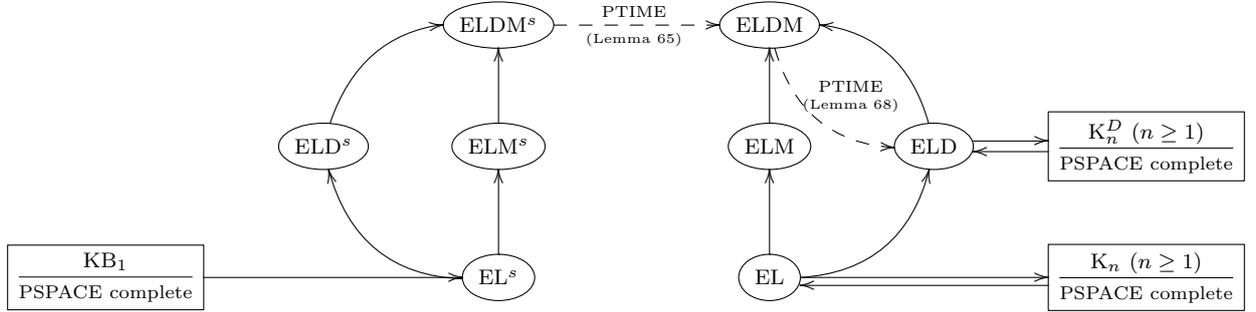

\paragraph{\bfseries\underline{Reductions from \lcdm to \lc, and from \lscdm to \lsc}.}

We will present a rule that rewrites any \langcdm-formula into an \langc-formula, preserving (s)-satisfiability in the process. To accomplish this, we will extend the set \ag of agents, adding a new agent for each operator $K_a$, $D_G$, and $M_G$ (for any agent $a$ and group $G$ in \ag). We denote these new agents as $f(K_a)$, $f(D_G)$, or $f(M_G)$, which means that $f$ is a function that maps every modal operator except $C_G$ (for the existing set \ag of agents) to a new agent. We use $\ag^+$ to represent this expanded set of agents, i.e., $\ag^+ = \ag \cup \{f(K_a), f(D_G), f(M_G) \mid a \in \ag \text{ and } \emptyset \neq G \subseteq \ag\}$. For any formula based on $\ag$, we denote $ag^+ (\phi)$ as the union of (1) the set of agents present in $\phi$ and (2) the set of new agents for the modal operators found in $\phi$. It is evident that $ag^+(\phi) \subseteq \ag^+$. 

\begin{definition}[rewriting rules]
\label{def:re1}
Consider an $\langcdm_\ag$-formula $\phi$. We define $\rho(\phi)$ as an $\langc_{\ag^+}$-formula constructed by sequentially applying the following steps:
\begin{enumerate}
\item Replace $\phi$ by $\phi \wedge \big( \big( \bigwedge_{\chi \in \mu(\phi)} \chi \big) \wedge C_{ag^+(\phi)} \big(\bigwedge_{\chi \in \mu(\phi)}\chi \big) \big)$. Here, $\mu(\phi)$ is the set of formulas of the following types:
(1) $M_G \psi \ra K_a\psi$, (2) $K_a\psi \ra D_G\psi$, (3) $M_H\psi \ra M_G\psi$, (4) $D_G \psi \ra D_H\psi$, (5) $M_I\psi \ra D_J\psi$.
In these, $a$ is any agent present in $\phi$, and $G, H, I, J$ are any groups of agents appearing in $\phi$, such that $a \in G \subseteq H$, $I \cap J \neq \emptyset$ and $\psi$ is a subformula of $\phi$;

\item For any agent $a \in \ag$, replace all instances of $K_a$ by $K_{f(K_a)}$;

\item For any group $G \subseteq \ag$, replace all instances of $D_G$ by $K_{f(D_G)}$, and $M_G$ by $K_{f(M_G)}$;

\item For any group $G \subseteq \ag$, replace all instances of $C_G$ by $C_{G'}$, where $G' = \{ f(K_a) \mid a \in G\}$. Note that the operator $C_{\ag^+(\phi)}$ generated in the first step is not replaced.
\end{enumerate}

We define $\rho'$ as a variant of $\rho$ that includes only of the last three steps. In other words, $\rho'(\phi)$ is obtained without initially replacing $\phi$ with the larger formula.
\qed
\end{definition}

\begin{propositionrep}[invariance of rewriting]\label{lem:sat-cdm2c}
Given an $\langcdm_\ag$-formula $\phi$, the following hold:
\begin{enumerate}
\item \label{it:sat-langc} Both $\rho(\phi)$ and $\rho'(\phi)$ are $\langc_{\ag^+}$-formulas;
\item \label{it:sat-cdm2c} The formula $\phi$ is satisfiable if and only if $\rho(\phi)$ is satisfiable;
\item \label{it:ssat-cdm2c} The formula $\phi$ is s-satisfiable if and only if $\rho(\phi)$ is s-satisfiable.
\end{enumerate}
\end{propositionrep}
\begin{inlineproof}
For the detailed proof, please refer to the appendix.
\end{inlineproof}
\begin{proof}
\ref{it:sat-langc}. \textit{Language Conversion}. In the first and second steps of the definition of $\rho$, new modalities are introduced, transforming the resulting formula into an element of $\langcdm_{\ag^+}$. The third step converts this into an $\langc_{\ag^+}$-formula, which remains unchanged in the fourth step.

\ref{it:sat-cdm2c}. Let us consider the direction from left to right. Begin by assuming that $\phi$ is satisfiable. By virtue of the soundness of \KCDM, $\phi$ is also \KCDM-consistent. Referring to the completeness theorem's proof (Section~\ref{sec:completeness5}), we find that $\phi$ is satisfied at a state $\langle \Delta^+\rangle$ in the standard model $\CM=(\CW,\CA,\CE,\CC,\CV)$ for \lcdm with respect to $cl(\phi)$. Here, $\Delta^+$ is a maximal $\KCDM$-consistent subset of $cl(\phi)$, and $\phi$ is an element of $\Delta^+$.

Consider the model $\CM_1=(\CW_1,\CA,\CE_1,\CC,\CV_1)$, where: 
\begin{itemize}
\item $\CW_1$ is the set of states $I$-reachable from $\langle \Delta^+ \rangle$.%
\footnote{A state $t$ is $I$-reachable from $s$ if there exists a list of states $s_0, \dots, s_n$ ($n \in \mbN$) such that $s_0=s$, $s_n=t$, and $\bigcap_{ a \in I }\CC(a) \subseteq \CE(s_{i-1}, s_i)$ for all $1 \leq i \leq n$.}
Here, $I$ denotes the set of agents present in $\phi$.

\item $\CE_1$ and $\CV_1$ are restrictions of $\CE$ and $\CV$ on $\CW_1$, respectively.
\end{itemize}
Since any group $G$ in $\phi$ is a subset of $I$, we can inductively show that for every subformula $\psi$ of $\phi$ and each state $s \in \CW_1$, $\CM_1,s\models\psi$ if and only if $\CM,s\models\psi$. In particular, $\CM_1,\langle \Delta^+\rangle\models\phi$, and for every state $s\in\CW_1$ and formula $\chi\in\mu(\phi)$, $\CM_1,s\models\chi$.

Next, consider the model $\CM_2=(\CW_1,\CA,\CE_1,\CC_2,\CV_1)$, where $\CC_2:\ag^+\to\CA$ is defined as follows:
$$\CC_2(a) = \left\{\begin{array}{ll}
\CC(a), & \text{if $a\in\ag$},\\
\CC(b), & \text{if $a$ is $f(K_b)$ for some $b\in\ag$},\\
\bigcup_{a\in G} \CC(a), & \text{if $a$ is $f(D_G)$ for some group $G$ on $\ag$},\\
\bigcap_{a\in G} \CC(a), & \text{if $a$ is $f(M_G)$ for some group $G$ on $\ag$}.
\end{array}\right.$$
For any $\langcdm_\ag$-formula $\psi$ and any state $s\in\CW_1$, we can confirm that (1) $\CM_1,s\models \psi \Longleftrightarrow \CM_2,s\models \psi$, and (2) $\CM_1,s\models \psi \Longleftrightarrow \CM_2,s\models \rho'(\psi)$. From the first condition, we deduce that $\CM_2,\langle\Delta^+\rangle\models \phi \wedge \big(\big(\bigwedge_{\chi \in \mu(\phi)} \chi \big) \wedge C_{ag^+(\phi)} \big(\bigwedge_{\chi \in \mu(\phi)}\chi \big) \big)$. From the second condition, we infer that $\CM_2,\langle\Delta^+\rangle\models\rho(\phi)$. Hence, we conclude that $\rho(\phi)$ is satisfiable.

Consider the scenario from right to left. Let us suppose that $\rho(\phi)$ is satisfiable, hence it is \KC-consistent. The proof of the completeness theorem (Theorem \ref{thm:completeness6}) demonstrates that it is satisfied at a state $\langle\Delta^+\rangle$ in the standard model $\CM=(\CW,\CA,\CE,\CC,\CV)$ for \lc with respect to $cl(\rho(\phi))$. In this context, $\Delta^+$ is a maximal $\KC$-consistent subset of $cl(\rho(\phi))$ and $\rho(\phi) \in \Delta^+$. We then consider the model $\CM_1=(\CW_1,\CA,\CE_1,\CC,\CV_1)$ where (1) $\CW_1$ is the set of states reachable from $\langle\Delta^+\rangle$, 
(2) $\CE_1$ is the restriction of $\CE$ on $\CW_1$, and (3) $\CV_1$ is the restriction of $\CV$ on $\CW_1$.
Therefore, for any $s\in\CW_1$ and any $\langc_{\ag^+}$-formula $\psi$, $\CM_1,s\models\psi \Longleftrightarrow \CM,s\models\psi$. Specifically,  $\CM_1,\langle\Delta^+\rangle\models\rho(\phi)$.

Next we consider the model $\CM_2=(\CW_1,\CA,\CE_2,\CC,\CV_1)$, where $\CE_2:(\CW_1)^2 \to \CA$ is given by the following:
$$\CE_2(s,t)  = \left\{\begin{array}{ll}
	\CC(a), & \text{if $t$ is $s$ extended with $\langle (\{f(K_a)\},d), \Psi \rangle$ and $\{\chi\mid K_{f(K_a)}\chi \in tail(s)\}  \subseteq \Psi$},\\
	\bigcup_{a\in G} \CC(a), & \text{if $t$ is $s$ extended with $\langle (\{f(D_G)\},d), \Psi \rangle$ and $\{\chi\mid K_{f(D_G)}\chi \in tail(s)\}  \subseteq \Psi$},\\
	\bigcap_{a\in G} \CC(a), & \text{if $t$ is $s$ extended with $\langle (\{f(M_G)\},d), \Psi \rangle$ and $\{\chi\mid K_{f(M_G)}\chi \in tail(s)\}  \subseteq \Psi$},\\
	\emptyset, & \text{otherwise}.
\end{array}\right.$$
Note that for any $\langcdm_{\ag}$-formula $\psi$ and any $s\in\CW_1$, if $\psi$ is a subformula of $\phi$, then $\CM_1,s \models \rho'(\psi) \Longleftrightarrow \CM_2,s\models \psi$. This can be demonstrated by induction on $\psi$. The boolean cases are straightforward. Here we only prove the case for $\psi = C_G \chi$. Other modalities can be shown in a similar manner:

Assume $\CM_1,s\not\models \rho'(C_G\chi)$, which implies $\CM_1,s\not\models C_{G'}\rho'(\chi)$. This signifies the existence of $a_1,\dots,a_n$ ($n\in\mbN^+$) such that $\CM_1,s\not\models K_{f(K_{a_1})}\cdots K_{f(K_{a_n})}\rho'(\chi)$. Consequently, there must be $s_0,\dots,s_n$ ($n\in\mbN$) fulfilling the conditions: $s_0=s$, $\CM_1,s_n\not\models \rho'(\chi)$, and for any $1\leq i\leq n$, $\CC(f(K_{a_i}))\subseteq \CE_1(s_{i-1},s_i)$. For each $1\leq i\leq n$, $s_i$ is the extension of $s_{i-1}$ with $\langle ({f(K_{a_i})},d), \Psi_i \rangle$, where $\Psi_i$ is a maximal \KC-consistent subset of $cl(\rho(\phi))$. Moreover, $\{\chi\mid K_{f(K_{a_i})}\chi \in tail(s) \} \subseteq \Psi_i$. For each $1\leq i\leq n$, it holds that $\CC(a_i)\subseteq \CE_2(s_{i-1},s_i)$. The induction hypothesis implies that $\CM_2,s_n\not\models \chi$, leading to the conclusion $\CM_2,s_n\not\models C_G\chi$.

Now, let's assume $\CM_2,s\not\models C_G\chi$. This implies the existence of $a_1,\dots,a_n$ ($n\in\mbN^+$) such that $\CM_2,s\not\models K_{a_1}\cdots K_{a_n}\chi$. Consequently, there must be $s_0,\dots,s_n$ ($n\in\mbN$) fulfilling the conditions: $s_0=s$, $\CM_2,s_n\not\models \chi$, and for any $1\leq i\leq n$, $\CC(a_i)\subseteq \CE_2(s_{i-1},s_i)$. For each $1\leq i\leq n$, either: (1) $s_i$ is the extension of $s_{i-1}$ with $\langle ({f(K_{a_i})},d), \Psi_i \rangle$ and $\{\chi\mid K_{f(K_{a_i})}\chi \in tail(s)\} \subseteq \Psi_i$, (2) $s_i$ is the extension of $s_{i-1}$ with $\langle ({f(D_{G_i})},d), \Psi_i \rangle$ and $\{\chi\mid K_{f(D_{G_i})}\chi \in tail(s)\} \subseteq \Psi_i$, or (3) $s_i$ is the extension of $s_{i-1}$ with $\langle (f(M_{\{a_i\}}), d), \Psi_i \rangle$ and $\{ \chi \mid K_{f (M_{\{a_i\}})} \chi \in tail(s)\} \subseteq \Psi_i$. Here, $\Psi_i$ is a maximal \KC-consistent subset of $cl(\rho(\phi))$ and $G_i$ is a group of agents in $\ag$ such that $a_i\in G_i$. For any $\eta$, if $K_{f(K_{a_i})}\eta\in cl(\rho(\phi))$, then either (i) $\eta \in \{ \rho'(\psi) \mid \text{$\psi$ is a subformula of $\phi$} \}$, (ii) $\eta=\rho'\big(\bigwedge_{\chi \in \mu(\phi)} \chi \big)$, or (iii) $\eta=\rho'\big(C_{ag^+(\phi)}\bigwedge_{\chi \in \mu(\phi)} \chi \big)$. Therefore, for any $\eta$, if $K_{f(K_{a_i})}\eta\in tail(s)$, it implies $K_{f(D_{G_i})}, K_{f(M_{\{a_i\}})}\eta\in tail(s)$. Thus, in all cases (1)--(3) we have $\{\chi\mid K_{f(K_{a_i})}\chi \in tail(s)\} \subseteq \Psi_i$. 
Let us define a sequence of canonical paths $s'_0,\dots,s'_n$ ($n\in\mbN$) such that $s_0=s$, and for each $1 \leq i \leq n$, $s'_i$ is the extension of $s'_{i-1}$ with $\langle (\{f(K_{a_i})\},d), \Psi_i \rangle$. For each $1\leq i\leq n$, we have $\CC(f(K_{a_i}))\subseteq \CE_1(s'_{i-1},s'_i)$. By applying the truth lemma, we conclude $\CM_2,s'_n\not\models \chi$. The induction hypothesis implies $\CM_1,s'_n\not\models \rho'(\chi)$, leading to the conclusion $\CM_1,s\not\models C_{G'}(\rho'(\chi))$, i.e., $\CM_1,s\not\models \rho'(C_G\chi)$.

Thus, $\CM_1,\langle\Delta^+\rangle\models \rho'(\phi)$ if and only if $\CM_2,\langle\Delta^+\rangle\models \phi$. Given that $\CM_1,\langle\Delta^+\rangle\models \rho(\phi)$, it follows that $\CM_2,\langle\Delta^+\rangle\models \phi$, establishing that $\phi$ is satisfiable.

Regarding \ref{it:ssat-cdm2c}, the argument parallels that of \ref{it:sat-cdm2c}, with the distinction that we begin with standard models for \lscdm that function as similarity models.
\end{proof}

\begin{lemma}\label{lem:red-withc}
The following hold:
\begin{enumerate}
\item \label{it:red-cdm2c} The satisfiability problem of \lcdm is polynomial time reducible to that of \lc;
\item \label{it:red-scdm2sc} The satisfiability problem of \lscdm is polynomial time reducible to that of \lsc.
\end{enumerate}
\end{lemma}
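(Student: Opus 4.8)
The plan is to split the argument into correctness and efficiency, and to observe that only the efficiency half remains genuinely open. Correctness — namely that $\phi$ is (s-)satisfiable exactly when $\rho(\phi)$ is (s-)satisfiable, together with the fact that $\rho(\phi)$ is a legitimate $\langc$-formula over the expanded agent set $\ag^+$ — is already supplied by Proposition~\ref{lem:sat-cdm2c}. So I would invoke that proposition directly for both clauses and reduce the outstanding obligation to a single point: that the map $\rho$ (sending \langcdm-formulas over \ag to \langc-formulas over $\ag^+$) is computable in time polynomial in $|\phi|$. This is what upgrades the semantic equivalence into a bona fide polynomial-time many-one reduction, since polynomial-time computability also forces $|\rho(\phi)|$ to be polynomially bounded.

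For the efficiency bound, the quantities to control are the agents, groups, and subformulas that the rewriting refers to, and above all the size of the conjunction $\bigwedge_{\chi\in\mu(\phi)}\chi$ introduced in the first step of Definition~\ref{def:re1}. First I would record the elementary bounds: the number of distinct agents occurring in $\phi$, the number of distinct groups \emph{appearing in} $\phi$, and the number of subformulas of $\phi$ are each at most $|\phi|$, because each is witnessed by a symbol or a modal-operator occurrence in $\phi$. Then I would count $\mu(\phi)$: each of its five schemes is indexed by a short tuple of agents and appearing groups (constrained by $a\in G\subseteq H$, $I\cap J\neq\emptyset$, and so on) together with a subformula $\psi$, so the number of admissible tuples is $O(|\phi|^3)$, and each member of $\mu(\phi)$ has length $O(|\phi|)$ (note $|G|\le|\phi|$ for any group appearing in $\phi$). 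Hence $\bigwedge_{\chi\in\mu(\phi)}\chi$ has size $O(|\phi|^4)$, and since $ag^+(\phi)$ contains only $O(|\phi|)$ agents, the operator $C_{ag^+(\phi)}$ contributes a group of size $O(|\phi|)$. The relabelling in steps (2)--(4) replaces each modal-operator occurrence by a fresh $K$ or $C$ operator with no blow-up, so altogether $|\rho(\phi)| = O(|\phi|^4)$.

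It then remains to argue that these data are actually producible in polynomial time: scanning $\phi$ to collect its agents, appearing groups, and subformulas is at most quadratic; enumerating the admissible tuples and emitting the corresponding members of $\mu(\phi)$ is polynomial by the same count; and the relabelling in steps (2)--(4) is a single linear pass. Assembling these steps yields a polynomial-time algorithm computing $\rho$, which together with Proposition~\ref{lem:sat-cdm2c}(\ref{it:sat-cdm2c}) gives clause~\ref{it:red-cdm2c}. Clause~\ref{it:red-scdm2sc} is then obtained by the identical argument, reading ``s-satisfiable'' for ``satisfiable'' and citing Proposition~\ref{lem:sat-cdm2c}(\ref{it:ssat-cdm2c}) in place of part~(\ref{it:sat-cdm2c}).

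The main obstacle I anticipate is the combinatorial accounting for $\mu(\phi)$: read naively, the group quantifiers in schemes (3)--(5) look as though they range over arbitrary subsets of \ag, of which there are exponentially many, which would destroy the polynomial bound. The point to pin down carefully is that the definition restricts $G,H,I,J$ to groups \emph{appearing in} $\phi$, of which there are at most $|\phi|$; making this restriction explicit in the counting is exactly what keeps the product over tuples polynomial and rules out the spurious exponential blow-up.
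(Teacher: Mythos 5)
Your proposal is correct and follows essentially the same route as the paper: invoke Proposition~\ref{lem:sat-cdm2c} for the (s-)satisfiability equivalence and then verify that computing $\rho(\phi)$ is polynomial in $|\phi|$ by bounding the size of $\mu(\phi)$ via the number of subformulas, agents, and groups appearing in $\phi$. Your explicit $O(|\phi|^4)$ accounting and the remark that the group quantifiers range only over groups \emph{appearing in} $\phi$ just make precise what the paper's terser proof asserts.
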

\begin{proof}
Consider an $\langcdm_\ag$-formula $\phi$. As per Lemma \ref{lem:sat-cdm2c}, the $\langc_{\ag^+}$-formula $\rho(\phi)$ is constructed such that $\phi$ is (s-)satisfiable if and only if $\rho(\phi)$ is (s)-satisfiable. Therefore, it suffices to show that the $\rho$ procedure is polynomial in the size of $\phi$. Assume that the size of $\phi$ is $k$. The execution of the first step in computing $\rho(\phi)$ (as per Definition~\ref{def:re1}) is polynomial in $k$, as it merely involves listing the formulas in $\mu(\phi)$ and binding them with $\wedge$ and $C_{\ag^+(\phi)}$. The size of $\mu(\phi)$ is polynomial, given that: (1) the number of subformulas of $\phi$ is at most $k$, (2) the number of modal operators present in $\phi$ is at most $k$, and (3) the size of any group appearing in $\phi$ is at most $k$.
\end{proof}

\paragraph{\bfseries \underline{Reduction from $S5^C_n$ to \lsc}.}

Next, we propose a rule for converting any \langc-formula that is satisfiable in the logic $S5^C_n$ ($n \geq 1$) into an \langc-formula that is s-satisfiable.

\begin{definition}\label{def:rewrite-t}
Given an $\langc$-formula $\phi$, we define $\rho^t(\phi)$ as $\phi \wedge \big(\bigwedge_{\chi \in \mu^t(\phi)} \chi \big) \wedge C_{\ag} \bigwedge_{\chi \in \mu^t(\phi)}\chi$. Here, $\mu^t(\phi)$ includes the following types of formulas: (1) $K_a \psi \ra K_a K_a\psi$, (2) $K_a \psi \ra \psi$, and (3) $\neg K_a\bot$. In these, $a$ is any agent present in $\phi$ and $\psi$ is a subformula of $\phi$.
\qed
\end{definition}
It is evident from the definition above that if $\phi$ is an $\langc$-formula, so is $\rho^t (\phi)$.

\begin{propositionrep}\label{lem:sat-s52sc}
Given an \langc-formula $\phi$, the following hold:
\begin{enumerate}
\item \label{it:sat-s52sc-r} $\phi$ is satisfied at a state in an S5 relational model iff $\rho^t(\phi)$ is satisfied at a state in a symmetric relational model;
\item \label{it:sat-s52sc} $\phi$ is satisfiable in $S5^C_n$ (for $n \geq 1$) if and only if $\rho^t(\phi)$ is s-satisfiable.
\end{enumerate}
\end{propositionrep}
\begin{inlineproof}
Detailed proof can be found in the appendix.
\end{inlineproof}
\begin{proof}
\ref{it:sat-s52sc-r}. Suppose $\phi$ is satisfied at a state $s$ in an S5 relational model $N=(W,R,V)$. It can be readily confirmed that $N,s\Vdash\rho^t(\phi)$, indicating that $\rho^t(\phi)$ is satisfied at a state in a symmetric relational model.

Now suppose $\rho^t(\phi)$ is satisfied at a state $s$ in a symmetric relational model $N=(W,R,V)$. Without loss of generality, we can assume that all states in $W$ are reachable from $s$. Let us define a new model $N'=(W,R',V)$ where $R'$ is the reflexive and transitive closure of $R$. We can then demonstrate that for any subformula $\psi$ of $\phi$ and $t\in W$: $N,t\Vdash\psi$ if and only if $N',t\Vdash\psi$. The proof is done by induction on $\psi$, and below we show the cases with modalities.

For the case $\psi = K_a\chi$, suppose $N,t\not\Vdash K_a\chi$, then there is some $t'$ such that $R_a(t,t')$ and $N,t'\not\Vdash\chi$. Obviously we have $R'_a(t,t')$ and by induction hypothesis we have $N',t'\not\Vdash\chi$, and hence $N',t\not\Vdash K_a\chi$.
Suppose $N',t\not\Vdash K_a\chi$, then there exists $t'\in W$ such that $R'_a(t,t')$ and $N',t'\not\Vdash\chi$. Then there exists $t_0,\cdots,t_n$, $n\in\mbN$ such that $t_0=t$, $t_n=t'$ and $R_a(t_{i-1},t_i)$ for any $1\leq i\leq n$. By induction hypothesis we have $N,t'\not\Vdash\chi$. If $n=0$, then $t'=t$, thus we have $N,t\not\Vdash\chi$, and since $N,t\Vdash K_a\chi\ra \chi$, we have $N,t\not\Vdash K_a\chi$. If $n=1$, then we have $N,t\not\Vdash K_a\chi$ immediately. If $n>1$, then we have $N,t_{n-1}\not\Vdash K_a\chi$ and $N,t_{n-2}\not\Vdash K_aK_a\chi$. Since $N,t_{n-2}\Vdash K_a\chi\ra K_aK_a\chi$ we have $N,t_{n-2}\not\Vdash K_a\chi$. By proving step by step on $n\geq i\geq 0$, we can finally show that $N,t_0\not\Vdash K_a\chi$, i.e., $N,t\not\Vdash K_a\chi$.

For $\psi = C_G \chi$, since $N,s\Vdash \rho^t(\phi)$ and every state of $N$ is reachable from $s$, $N$ is symmetric and serial. Therefore, it can be shown that the transitive closure of $\bigcup_{a\in G}R'(a)$ is the same as the transitive closure of $\bigcup_{a\in G}R(a)$, and hence the case when $\psi = C_G\chi$ is also proven.

Finally, since $N,s\Vdash\rho^t(\phi)$, we have $N,s\Vdash\phi$. Thus $N',s\Vdash\phi$, and so $\phi$ is satisfied at a state in an S5 relational model.

\ref{it:sat-s52sc}. The satisfiability of $\phi$ in $S5^C_n$ is equivalent to $\rho^t(\phi)$ being satisfied at a state in a symmetric relational model (by \ref{it:sat-s52sc-r}), which is in turn equivalent to $\rho^t(\phi)$ being satisfied at a state in a similarity model (by Lemmas~\ref{lem:trans-r}--\ref{lem:tr-sim-model}), and thus equivalent to $\rho^t(\phi)$ being s-satisfiable.
\end{proof}

\begin{lemma}\label{lem:red-s52sc}
The satisfiability problem of $S5^C_n$ is polynomial time reducible to that of \lsc.
\end{lemma}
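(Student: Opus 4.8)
The plan is to obtain the lemma as an immediate consequence of Proposition~\ref{lem:sat-s52sc}, mirroring exactly the structure of the proof of Lemma~\ref{lem:red-withc}. Given an \langc-formula $\phi$, Proposition~\ref{lem:sat-s52sc}(\ref{it:sat-s52sc}) has already discharged the \emph{invariance} of the rewriting: $\phi$ is satisfiable in $S5^C_n$ (for $n \geq 1$) if and only if $\rho^t(\phi)$ is s-satisfiable. Hence the map $\phi \mapsto \rho^t(\phi)$ is a correct many-one reduction from the satisfiability problem of $S5^C_n$ to that of \lsc, and the sole remaining task is to verify that this map is computable in time polynomial in $|\phi|$.

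For the time bound I would inspect the definition of $\rho^t$ directly. Recall that $\rho^t(\phi) = \phi \wedge \big(\bigwedge_{\chi \in \mu^t(\phi)} \chi\big) \wedge C_{\ag}\bigwedge_{\chi \in \mu^t(\phi)}\chi$, where $\mu^t(\phi)$ collects the three schemes $K_a\psi \ra K_aK_a\psi$, $K_a\psi \ra \psi$ and $\neg K_a\bot$, ranging over agents $a$ occurring in $\phi$ and subformulas $\psi$ of $\phi$. The number of distinct subformulas of $\phi$ is at most $|\phi|$, and the number of agents occurring in $\phi$ is likewise at most $|\phi|$; therefore $|\mu^t(\phi)|$ is $O(|\phi|^2)$. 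Each conjunct has length $O(|\phi|)$, so $\bigwedge_{\chi \in \mu^t(\phi)}\chi$ has length $O(|\phi|^3)$, and prefixing it with $C_{\ag}$ adds only $O(|\ag|)$ symbols by the length measure fixed in the previous section. Consequently $|\rho^t(\phi)| = O(|\phi|^3)$, and the string can be produced by a straightforward pass that enumerates the subformulas and agents of $\phi$ and assembles the conjunction — clearly a polynomial-time procedure.

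I do not anticipate a genuine obstacle, since all the semantic content has been settled in Proposition~\ref{lem:sat-s52sc}; the only care required is the bookkeeping of the size of $\mu^t(\phi)$ and the observation that enumerating the subformulas and agents of $\phi$ is polynomial. One minor point worth stating explicitly is that, in contrast to the reduction of Lemma~\ref{lem:red-withc} which enlarges the agent set to $\ag^+$, here $\rho^t(\phi)$ remains an \langc-formula over the \emph{same} agent set \ag as $\phi$, so no separate accounting for freshly introduced agents is needed.
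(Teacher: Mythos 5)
Your proposal is correct and follows the same route as the paper, which simply observes that (with $|\ag|=n$) the map $\rho^t$ is the reduction and invokes Proposition~\ref{lem:sat-s52sc} for invariance. The paper leaves the polynomial bound implicit; your explicit count of $|\mu^t(\phi)| = O(|\phi|^2)$ and $|\rho^t(\phi)| = O(|\phi|^3)$ just fills in that routine bookkeeping.
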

\begin{proof}
Choose a set \ag of agents such that $|\ag|=n$. Then, the function $\rho^t$ can reduce the satisfiability problem of $S5^C_n$ to that of \lsc in polynomial time.
\end{proof}

\paragraph{\bfseries\underline{Reduction from \lsc to \lc}.}

\begin{definition}\label{def:rewrite-s}
Given an $\langc$-formula $\phi$, we define the rewriting rule $\rho^s$ such that $\rho^s(\phi)$ is as follows:
$$\textstyle \rho^s(\phi)=\phi \wedge \big(\bigwedge_{a\in\ag,\ \psi\in Sub(\phi)} (\neg K_a \neg K_a\psi \ra \psi) \big) \wedge C_{\ag} \bigwedge_{a\in\ag,\ \psi\in Sub(\phi)}(\neg K_a \neg K_a\psi \ra \psi),$$
where $Sub(\phi)$ stands for the set of all subformulas of $\phi$.
\qed
\end{definition}
It is evident that if $\phi$ is an \langc-formula, then $\rho^s(\phi)$ is also an \langc-formula.

\begin{propositionrep}\label{lem:sat-sc2c}
Given an $\langc$-formula $\phi$, the following statements are true:
\begin{enumerate}
\item \label{it:sat-sc2c-r} $\phi$ is satisfied at a state in a symmetric relational model iff $\rho^s(\phi)$ is satisfied at a state in a relational model;
\item \label{it:sat-sc2c} $\phi$ is s-satisfiable if and only if $\rho^s(\phi)$ is satisfiable.
\end{enumerate}
\end{propositionrep}
\begin{inlineproof}
The detailed proof can be found in the appendix.
\end{inlineproof}
\begin{proof}
\ref{it:sat-sc2c-r}. Let us first assume that $\phi$ is satisfied at a state $s$ in a symmetric relational model $N=(W,R,V)$. Given that $N$ is a symmetric relational model, we can easily show that $N,s\Vdash\rho^s(\phi)$. This implies that $\rho^s(\phi)$ is satisfied at a state in a relational model.

Next, let us assume that $\rho^s(\phi)$ is satisfied at a state $s$ in a relational model $N=(W,R,V)$. Without loss of generality, we can assume that all states in $W$ are reachable from $s$. We define a new model $N'=(W,R',V)$ where $R'$ is the symmetric closure of $R$. We claim that for any subformula $\psi$ of $\phi$ and $t \in W$: $N,t\Vdash\psi \Longleftrightarrow N',t\Vdash\psi$. This claim is proven by induction on $\psi$, where we only show the cases involving the common knowledge operators.

Consider the case $\psi=C_G\chi$ for some $G\subseteq\ag$.
\begin{itemize}
\item Suppose $N,t\not\Vdash C_G\chi$, then there exist $t_0,\dots,t_n\in W$ and $a_1,\dots,a_n$ ($n\in\mbN^+$) such that $t_0=t$, $N,t_n\not\Vdash\chi$ and for any $1\leq i\leq n$, $R_{a_i}(t_{i-1},t_i)$ and $a_i\in G$. Hence for any $1\leq i\leq n$, $R'_{a_i}(t_{i-1},t_i)$ and $N',t_n\not\Vdash\chi$ by the induction hypothesis. Thus $N',t\not\Vdash C_G\chi$.
\item Suppose $N',t\not\Vdash C_G\chi$, there exist $t_0,\dots,t_n\in W$ and $a_1,\dots,a_n$ ($n\in\mbN^+$) such that $t_0=t$, $N',t_n\not\Vdash\chi$ and for any $1\leq i\leq n$, $R'_{a_i}(t_{i-1},t_i)$ and $a_i\in G$. By the induction hypothesis we have $N,t_n\not\Vdash\chi$. Notice that for any $1\leq i\leq n$, either $R_{a_i}(t_{i-1},t_i)$ or $R_{a_i}(t_i,t_{i-1})$. In particular, either $R_{a_n}(t_{n-1},t_n)$ or $R_{a_n}(t_n,t_{n-1})$. For the former we get $N,t_{n-1}\not\Vdash K_{a_n}\chi$ immediately. When $R_{a_n}(t_n,t_{n-1})$, since $N,t_n\Vdash\neg K_{a_n}\neg K_{a_n}\chi\ra\chi$, we have $N,t_n\not\Vdash\neg K_{a_n}\neg K_{a_n}\chi$, and then $N,t_{n-1}\not\Vdash K_{a_n}\chi$. Thus we have $N,t_{n-1}\not\Vdash C_G\chi$. Since $C_G\chi$ is a subformula of $\phi$, when $n\geq 2$ we can similarly get $N,t_{n-2}\not\Vdash K_{a_{n_1}}C_G\chi$. And then we get $N,t_{n-2}\not\Vdash C_G\chi$. So by induction on $1\leq i\leq n$ we can prove that $N,t_{n-i}\not\Vdash C_G\chi$. In particular, $N,t\not\Vdash C_G\phi$.
\end{itemize}
Having proven the above, we know that since $N,s\Vdash\rho^s(\phi)$, it follows that $N,s\Vdash\phi$. By the claim, we have $N',s\Vdash\phi$, where $N'$ is a symmetric relational model. Thus, $\phi$ is satisfied at a state in a symmetric relational model.

\ref{it:sat-sc2c}. The statement ``$\phi$ is s-satisfiable'' is equivalent to``$\phi$ is satisfied at a state in a symmetric relational model'' (by Lemmas~\ref{lem:trans-r}--\ref{lem:tr-sim-model}), which is equivalent to ``$\rho^s(\phi)$ is satisfied at a state in a relational model'' (by \ref{it:sat-sc2c-r}), which in turn is equivalent to ``$\rho^s(\phi)$ is satisfiable'' (by Lemmas~\ref{lem:trans-r}--\ref{lem:trans-w}).
\end{proof}

\begin{lemma}\label{lem:red-sc2c}
The satisfiability problem of \lsc is polynomial time reducible to that of \lc.
\qed
\end{lemma}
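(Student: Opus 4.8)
The plan is to read the reduction directly off Proposition~\ref{lem:sat-sc2c}, whose second clause already performs all the semantic work: it states that an \langc-formula $\phi$ is s-satisfiable if and only if $\rho^s(\phi)$ is satisfiable. Thus the map $\phi \mapsto \rho^s(\phi)$ of Definition~\ref{def:rewrite-s} is a correct many-one reduction from the satisfiability problem of \lsc to that of \lc, playing exactly the role that $\rho$ and $\rho^t$ played in the proofs of Lemmas~\ref{lem:red-withc} and \ref{lem:red-s52sc}. It therefore only remains to verify that $\rho^s$ is computable in polynomial time.

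For the complexity bound I would inspect the three conjuncts of $\rho^s(\phi)$ in Definition~\ref{def:rewrite-s}. The first is $\phi$ itself. The second and third are built from the single big conjunction $\bigwedge_{a\in\ag,\ \psi\in Sub(\phi)} (\neg K_a \neg K_a\psi \ra \psi)$. The set $Sub(\phi)$ of subformulas has at most $|\phi|$ members, and \ag is a fixed finite set, so this conjunction contains at most $|\ag|\cdot|\phi|$ conjuncts. Each conjunct $\neg K_a \neg K_a\psi \ra \psi$ has length $O(|\psi|)=O(|\phi|)$, so the whole conjunction has length $O(|\ag|\cdot|\phi|^2)$; prefixing a copy of it with $C_{\ag}$ and conjoining the two copies changes this only by a constant factor. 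Hence $|\rho^s(\phi)|$ is polynomial (indeed quadratic) in $|\phi|$, and the assembly procedure — enumerating the subformulas of $\phi$, forming each conjunct, and concatenating — is plainly carried out in polynomial time.

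I do not expect a genuine obstacle here, because the invariance — in particular the nontrivial direction, which reinstates symmetry by passing to the symmetric closure of the accessibility relation — is already packaged inside Proposition~\ref{lem:sat-sc2c}. The only point deserving care is the status of $|\ag|$ in the input measure: since the agent set is finite and fixed it is treated as a constant, so the quadratic bound above is genuinely polynomial; were one instead to count \ag as part of the input, one would restrict the conjunction to the agents actually occurring in $\phi$ (of which there are at most $|\phi|$), which preserves the polynomial bound. With this accounted for, the lemma follows at once from Proposition~\ref{lem:sat-sc2c}(\ref{it:sat-sc2c}).
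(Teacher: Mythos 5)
Your proposal is correct and matches the paper's intent exactly: the paper states this lemma with no separate proof precisely because it follows immediately from Proposition~\ref{lem:sat-sc2c}(\ref{it:sat-sc2c}) together with the evident polynomial (quadratic) bound on computing $\rho^s(\phi)$, which is the same counting argument used for the analogous reductions in Lemmas~\ref{lem:red-withc} and \ref{lem:red-s52sc}. Your remark on the status of $|\ag|$ is a reasonable extra precaution but does not change the argument.
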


\paragraph{\bfseries\underline{Reduction from \lsdm to \ldm}.}
\begin{definition}\label{def:rewrite-m}
Given an $\langdm$-formula $\phi$, we define $\rho^m(\phi)$ as $\phi \wedge \bigwedge_{\chi \in \mu^s(\phi),\ 0\leq i\leq |\phi|} M_\ag^i \chi$.
Here, $\mu^m(\phi)$ comprises the following types of formulas:
(1) $\neg K_a \neg K_a\psi \ra \psi$, (2) $\neg D_G \neg D_G\psi \ra \psi$, and (3) $\neg M_G \neg M_G\psi \ra \psi$, where ``$a$'' denotes any agent appearing in $\phi$, ``$G$'' refers to any group of agents present in $\phi$, and $\psi$ is a subformula of $\phi$.
\qed
\end{definition}

It is evident from the definition above that if $\phi$ is an $\langdm$-formula, then $\rho^m(\phi)$ is also an $\langdm$-formula.

\begin{propositionrep}\label{lem:sat-sdm2dm}
Given an $\langdm$-formula $\phi$, $\phi$ is s-satisfiable if and only if $\rho^m(\phi)$ is satisfiable.
\end{propositionrep}
\begin{inlineproof}
Detailed proof can be found in the appendix.
\end{inlineproof}
\begin{proof}
Suppose $\phi$ is satisfied at a state $s$ in a similarity model $M=(W,\ab,E,C,\nu)$. Given that $M$ is a similarity model, it can be easily verified that $M,s\models\rho^m(\phi)$. This implies that $\rho^m(\phi)$ is satisfied at a state in a model.

Next, let us assume $\rho^m(\phi)$ is satisfied at a state $s$ in a model $M=(W,\ab,E,C,\nu)$. Without loss of generality, we can assume that all states in $W$ are $\ag$-reachable from $s$.
We define a new model $M'=(W,\ab,E',C,\nu)$ where for any $t,t'\in W$, $E'(t,t')=E(t,t')\cup E(t',t)$. $M'$ is symmetric. We assert that for any subformula $\psi$ of $\phi$ and $t\in W$ which is reachable from $s$ in at most $|\phi|-|\psi|$ steps: $W,t\models\psi \Longleftrightarrow W',t\models\psi$. We prove this claim by induction on $\psi$.

Consider the case $\psi=K_a\chi$ for some $a\in\ag$. The arguments for this case involve some reasoning based on the relationships between states and the properties of the models, and the result can be shown using induction.We direct the reader to the preceding proofs for a deeper understanding of such proof techniques. The cases for $\psi=D_G\chi$ and $\psi=M_G\chi$ follow a similar argumentative pattern and are thus omitted for brevity.

Having proven the above, and given that $M,s\models\rho^m(\phi)$, we can conclude that $M,s\models\phi$. Thus, we have $M',s\models\phi$, where $M'$ is a symmetric model. Therefore, $\phi$ is satisfied at a state in a symmetric model. By Lemma \ref{lem:tr-sim-model}, $\phi$ is also satisfied at a state in a similarity model, hence $\phi$ is s-satisfiable.
\end{proof}

\begin{lemma}\label{lem:red-sdm2dm}
The satisfiability problem of \lsdm is polynomial time reducible to that of \ldm.
\qed
\end{lemma}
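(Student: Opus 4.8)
The plan is to reduce the claim to the invariance already recorded in Proposition~\ref{lem:sat-sdm2dm} and then to verify that the rewriting $\rho^m$ of Definition~\ref{def:rewrite-m} is computable in polynomial time, following exactly the template of the proof of Lemma~\ref{lem:red-withc}. Proposition~\ref{lem:sat-sdm2dm} gives that an $\langdm$-formula $\phi$ is s-satisfiable if and only if $\rho^m(\phi)$ is satisfiable, and the remark following Definition~\ref{def:rewrite-m} observes that $\rho^m(\phi)$ is again an $\langdm$-formula. Hence $\phi \mapsto \rho^m(\phi)$ is already a correct many-one reduction from s-satisfiability of \lsdm to satisfiability of \ldm; the only thing left to establish is the polynomial-time bound on the construction.

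First I would fix $\phi$, set $k = |\phi|$, and bound the size of the auxiliary set $\mu^m(\phi)$. Its three families of conjuncts (of the forms $\neg K_a \neg K_a\psi \ra \psi$, $\neg D_G \neg D_G\psi \ra \psi$, and $\neg M_G \neg M_G\psi \ra \psi$) are each indexed by a subformula $\psi$ of $\phi$ together with an agent $a$ or a group $G$ occurring in $\phi$. Since $\phi$ has at most $k$ subformulas, at most $k$ agents, and at most $k$ groups occurring in it, the number of formulas in $\mu^m(\phi)$ is $O(k^2)$, and each such formula has length $O(k)$ because $|G| \le k$ and $|\psi| \le k$.

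Next I would account for the prefixes $M_\ag^i$ attached to each $\chi \in \mu^m(\phi)$ for $0 \le i \le k$. The resulting conjunction then has $(k+1)\cdot|\mu^m(\phi)| = O(k^3)$ conjuncts, and each conjunct $M_\ag^i \chi$ has length $|\chi| + i\cdot(2|\ag|+2) = O(k^2)$. Multiplying, $|\rho^m(\phi)|$ is polynomial in $k$, and the construction — enumerating subformulas, agents and groups, forming $\mu^m(\phi)$, and iterating the $M_\ag$-prefixing up to depth $k$ — is manifestly a polynomial-time procedure. Combining this bound with the equivalence of Proposition~\ref{lem:sat-sdm2dm} delivers the lemma.

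The one genuinely delicate point, which I would flag as the main obstacle, is keeping the length bound on the prefixed conjuncts honest: the exponent $i$ ranges up to $|\phi|$ and every copy of $M_\ag$ carries the full agent set $\ag$ as its subscript, which contributes $2|\ag|$ to the length measure. To avoid a super-polynomial blow-up one must argue that $|\ag|$ is controlled by the input size — i.e.\ restrict attention to the agents actually relevant to $\phi$, so that $|\ag| \le k$ — and check that this restriction does not disturb the s-satisfiability equivalence supplied by Proposition~\ref{lem:sat-sdm2dm}. Everything else is the same routine bookkeeping already carried out for $\rho$ in Lemma~\ref{lem:red-withc}.
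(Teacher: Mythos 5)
Your proposal is correct and matches what the paper intends: the lemma is stated with no explicit proof precisely because it is the immediate combination of Proposition~\ref{lem:sat-sdm2dm} (s-satisfiability of $\phi$ iff satisfiability of $\rho^m(\phi)$) with the same polynomial-size bookkeeping already carried out for $\rho$ in Lemma~\ref{lem:red-withc}. Your flagged worry about $|\ag|$ inflating the prefixes $M_\ag^i$ is harmless in the paper's setting, since \ag is fixed as a finite set from the outset, so $|\ag|$ is a constant and the bound is polynomial in $|\phi|$ without any further restriction.
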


\paragraph{\bfseries\underline{Reduction from \ldm to \ld}.}
\begin{definition}\label{def:rewrite-d}
Given an $\langdm_\ag$-formula $\phi$, we define $\tau(\phi)$ as an $\langd_{\ag^+\cup\{o\}}$-formula obtained by sequentially performing the following steps, where ``$o$'' is a new agent not present in $\ag^+$:
\begin{enumerate}
\item Replace $\phi$ by $\phi \wedge \neg K_o^{|\phi|}\bot \wedge \bigwedge_{\chi \in \mu(\phi),\ 0\leq i\leq |\phi|} K_o^i \chi$. Here, $\mu(\phi)$ is the collection of the following types of formulas:
(1) $M_G \psi \ra K_a\psi$, (2) $K_a\psi \ra D_G\psi$, (3) $M_H\psi \ra M_G\psi$, (4) $D_G \psi \ra D_H\psi$, and (5) $M_I\psi \ra D_J\psi$.
In these formulas, ``$a$'' is any agent appearing in $\phi$, $G, H, I, J$ are any groups of agents present in $\phi$ such that $a \in G \subseteq H$ and $I \cap J \neq \emptyset$, and $\psi$ is a subformula of $\phi$;

\item For any group $G$, replace all instances of $D_G$ by $D_{\{o,f(D_G)\}}$, and $M_G$ by $D_{\{o,f(M_G)\}}$;

\item For any agent $a$, replace all instances of $K_a$ by $D_{\{o,f(K_a)\}}$.
\end{enumerate}
The rewriting rule $\tau'$ operates similarly to $\tau$, except it omits the first step.
\qed
\end{definition}

\begin{propositionrep}\label{lem:sat-dm2d}
Given an $\langdm_\ag$-formula $\phi$, the following hold:
\begin{enumerate}
\item \label{it:sat-langdm} $\tau(\phi)$ and $\tau'(\phi)$ are $\langd_{\ag^+\cup\{o\}}$-formulas;
\item \label{it:sat-dm2d} $\phi$ is satisfiable if and only if $\tau(\phi)$ is satisfiable.
\end{enumerate}
\end{propositionrep}
\begin{inlineproof}
Detailed proof can be found in the appendix.
\end{inlineproof}
\begin{proof}
\ref{it:sat-langdm}. As per the definition of rewriting rules, all the mutual modalities are replaced by distributed modalities with agents in $\ag^+\cup\{o\}$. Thus, $\tau(\phi)$ and $\tau'(\phi)$ are formulas of $\langd_{\ag^+\cup\{o\}}$.

\ref{it:sat-dm2d}. 
\emph{From left to right}. 
Assume $\phi$ is satisfiable, and thus is \KDM-consistent. By the proof of the completeness theorem (Theorem~\ref{thm:completeness5}), it is satisfied at a state $\langle\Delta^+\rangle$ in the standard model $\CM=(\CW,\CA,\CE,\CC,\CV)$ for \ldm, where $\Delta^+$ is a maximal $\KDM$-consistent set of formulas and $\phi\in\Delta^+$.

Next, consider a model $\CM_1=(\CW_1,\CA,\CE_1,\CC,\CV_1)$ where $\CW_1$ is the set of states $I$-reachable from $\langle\Delta^+\rangle$ where $I$ is the set of all agents present in $\phi$, and $\CE_1$ and $\CV_1$ are the restrictions of $\CE$ and $\CV$ on $\CW_1$, respectively. Since any group $G$ present in $\phi$ is a subset of $I$, by induction for any subformula $\psi$ of $\phi$ we can easily verify that for any $s\in\CW_1$: $\CM_1,s\models\psi \Longleftrightarrow \CM,s\models\psi$. In particular, $\CM_1,\langle\Delta^+\rangle\models\phi$. Moreover, for any $s\in\CW_1$ and $\psi\in\mu(\phi)$, we have $\CM_1,s\models\psi$.

We then consider another model $\CM_2=(\CW_1,\CA,\CE_1,\CC_2,\CV_1)$, where $\CC_2:(\ag^+\cup\{o\}) \to \CA$ is defined as follows:
$$\CC_2(a)  = \left\{\begin{array}{ll}
	\CC(a), & \text{if $a\in\ag$},\\
	\CC(b), & \text{if $a$ is $f(K_b)$ for some $b\in\ag$},\\
	\bigcup_{a\in G} \CC(a), & \text{if $a$ is $f(D_G)$ for some group $G$ on $\ag$},\\
	\bigcap_{a\in G} \CC(a), & \text{if $a$ is $f(M_G)$ for some group $G$ on $\ag$},\\
	\emptyset, & \text{if $a$ is $o$}.
\end{array}\right.$$
This model serves to convert the initial model into a form that allows us to verify the formula $\phi$.

It can be verified that for any $\langdm_\ag$-formula $\psi$ and any $s\in\CW_1$, $\CM_1,s \models \psi \Longleftrightarrow \CM_2,s\models \psi \Longleftrightarrow \CM_2,s\models \tau'(\psi)$. Since $\CC_2(o)=\emptyset$, we have $\CM_2,\langle\Delta^+\rangle\models \phi \wedge \neg K_o^{|\phi|}\bot \wedge \bigwedge_{\chi \in \mu(\phi), 0\leq i\leq |\phi|} K_o^i \chi$, and hence $\CM_2,\langle\Delta^+\rangle\models\tau(\phi)$.

\emph{From right to left.} Suppose $\tau(\phi)$ is satisfiable, thus it is \KD-consistent, then by the proof of the completeness theorem (Theorem \ref{thm:completeness5}), it is satisfied at a state $\langle\Delta^+\rangle$ in the standard model $\CM=(\CW,\CA,\CE,\CC,\CV)$ for \ld, where $\Delta^+$ is a maximal $\KC$-consistent set of formulas and $\tau(\phi) \in \Delta^+$.

Again, we consider a model $\CM_1=(\CW_1,\CA,\CE_1,\CC,\CV_1)$, where $\CW_1$ is the set of states which are $o$-reachable from $\langle\Delta^+\rangle$, and $\CE_1$ and $\CV_1$ are the restrictions of $\CE$ and $\CV$ on $\CW_1$, respectively. For any subformula $\psi$ of $\tau'(\phi)$, and any $s\in\CW_1$, $\CM_1,s\models\psi \Longleftrightarrow \CM,s\models\psi$. In particular, $\CM_1,\langle\Delta^+\rangle\models\tau'(\phi)$. Similarly we have $\CM_1,\langle\Delta^+\rangle\models\neg K_o^{|\phi|}\bot$. Moreover, since $\CM_1,s\models\chi$ for any $\chi\in\mu(\phi)$ and any $s\in\CW_1$ which is $o$-reachable from $\langle\Delta^+\rangle$ in at most $|\phi|$ steps, we have $\CM_1,\langle\Delta^+\rangle\models\tau(\phi)$.

Next, we consider the a model $\CM_2=(\CW_1,\CA,\CE_2,\CC,\CV_1)$, where $\CE_2:(\CW_1)^2 \to \CA$ is given as follows:
		$$\CE_2(s,t)  = \left\{\begin{array}{ll}
			\CC(a), & \text{if $t$ is $s$ extended with $\langle (\{o,f(K_a)\},d), \Psi \rangle$ and $\{\chi\mid D_{\{o,f(K_a)\}}\chi \in tail(s)\}  \subseteq \Psi$},\\
			\bigcup_{a\in G} \CC(a), & \text{if $t$ is $s$ extended with $\langle (\{o,f(D_G)\},d), \Psi \rangle$ and $\{\chi\mid D_{\{o,f(D_G)\}}\chi \in tail(s)\}  \subseteq \Psi$},\\
			\bigcap_{a\in G} \CC(a), & \text{if $t$ is $s$ extended with $\langle (\{o,f(M_G)\},d), \Psi \rangle$ and $\{\chi\mid D_{\{o,f(M_G)\}}\chi \in tail(s)\}  \subseteq \Psi$},\\
			\emptyset, & \text{otherwise}.
		\end{array}\right.$$
Similar to the previous case, we can show that for any $\langdm_{\ag}$-formula $\psi$ and any $s\in\CW_1$, if $\psi$ is a subformula of $\phi$ and $s$ is reachable from $\langle \Delta^+ \rangle$ in at most $|\phi| - |\psi|$ steps, then $\CM_1,s\models \tau'(\psi) \Longleftrightarrow \CM_2,s\models \psi$. Hence, if $\CM_1,s\models \tau'(\phi)$, then $\CM_2,s\models \phi$, which implies that $\phi$ is also satisfiable.
\end{proof}

\begin{lemma}\label{lem:red-dm2d}
The satisfiability problem of \ldm is polynomial time reducible to that of \ld.
\qed
\end{lemma}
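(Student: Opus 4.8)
The plan is to piggy-back on Proposition~\ref{lem:sat-dm2d}, which already supplies both substantive ingredients: clause~\ref{it:sat-langdm} tells us that $\tau(\phi)$ is an $\langd_{\ag^+\cup\{o\}}$-formula, and clause~\ref{it:sat-dm2d} tells us that $\phi$ is satisfiable iff $\tau(\phi)$ is satisfiable. Hence the map $\phi \mapsto \tau(\phi)$ is already a correct many-one reduction from the satisfiability problem of \ldm to that of \ld; the only thing left to verify for the lemma is that $\tau$ is computable in polynomial time, i.e.\ that $\tau(\phi)$ has size polynomial in $|\phi|$ and is produced by a polynomial-time procedure. First I would fix $k = |\phi|$ and then walk through the three stages of Definition~\ref{def:rewrite-d}, bounding each in turn, exactly in the spirit of the polynomial-time argument already given for Lemma~\ref{lem:red-withc}.

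For the size bound I would argue as follows. In Step~1, $\phi$ is conjoined with $\neg K_o^{k}\bot$ (a formula of length $O(k)$) and with $\bigwedge_{\chi \in \mu(\phi),\ 0\leq i\leq k} K_o^i \chi$. The set $\mu(\phi)$ consists of implications built from a subformula $\psi$ of $\phi$ and groups occurring in $\phi$; since $\phi$ has at most $k$ subformulas, at most $k$ modal operators, and every group inside $\phi$ has size at most $k$, the cardinality of $\mu(\phi)$ is polynomial in $k$ and each member has length $O(k)$. Prefixing each such $\chi$ by $K_o^i$ for $0\le i\le k$ multiplies the number of conjuncts by only $k+1$ and adds at most $2k$ symbols per conjunct, so the whole conjunction has polynomial length and can be listed in polynomial time. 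Steps~2 and~3 are uniform symbol substitutions: every occurrence of $D_G$, $M_G$ or $K_a$ is replaced by a distributed-knowledge operator over a two-element group ($D_{\{o,f(D_G)\}}$, $D_{\{o,f(M_G)\}}$, or $D_{\{o,f(K_a)\}}$), which increases each modal prefix by only a constant factor and is carried out in a single linear pass over the formula. Summing these contributions gives $|\tau(\phi)| = \mathrm{poly}(k)$ together with a polynomial-time generation procedure, which is precisely what the lemma asserts.

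I do not expect any genuine obstacle here, because all of the delicate model-theoretic work — the auxiliary models $\CM_1$ and $\CM_2$, the redefined capability function $\CC_2$, and the inductive truth-transfer arguments — has already been discharged in the proof of Proposition~\ref{lem:sat-dm2d}. The only point warranting a little care is the double-indexed conjunction in Step~1: I would confirm that ranging over both $\chi \in \mu(\phi)$ and $0\le i\le k$ yields only $|\mu(\phi)|\cdot(k+1)$ conjuncts rather than an exponential blow-up, and that each iterated box $K_o^i$ is written out as a prefix of length $O(i)$ rather than expanded semantically. Once these bookkeeping observations are in place, the lemma follows immediately, in direct parallel with the polynomial-time reductions already established in this section.
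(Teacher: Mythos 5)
Your proposal is correct and matches the paper's intended argument exactly: the paper states this lemma with no written proof precisely because it is the combination of Proposition~\ref{lem:sat-dm2d} (which supplies the satisfiability equivalence) with a polynomial-size/time bound on computing $\tau$, argued in the same way as Lemma~\ref{lem:red-withc}. Your bookkeeping of $|\mu(\phi)|$ and of the double-indexed conjunction $\bigwedge_{\chi\in\mu(\phi),\ 0\le i\le|\phi|}K_o^i\chi$ is exactly the verification the paper leaves implicit.
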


So we finally obtain the following theorem.

\begin{theorem}\label{thm:com-complex}
The satisfiability problem of the eight logics without common knowledge are all PSPACE complete, and the satisfiability problem of the eight logics with common knowledge are all EXPTIME complete.
\end{theorem}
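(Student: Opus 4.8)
The plan is to assemble the theorem entirely from the five reduction lemmas already proved (Lemmas~\ref{lem:red-withc}--\ref{lem:red-dm2d}) together with the classical complexity results imported from \cite{FHMV1995}, exploiting that both PSPACE and EXPTIME are closed under polynomial-time many-one reductions. The connective tissue is the semantic correspondence of Section~\ref{sec:correspondence}: via Lemmas~\ref{lem:trans-r} and \ref{lem:trans-w} the satisfiability problems of the ``classical-shaped'' logics \l, \ld and \lc coincide with those of $\mathrm{K}_n$, $\mathrm{K}_n^D$ and $\mathrm{K}_n^C$, whose complexities ($\mathrm{K}_n$ and $\mathrm{K}_n^D$ in PSPACE, $\mathrm{K}_n^C$ in EXPTIME) are exactly the results I would quote. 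For each of the two families I would then argue membership and hardness separately, propagating membership \emph{downward} along sublanguage inclusions and hardness \emph{upward}.

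For the eight logics without common knowledge, membership collapses the whole family onto \ld. By Lemma~\ref{lem:red-sdm2dm} the similarity apex \lsdm reduces in polynomial time to \ldm, and by Lemma~\ref{lem:red-dm2d} \ldm reduces to \ld; since \l, \ld and \lm are sublanguages of \ldm while \ls, \lsd and \lsm are sublanguages of \lsdm, each of their satisfiability problems is a subproblem of one already reduced to \ld. As \ld shares the satisfiability problem of $\mathrm{K}_n^D\in{}$PSPACE, closure of PSPACE under polynomial-time reductions places all eight logics in PSPACE. For hardness I would note that \l has the satisfiability problem of $\mathrm{K}_n$, which is PSPACE-hard and is a sublanguage of every non-similarity logic here; on the similarity side, $\mathrm{KB}_1$ (PSPACE-hard, folklore) is a subproblem of \ls, which is a sublanguage of \lsd, \lsm and \lsdm. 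Hence all eight are PSPACE-complete.

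For the eight logics with common knowledge the pattern is identical, one level up. Membership collapses the family onto \lc: Lemma~\ref{lem:red-withc} reduces \lcdm to \lc and \lscdm to \lsc, and Lemma~\ref{lem:red-sc2c} reduces \lsc to \lc, so that \lc, \lcd, \lcm (sublanguages of \lcdm) and \lsc, \lscd, \lscm (sublanguages of \lscdm) all reduce to \lc. Since \lc shares the satisfiability problem of $\mathrm{K}_n^C\in{}$EXPTIME, all eight lie in EXPTIME. For hardness, $\mathrm{K}_1^C$ is EXPTIME-hard and reduces (linearly) to \lc, giving hardness for every non-similarity logic containing \lc; and Lemma~\ref{lem:red-s52sc} reduces the EXPTIME-hard logic $\mathrm{S5}_n^C$ to \lsc, giving hardness for every similarity logic containing \lsc. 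Thus all eight are EXPTIME-complete.

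The genuine mathematical content sits in the reduction lemmas and the cited classical results, so the theorem itself is a bookkeeping argument, and I expect two points to demand the most care. First, one must verify that the sublanguage inclusions genuinely yield \emph{subproblem} relationships for \emph{both} the satisfiability and the s-satisfiability versions, so that membership descends and hardness ascends without crossing the similarity/non-similarity boundary. Second, one must confirm that the chains of reductions compose—each reduction is polynomial, their composition remains polynomial, and the target of each chain ( \ld, respectively \lc ) is itself certified to lie in the claimed class. The likeliest source of slips is identifying the correct classical seed for the \emph{similarity} families, namely $\mathrm{KB}_1$ for PSPACE-hardness and $\mathrm{S5}_n^C$ for EXPTIME-hardness rather than the plain $\mathrm{K}$ systems; no new technical difficulty arises beyond this bookkeeping.
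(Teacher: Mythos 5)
Your proposal is correct and follows essentially the same route as the paper: import the classical complexity results for $\mathrm{K}_n$, $\mathrm{K}_n^D$, $\mathrm{KB}_1$, $\mathrm{K}_n^C$ and $\mathrm{S5}_n^C$, transfer them to \l, \ld, \ls, \lc and \lsc via the correspondence/translation results, propagate hardness upward along sublanguage inclusions, and obtain membership by collapsing each family onto \ld (via Lemmas~\ref{lem:red-sdm2dm} and \ref{lem:red-dm2d}) respectively onto \lc (via Lemmas~\ref{lem:red-withc} and \ref{lem:red-sc2c}). The only detail the paper adds that you gloss over is the remark that the \lc-to-$\mathrm{K}_n^C$ direction is slightly delicate (rewriting $E_G$ in terms of individual knowledge can blow up exponentially, so one only gets the EXPTIME upper bound from it, with hardness supplied separately by the linear reduction from $\mathrm{K}_1^C$), which does not affect your conclusion.
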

\begin{proof}
Our examination of logics without common knowledge refers to the proof structure outlined in Figure~\ref{fig:sat-nonc}. As the satisfiability problem of K$_n$ ($n \geq 1$, indicating either poly- or multi-modal) is PSPACE complete \cite{HM1992}, so is the satisfiability problem of EL since they share the same axiom system, which is sound and complete for both logics. (Since the complexity of the satisfiability and validity problem remains the same for these logic, the solution to the K$_n$-satisfiability problem can be treated as the solution to the \l-satisfiability problem.)

Similarly, given that the satisfiability problem of K$_n^D$ is PSPACE complete (as mentioned in \cite{HM1992}, though only as a claim, leaving the details to the reader), the satisfiability problem of \ld is also PSPACE complete. Furthermore, as the satisfiability problem of (mono-modal) KB$_1$ is PSPACE complete \cite{CL1994}, the satisfiability problem of \ls is PSPACE hard (it has an axiomatic system that is a multi-modal generalization of that for KB$_1$). As either \l or \ls is a sublogic of the others, namely \lm, \ldm, \lsd, \lsm and \lsdm, all of them have a PSPACE lower bound.

By Lemmas~\ref{lem:red-sdm2dm} and \ref{lem:red-dm2d}, there exist polynomial-time algorithms that reduce the satisfiability problem of \ldm and \lsdm to that of \ld. This implies that the problem is solvable in PSPACE (first execute the reduction algorithm, then call the algorithm for solving \ld which is PSPACE complete). Thus, we conclude that the complexity for the satisfiability problem of all the logics without common knowledge is PSPACE complete.

Turning our attention to the logics with common knowledge, we refer to Figure~\ref{fig:sat-withc} for the proof structure. Similar arguments can be made by noting these known or newly proved results:

(1) The satisfiability problem of K$_n^C$ ($n \geq 1$) and S5$_n^C$ are EXPTIME complete \cite{HM1992}.

(2) The satisfiability problem of \lc can be reduced to that of K$_n^C$ (they share almost the same axiomatic system, but there is a subtlety in that the $E_G$-modality for everyone's knowledge is used as an initial operator, and $E_G$ can be rewritten in terms of individual knowledge only in exponential length, so this reduction only gives us an EXPTIME upper bound).

(3) The satisfiability problem of K$_1^C$ can be reduced to that of \lc in linear time ($E_{\{a\}}$ can be rewritten by $K_a$ in linear time, giving us the EXPTIME hardness).

(4) The satisfiability problem of S5$_n^C$ can be reduced to that of \lsc in polynomial time (Lemma~\ref{lem:red-s52sc}).

(5) The satisfiability problem of \lcdm can be reduced to that of \lc in polynomial time; In a similar vein, the satisfiability problem of \lscdm can be reduced to that of \lsc in polynomial time. (see Lemma~\ref{lem:red-withc}.) 
\end{proof}

\section{Conclusion}

We examined epistemic logics with various types of group knowledge, interpreted over the class of (similarity) models. These models are straightforward extensions of the models used in classical epistemic logic. We delved into their axiomatization and computational complexity results, finding these logics to be notably intriguing.

Although the (similarity) models are generalizations of classical relational models, the logics that exclude mutual knowledge, when interpreted over them, are not entirely new. They parallel classical epistemic logics with common and/or distributed knowledge (based on K and KB frameworks instead of S5, and those based on KB have not been extensively covered in the literature). This could suggest that the classical epistemic languages with group knowledge may lack the expressive power to detail the models fully. This is somewhat hinted at by the fact that the scenario changes when we incorporate mutual knowledge into the language.

Mutual knowledge is a concept of group knowledge that originated from the generalized models. We anticipate that these generalized models can offer more diversity and potential avenues for epistemic logic studies. Simultaneously, the new logics display excellent complexity results for both the model checking problems and satisfiability problems, affirming the notion that they are simple and natural extensions of the classical epistemic logics.

The framework of our logics presents diverse possibilities for characterizing the concept of \emph{knowability}. Apart from interpreting knowability as known after a single announcement \cite{BBvDHHdL2008}, a group announcement \cite{ABDS2010}, or after a group resolves their knowledge \cite{AW2017rdk}, it is now conceivable to perceive knowability as known after an agent acquires certain skills (epistemic abilities) from some source or from a given group. Our framework also enables us to easily characterize \emph{forgetability} or \emph{degeneration} through changes in epistemic abilities, a process that is not as straightforward in classical epistemic logic.

Looking ahead, we aim to explore more sophisticated conditions on the similarity relation, such as those introduced in \cite{CMZ2009}. It would also be valuable to compare our framework with existing ones that use the same style of models, as presented in \cite{NT2015,DLW2021}. This comparative analysis could yield insightful observations and potentially pave the way for further advancements in the field of knowledge representation and reasoning.

\bibliographystyle{plain}
\bibliography{ELW}

\end{document}